\theoremstyle{plain}
\newtheorem{exam}{Example}[section]
\newtheorem{remark}[exam]{Remark}
\newtheorem{theo}{Theorem}
\newtheorem{lem}{Lemma}
\newtheorem{mdef}{Definition}
\def\gga{\gamma}
\def\gth{\theta}
\def\gO{\Omega}
\def\gs{\sigma}
\def\gl{\lambda}
\def\n{\noindent}
\def\b0{{\bf 0}}
\def\1{{\bf 1}}
\def\wh{\widehat}
\def\widebreve{\mathpalette\wide@breve}
\def\wide@breve#1#2{\sbox\z@{$#1#2$}%
     \mathop{\vbox{\m@th\ialign{##\crcr
\kern0.08em\brevefill#1{0.99\wd\z@}\crcr\noalign{\nointerlineskip}%
                    $\hss#1#2\hss$\crcr}}}\limits}
\def\brevefill#1#2{$\m@th\sbox\tw@{$#1($}%
  \hss\resizebox{#2}{\wd\tw@}{\rotatebox[origin=c]{90}{\upshape(}}\hss$}
\begin{document}

\begin{frontmatter}




\title{Time-reassigned synchrosqueezing frequency-domain chirplet transform for multicomponent signals with intersecting group delay curves}
\author[university1]{Shuixin Li}
\ead{lishuixin@zjnu.edu.cn}
\author[university1]{Jiecheng Chen}
\ead{jcchen@zjnu.cn}
\author[university1]{Qingtang Jiang}
\ead{jiangq@zjnu.edu.cn}
\author[university2]{Lin Li}
\ead{lilin@xidian.edu.cn}

\address[university1]{School of Mathematical Sciences, Zhejiang Normal University, Jinhua 321004, China}
\address[university2]{School of Electronic Engineering, Xidian University, Xi'an 710071, China}

\begin{abstract}

To analyze signals with rapid frequency variations or transient components, the time-reassigned synchrosqueezing transform (TSST) and its variants have been recently proposed. 
Unlike the traditional synchrosqueezing transform, TSST squeezes the time-frequency (TF) coefficients along the group delay (GD) trajectories rather than the instantaneous frequency trajectories.
Although TSST methods perform well in analyzing transient signals, they are fundamentally limited in processing multicomponent signals with intersecting GD curves.
This limitation compromises the accuracy of both feature extraction and signal component recovery, thereby significantly reducing the interpretability of time-frequency representations (TFRs). This is particularly problematic in broadband signal processing systems, where the linearity of the phase response is critical and precise measurement of group delay dispersion (GDD) is essential.

Motivated by the superior capability of frequency-domain signal modeling in characterizing rapidly frequency-varying signals, this paper proposes a novel three-dimensional time-frequency-group delay dispersion (TF-GDD) representation based on the frequency-domain chirplet transform. A subsequent time-reassigned synchrosqueezing frequency-domain chirplet transform (TSFCT) is introduced to achieve a sharper TF-GDD distribution and more accurate GD estimation. For mode retrieval, a novel frequency-domain group signal separation operation (FGSSO) is proposed.
The theoretical contributions include a derivation of the approximation error for the GD and GDD reference functions and an establishment of the error bounds for FGSSO-based mode retrieval. Experimental results demonstrate that the proposed TSFCT and FGSSO effectively estimate GDs and retrieve modes--even for modes with intersecting GD trajectories.

\end{abstract}

\begin{keyword}
    {\it frequency-domain chirplet transform; crossing group delay curves;  
    time-reassigned synchrosqueezing frequency-domain chirplet transform;  
    mode retrieval.}     
\end{keyword}

\end{frontmatter}

\section{Introduction}
Natural-world signals are often composed of multiple, superimposed non-stationary components. 
In these cases, time-frequency analysis (TFA) is an indispensable tool, as it  characterizes both temporal and spectral dynamics, facilitating the decomposition of such multicomponent signals.
Conventional linear TFA techniques, particularly the short-time Fourier transform (STFT) \cite{stankovic2013time} and continuous wavelet transform (CWT) \cite{daubechies1992ten,mallat1999wavelet}, are fundamentally constrained by the Heisenberg uncertainty principle \cite{cohen1995time}, 
which enforces an inescapable trade-off between temporal and spectral resolution, precluding their simultaneous optimization.
While quadratic representations like the Wigner-Ville distribution  \cite{hlawatsch1992linear}  offer theoretically superior resolution, they inevitably generate  cross-terms when analyzing multicomponent signals, severely limiting their practical utility in real-world applications.

Enhancing the time-frequency resolution of TFA methodologies to precisely unveil latent signal features remains a critical requirement in practical signal processing applications.
The reassignment method (RM) \cite{auger1995improving} was developed to sharpen time-frequency representations (TFRs) and does improve energy concentration; however, it cannot reconstruct the original signal.
To address this limitation, the synchrosqueezing transform (SST) was proposed 
in \cite{daubechies1996nonlinear} and further developed in \cite{daubechies2011synchrosqueezed}.
By concentrating time-frequency coefficients along instantaneous frequency (IF) trajectories, SST achieves both enhanced readability and mode retrieval. 
This dual capability makes it a powerful tool for analyzing non-stationary signals and has inspired extensive research into high-resolution TFR techniques, including second-order \cite{oberlin2015second, oberlin2017second, behera2018theoretical} and high-order SST variants \cite{pham2017high}, the synchroextracting transform (SET) \cite{yu2017synchroextracting}, the multisynchrosqueezing transform (MSST) \cite{yu2018multisynchrosqueezing}, 
and the adaptive SST \cite{sheu2017entropy,berrian2017adaptive,li2020adaptive,li2020adaptivestft}. 

Additionally, the chirplet transform (CT) 
and wavelet-chirplet transform (WCT) have been developed for processing multicomponent signals with crossing IF trajectories  \cite{chui2021time,li2022chirplet,chui2023analysis}. 
These transforms generalize the conventional time-frequency and time-scale frameworks to three dimensions, resulting in the time-frequency-chirprate (TFC) and time-scale-chirprate (TSC) domains, respectively.
For brevity, the term  ``chirprate'' is used throughout this paper in place of  ``chirp rate''.
Under certain conditions, these approaches enable effective separation of components in non-stationary multicomponent signals--even in the cases with crossover IFs.  
When synchrosqueezing techniques are applied within these three-dimensional spaces, they yield significantly sharper TFC representations and substantially improve the separation of signal components \cite{zhu2020frequency,chen2023disentangling,chen2024multiple,chen2024composite,jiang2025synchrosqueezed}. 

While these methods perform well for non-stationary signals with smoothly varying IFs, they encounter significant challenges when analyzing signals that exhibit rapid frequency variations or transient components.
This limitation has motivated the development of the time-reassigned synchrosqueezing transform (TSST)  \cite{he2019time,yu2020time,li2022theoretical}. 
Unlike traditional SST, TSST squeezes the time-frequency coefficients along the group delay (GD) trajectories instead of the IF trajectories.
However, TSST still suffers from energy dispersion for strongly frequency-varying signals. Subsequent refinements have included second-order TSST \cite{fourer2019second,he2020gaussian},  
the transient-extracting transform (TET) \cite{yu2019concentrated}, the Newton time-extracting wavelet transform \cite{li2023newton}, second-order TET \cite{yu2021second,he2019second}, the generalized transient squeezing transform \cite{bao2022generalized} and the generalized TET \cite{bao2021generalized}. 
While existing TSST methods demonstrate strong performance in transient signal analysis, they face a fundamental limitation in resolving intersecting GD curves. This compromises both the accuracy of feature extraction and component recovery, thereby significantly degrading the interpretability of TFRs.
Such constraints are particularly detrimental in broadband signal processing systems, where phase linearity is critical for performance and precise group delay dispersion (GDD) measurement is essential.

The frequency-domain chirplet transform (FCT) was introduced in \cite{yang2014frequency} to characterize frequency-varying GD components.
This paper extends the FCT framework into a three-dimensional time-frequency-group delay dispersion (TF-GDD) space, enabling the separation of multicomponent signals—even in challenging scenarios with intersecting GD curves.
Furthermore, drawing inspiration from the three-dimensional synchrosqueezing transform of CT, we construct the GD and GDD reference functions (also called the reassignment operators). This foundation enables the development of the three-dimensional time-reassigned synchrosqueezing frequency-domain chirplet transform (TSFCT).
The TSFCT provides a more concentrated TF-GDD representation and substantially strengthens feature extraction capabilities, thereby boosting the accuracy of transient signal mode recovery.
Additionally, a theorem is derived on the approximation error between the GD/GDD reference functions and their true signal counterparts.
To retrieve transient signal modes that may exhibit intersecting group delay (GD) curves, this paper proposes extending the signal separation operator (SSO) framework (originally rooted in STFT and CWT \cite{chui2016signal,li2022direct,chui2021analysis, chui2021signal}) to the three-dimensional time-frequency-group delay dispersion (TF-GDD) space.
This extension facilitates the effective extraction of modes from multicomponent transient signals, and a theorem on the error bounds for mode retrieval using FGSSO is established.

The remainder of this paper is organized as follows. In Section 2 
we derive the GD and GDD reference functions, based on which the TSFCT is defined. The errors for the approximations of  the GD and GDD reference functions to GD and GDD are presented with proof provided in the Appendix.    
Section 3  presents the frequency-domain group signal separation operation (FGSSO) scheme for mode retrieval and establishes the corresponding error bounds for the reconstructed signal. 
Section 4 provides the implementation details and experimental validation to verify the effectiveness of the proposed method. Finally, Section 5 concludes the paper and summarizes the main contributions.

In this paper, without loss of generality, the multicomponent signal $x(t)$ is represented in the frequency domain by:
\begin{align}
    \widehat{x}(\eta) = \sum_{k=1}^{K} \widehat{x}_k(\eta) = \sum_{k=1}^{K} B_k(\eta) e^{-i2\pi  \theta_k(\eta)},\label{Frequency-domain_model}
\end{align}
where for the $k$-th component, $B_k(\eta)> 0$ and $\theta_k(\eta)$ are its amplitude and phase, respectively, and $\theta'_k(\eta)$ denotes the group delay (GD).

\section{Time-reassigned  synchrosqueezing frequency-domain chirplet transform}

\subsection{Frequency-domain chirplet transform}

Recall the chirplet transform \cite{mann1995chirplet} of a signal $x(t) \in L^2(\mathbb{R})$ with a window function $g(t) \in L^2(\mathbb{R})$, which is defined as
\begin{equation}
Q_{x}^g(t, \eta, \gga) := \int_{\mathbb{R}} x(t + \tau) \, g(\tau) \, e^{-i2\pi  \eta \tau} e^{-i\pi  \gga \tau^2}  d\tau.
\end{equation}

The frequency-domain chirplet transform (FCT) framework was first introduced in \cite{yang2014frequency}. 
To align with the subsequent FGSSO scheme, we adopt the following definition of the FCT for a signal $\widehat{x}(\xi) \in L^2(\mathbb{R})$ and a frequency-domain window $g(\xi) \in L^2(\mathbb{R})$:
\begin{equation}
\mathcal{D}_{x}^g(t, \eta, \gga) := \int_{\mathbb{R}} \widehat{x}(\xi + \eta) \, g(\xi) \, e^{i2\pi  \xi t} e^{i\pi  \gga \xi^2}  d\xi, \label{defFCT_freq}
\end{equation}
where $\widehat{x}(\xi)$ denotes the Fourier transform of $x(t)$.

The CT and FCT exhibit a fundamental duality:
\[
\mathcal{D}_{x}^g(t, \eta, \gga) = {Q}_{\widehat{x}}^g(\eta, -t, -\gga), \quad
Q_{x}^g(t, \eta, \gga) = \mathcal{D}_{\widehat{x}}^{g(-\cdot)}(\eta, -t, -\gga).
\]
This time-frequency duality provides the mathematical foundation for referring to $\mathcal{D}_{x}^g(t, \eta, \gga)$ as the frequency-domain chirplet transform.

\cite{Huang2024horizontal, zhao2025horizontal} proposed horizontal rearrangement techniques based on the FCT scheme. 
However, these methods fundamentally view the FCT as a transform with a paramter $\gga$ and operate within the two-dimensional time-frequency plane. Consequently, like the TSST methods, they fail to resolve signals with intersecting GD curves.

By employing a three-dimensional synchrosqueezing technique, the synchrosqueezing chirplet transform \cite{zhu2020frequency,chen2023disentangling,chen2024multiple} and synchrosqueezing wavelet-chirplet transform \cite{chen2024composite,jiang2025synchrosqueezed} deliver significantly sharper TFC representations than the CT and WCT.
This improvement  allows  for the accurate estimation of IFs and chirprates in multicomponent signals, even with crossing IF trajectories.
In this paper, we address the challenge of analyzing multicomponent signals with modes that exhibit rapidly frequency-varying  characteristics and  probably intersecting GD curves. To this end, by constructing novel GD and GDD reference functions, we propose the time-reassigned  synchrosqueezing frequency-domain chirplet transform (TSFCT), which achieves a sharper TF-GDD representation and enable accurate GD estimation even in the cases with crossing GD trajectories.

\subsection{Time-reassigned  synchrosqueezing frequency-domain chirplet transform}

The derivation of the IF and chirprate reference functions for the synchrosqueezing chirplet transform
in \cite{chen2023disentangling} is rather lengthy and complex. Inspired by the 
work of \cite{bao2021generalized,meignen2022analysis}, we propose a more streamlined approach to obtain concise expressions of the GD and GDD reference functions for TSFCT.  The derivation is presented below.

Differentiating both sides of  \eqref{defFCT_freq} with respect to \(\eta\), we obtain
\begin{align}
	\partial_\eta \mathcal{D}_x^{g}(t,\eta,\gga)=& \int_{\mathbb{R}}  \frac{\partial}{\partial_\xi}\big( \widehat{x}(\xi + \eta)\big) g(\xi)  e^{i2\pi \xi t}  e^{i\pi  \gga \xi^2} \, d\xi  \nonumber\\ 
	  =-& \int_{\mathbb{R}}  \widehat{x}(\xi + \eta) \frac{\partial}{\partial_\xi}\big(g(\xi)   e^{i2\pi \xi t}  e^{i\pi  \gga \xi^2}\big) \, d\xi  \nonumber\\
      =-& i2\pi t \mathcal{D}_x^{g}(t,\eta,\gga) -i2\pi \gga  \mathcal{D}_x^{\xi g}(t,\eta,\gga) -\mathcal{D}_x^{g'}(t,\eta,\gga). \label{partialeta}
	\end{align}
Here and below $\mathcal{D}_x^{g'}$ and $\mathcal{D}_x^{\xi^j g}$ (for a natural number $j$) denote the FCT of $x(t)$ defined by \eqref{defFCT_freq} with $g(\xi)$ replaced by $g'(\xi)$ and $\xi^j g(\xi)$, respectively.  

Additionally, we can  take the partial derivative of \eqref{defFCT_freq} with respect to \(t\) to reach 
\begin{align}
	\label{partial_t}	\partial_t \mathcal{D}_x^{g}(t,\eta,\gga)=i2\pi  \mathcal{D}_x^{\xi g}(t,\eta,\gga).
\end{align}
Assume that \(x(t)\) is a generalized chirp signal in the frequency domain,
 which means 
\begin{equation}
\label{def_chirpf} 
\wh{x}(\eta) =e^{-(p \eta+\frac{1}{2}q \eta^2)} e^{-i2\pi  (c\eta+\frac{1}{2}r \eta^2)}=B(t) e^{-i2\pi  \gth(t)}.
\end{equation}
where \(p,q,c,r\) are  real constants.
Then, from \eqref{defFCT_freq}, we have 
\begin{align}
   \mathcal{D}_x^{g}(t,\eta,\gga)=\wh{x}(\eta)  \int_{-\infty}^{\infty} e^{-{(p+q \eta+ i2\pi (c+r \eta))\xi +(q+i2\pi  r )\xi^2    }   } g(\xi) e^{i2\pi  \xi t} e^{i\pi  \gga \xi^2} d\xi. \label{FCT_polynomials} 
\end{align}
Taking the partial derivative of \eqref{FCT_polynomials} with respect to \(\eta\), we get
\begin{align}
  \label{partial_eta}   \partial_\eta\mathcal{D}_x^{g}(t,\eta,\gga) &= -(p+q \eta+ i2\pi (c+r \eta) )  \mathcal{D}_x^{g}(t,\eta,\gga)-( q +i2\pi  r )   \mathcal{D}_x^{\xi g}(t,\eta,\gga).
\end{align}
Next, differentiating both sides of \eqref{partial_eta} with respect to  \(t\), we obtain:
\begin{align*}
   \partial_t \partial_\eta\mathcal{D}_x^{g}(t,\eta,\gga) &= -(p+q \eta+ i2\pi (c+r \eta) )  \partial_t\mathcal{D}_x^{g}(t,\eta,\gga)-( q +i2\pi  r )  \partial_t \mathcal{D}_x^{\xi g}(t,\eta,\gga).
\end{align*}
Substituting \eqref{partial_t} into \eqref{partial_eta}, we ultimately get
\begin{align}
 \label{partial_eta_xig}   \partial_\eta\mathcal{D}_x^{\xi g}(t,\eta,\gga) &= -(p+q \eta+ i2\pi (c+r \eta) ) \mathcal{D}_x^{\xi g}(t,\eta,\gga)-( q +i2\pi  r )  \mathcal{D}_x^{\xi^2 g}(t,\eta,\gga).  
\end{align}

The derived relationships \eqref{partial_eta} and \eqref{partial_eta_xig} can be   represented in matrix form as:
\begin{align*}
    \begin{bmatrix}
        \partial_\eta \mathcal{D}_x^{g} \\[0.5em]
        \partial_\eta \mathcal{D}_x^{\xi g} \\[0.5em]
    \end{bmatrix}
    &=
   \begin{bmatrix}
        \mathcal{D}_x^{g} & \mathcal{D}_x^{\xi g}  \\[0.5em]
        \mathcal{D}_x^{\xi g} & \mathcal{D}_x^{\xi^2 g} \\[0.5em]
    \end{bmatrix}
    \begin{bmatrix}
       -\left(p+q \eta+ i2\pi (c+r \eta) \right) \\[0.5em]
       -\left(q + i2\pi  r\right) \\[0.5em]
    \end{bmatrix}.
\end{align*}
Next,  define the matrices $E_0$, $E_1$, and $E_2$ as follows:
\begin{align}\label{matrix_def}
E_0 = \begin{bmatrix}
      \mathcal{D}_x^{g} & \mathcal{D}_x^{\xi g}  \\
      \mathcal{D}_x^{\xi g} & \mathcal{D}_x^{\xi^2 g}
    \end{bmatrix}, \quad
E_1 = \begin{bmatrix}
      \partial_\eta  \mathcal{D}_x^{g} &  \mathcal{D}_x^{\xi g}  \\
      \partial_\eta   \mathcal{D}_x^{\xi g} &  \mathcal{D}_x^{\xi^2 g}
    \end{bmatrix}, \quad
E_2 = \begin{bmatrix}
      \mathcal{D}_x^{g} & \partial_\eta\mathcal{D}_x^{g}  \\
      \mathcal{D}_x^{\xi g} & \partial_\eta \mathcal{D}_x^{\xi g}
    \end{bmatrix}.
\end{align}
Let \(\epsilon > 0\) be a given threshold and define the region \(E_\epsilon\)  as:
\begin{align} \label{region_E}
  E_\epsilon = \left\{ (t,\eta,\gga) : \left| \det(E_0) \right| > \epsilon \right\}
\end{align}
where \(\det(\cdot)\) denotes the determinant of a square matrix.
Thus, for any \((t, \eta, \gga) \in E_\epsilon\), by Cramer's rule, we  obtain
\begin{align*} 
p+q \eta+ i2\pi (c+r \eta) =-\frac{\det(E_1)}{\det(E_0)},\quad  q + i2\pi  r =-\frac{\det(E_2)}{\det(E_0)}.
\end{align*}
Subsequently, 
\begin{align*} 
c+r \eta =-\frac{1}{2\pi}{\rm Im}\Big(\frac{\det(E_1)}{\det(E_0)}\Big), \quad r=-\frac{1}{2\pi}{\rm Im}\Big(\frac{\det(E_2)}{\det(E_0)}\Big).
\end{align*}
For a general signal $x(t)$, we may define the ideal  reference functions for GD and GDD as 
 \begin{equation}
\label{def_gO1rev}
\widehat{t}(t,\eta,\gga):=-\frac{1}{2\pi}{\rm Im}\Big(\frac{\det(E_1)}{\det(E_0)}\Big), \quad \widehat{r}(t,\eta,\gga):=-\frac{1}{2\pi}{\rm Im}\Big(\frac{\det(E_2)}{\det(E_0)}\Big).
\end{equation}

By substituting the window function \( g(\xi) \) in \eqref{partialeta} 
with \( \xi g(\xi) \), we  obtain the following result:
\begin{align}
\partial_\eta \mathcal{D}_x^{\xi g}(t,\eta,\gga) = -i2\pi  t \mathcal{D}_x^{\xi g}(t,\eta,\gga) - i2\pi  \gga \mathcal{D}_x^{\xi^2 g}(t,\eta,\gga) - \mathcal{D}_x^{\xi g'}(t,\eta,\gga) - \mathcal{D}_x^{g}(t,\eta,\gga).
\label{partialxieta}  
\end{align}
Using \eqref{partialeta} and \eqref{partialxieta}, one can calculate $\widehat{t}(t,\eta,\gga)$ and   $\widehat{r}(t,\eta,\gga)$ by the following formulas: 
\begin{eqnarray}
\label{def_gO1}
\widehat{t}(t,\eta,\gga)
&=& t+\frac{1}{2\pi}\operatorname{Im}\Big(
\frac{\mathcal{D}_x^{\xi^{2}g}\mathcal{D}_x^{g'}
      -\mathcal{D}_x^{\xi g}\mathcal{D}_x^{\xi g'}
      -\mathcal{D}_x^{\xi g}\mathcal{D}_x^{g}}
     {\mathcal{D}_x^{\xi^{2}g}\mathcal{D}_x^{g}
      -\mathcal{D}_x^{\xi g}\mathcal{D}_x^{\xi g}}
\Big), \\[4pt]
\label{def_gL1}
\widehat{r}(t,\eta,\gga)
&=& \gga+\frac{1}{2\pi}\operatorname{Im}\Big(
\frac{\mathcal{D}_x^{g}\mathcal{D}_x^{\xi g'}
      -\mathcal{D}_x^{\xi g}\mathcal{D}_x^{g'}
      +\mathcal{D}_x^{g}\mathcal{D}_x^{g}}
     {\mathcal{D}_x^{\xi^{2}g}\mathcal{D}_x^{g}
      -\mathcal{D}_x^{\xi g}\mathcal{D}_x^{\xi g}}
\Big).
\end{eqnarray}
When \(\gga = 0\), the FCT degenerates to the STFT. In this case, the  GD reference function \eqref{def_gO1} is simpler than the second-order GD estimator proposed in \cite{fourer2019second,he2020gaussian}. 
Furthermore, \cite{bao2021generalized,meignen2022analysis} introduced a high-order GD estimator in the time-frequency plane, which could conceivably be extended to formulate high-order GD and GDD reference functions (we provide the explicit mathematical expressions to Appendix A and omit further detailed discussion herein).

With  the GD and GDD reference functions obtained  above, we now define the 
time-reassigned synchrosqueezing frequency-domain chirplet transform (TSFCT) as follows.
\begin{mdef}\label{definition_TSFCT}
For a signal \( x(t) \in L^2(\mathbb{R}) \) and a chosen frequency-domain window function \( g(\xi)\), 
let $\mathcal{D}_x^g(t, \eta, \gga)$ denote its FCT with window $g(\xi)$, as defined in \eqref{defFCT_freq}.
With a threshold \( \epsilon > 0 \), using the GD and GDD reference functions \( \widehat{t}(t,\eta,\gga) \) and \( \widehat{r}(t,\eta,\gga) \) given in \eqref{def_gO1} and \eqref{def_gL1}, respectively, we define the TSFCT of a signal $x(t)$ as
\begin{align}
\mathbb{D}^g_x(\tau,\eta,u): = \iint\limits_{\{(t,\gga) : (t,\eta,\gga) \in E_\epsilon\}}
\mathcal{D}_x^g(t,\eta,\gga) 
\delta\bigl(u - \widehat{r}(t,\eta,\gga)\bigr) 
\delta\bigl(\tau - \widehat{t}(t,\eta,\gga)\bigr)  dt  d\gga,
\end{align}
where the integration domain \( E_\epsilon \) is defined in \eqref{region_E}.
\end{mdef}

\begin{remark}
    In Definition \ref{definition_TSFCT}, $g(\xi)$ is such a window function that 
\(g(\xi) \in L^2(\mathbb{R})\) 
and the FCTs \(\mathcal{D}_x^{g'}(t,\eta,\gga)\), \(\mathcal{D}_x^{\xi g}(t,\eta,\gga)\), and \(\mathcal{D}_x^{\xi^2 g}(t,\eta,\gga)\) are well-defined. 
\end{remark}

\subsection{Error analysis for  GD and GDD reference functions}
As derived in the above subsection, for a generalized linear chirp signal in the frequency domain given in \eqref{def_chirpf},  GD and GDD reference functions--$\widehat{t}(t,\eta,\gga)$ and $\widehat{r}(t,\eta,\gga)$--defined by  \eqref{def_gO1} and \eqref{def_gL1} satisfy 
$$
\widehat{t}(t,\eta,\gga)=\gth'(\eta), \quad  \widehat{r}(t,\eta,\gga)=\gth''(\eta). 
$$
This subsection analyzes the error bounds of the  GD and GDD reference functions for a general sginal given by \eqref{Frequency-domain_model}
 relative to the ground truth $\gth'_k(\eta)$ and $\gth''_k(\eta)$.
To this regard, we introduce the signal class $\mathcal{B}_{\epsilon_1,\epsilon_2}$ to characterize multicomponent signals with specific regularity conditions in the frequency domain.

\begin{mdef}[Class $\mathcal{B}_{\epsilon_1,\epsilon_2}$]
\label{def:multicomponent_class}
Let $\epsilon_1, \epsilon_2, \Delta_1, \Delta_2$ be  small positive constants. A multicomponent signal $x(t) = \sum_{k=1}^K x_k(t)$ with Fourier transform $\widehat{x}(\eta)= \sum_{k=1}^K \widehat{x}_k(t) = \sum_{k=1}^K B_k(\eta)e^{-i2\pi \theta_k(\eta)}$ is said to belong to the class $\mathcal{B}_{\epsilon_1,\epsilon_2}$ if the following conditions hold: for all $\eta \in \mathbb{R}$ and  $k = 1, \dots, K$,
\begin{itemize} 
    \item {Spectral amplitude condition}: $B_k(\eta) \in L^\infty(\mathbb{R})\cap C^1(\mathbb{R})$, with $B_k(\eta) > 0$ and $|B'_k(\eta)| \leq \epsilon_1$.  
    \item {Spectral phase condition}: $\theta_k(\eta) \in C^3(\mathbb{R})$, with $\theta_k'(\eta) > 0$, $\|\theta_k^{(j)}(\eta)\|_\infty < \infty$ for $j = 1, 2, 3$, and $|\theta_k^{(3)}(\eta)| \leq \epsilon_2$.
    \item {Separation condition}: For any $j \neq k$, either $|\theta_k'(\eta) - \theta_j'(\eta)| > 2\Delta_1$ or $|\theta_k''(\eta) - \theta_j''(\eta)| > 2\Delta_2$ holds.
\end{itemize}
\end{mdef}

For a window function $g$, we introduce the following notations.
\begin{align}
  \label{notation1}  &\mathcal{C}(g)(t,\gga) := \int_{\mathbb{R}} g(\xi) e^{-i 2 \pi \xi t} e^{-i \pi \gga \xi^2} d\xi,   \\
  \label{notation2}  &M(\eta) := \sum_{l=1}^{K} B_l(\eta), \quad    I_m := \int_{\mathbb{R}} |\tau^m g(\tau) d\tau|,\quad m =0, 1,\cdots.  
\end{align}
\begin{align}
 \label{piml}
     &  \Pi_{m,l}:= \epsilon_1  I_{m+1} + \epsilon_2 \frac{\pi}{3} B_l(\eta)  I_{m+3}.  \\
 \label{pim}
   & \Pi_{m}=\sum_{l=1}^{K}\Pi_{m,l}=\epsilon_1 K I_{m+1} + \epsilon_2\frac{\pi}{3} M(\eta) I_{m+3}.
\end{align}
Denote the region $Z_k$ as
\begin{equation}
\label{eq:Z_k_definition}   
Z_k := \left\{ (t, \eta, \gga) : \left|t - \theta'_k(\eta)\right| < \Delta_1 \text{ and } \left|\gga - \theta''_k(\eta)\right| < \Delta_2, \, \eta \in \mathbb{R} \right\}.
\end{equation}
For \(m=0,1,2\), let \(\Upsilon_{m,k}(\eta)\) be functions that satisfy  
\begin{align}\label{Upsilonmk}
   \sup_{(t,\eta,\gga)\notin Z_k} \bigl|\mathcal{C}(\eta^m g)\bigl(\theta_k'(\eta)-t,\theta_k''(\eta)-\gga\bigr)\bigr| \le \Upsilon_{m,k}(\eta). 
\end{align}
See Appendix \ref{section error functions} for the discussion on quantities $\Upsilon_{m,k}(\eta)$ when $g$ is the Gaussian function: 
\begin{align}
g_\sigma(\xi) := \frac{1}{\sigma \sqrt{2\pi}} e^{-\frac{\xi^2}{2\sigma^2}} \quad (\sigma > 0).  \label{Gaussian_function}
\end{align}
We also denote 
\begin{align}
    \gga_m&:=B_k(\eta)I_m + \sum_{l\neq k} B_l(\eta) \Upsilon_{m,l}(\eta)+\Pi_m, 
 \label{modulus_Cx}
\\
\label{Lambdamk}
\Lambda_{m,k} &:= \epsilon_1 K I_m + \epsilon_1 \epsilon_2 \pi K I_{m+3} + \epsilon_2 \pi M(\eta) I_{m+2}\\
&\quad + \sum_{l\neq k} 2\pi \Big( \left|\theta'_k(\eta) - \theta'_l(\eta)\right| \big( B_l(\eta) \Upsilon_{m,l}(\eta) + \Pi_{m,l} \big)+ \left|\theta''_k(\eta) - \theta''_l(\eta)\right| \big( B_l(\eta) \Upsilon_{m+1,l}(\eta) + \Pi_{m+1,l} \big) \Big)
 \nonumber
\end{align}

The next theorem provides error bounds for the  GD reference function \(\widehat{t}(t, \eta, \gga) \) and GDD reference function \(\widehat{r}(t, \eta, \gga) \) relative to their ground truth values  
\(\theta_k'(\eta)\) and \(\theta_k''(\eta)\).
\begin{theo}
\label{theorem_eta_lambda}
Let \(x(t) \in \mathcal{B}_{\epsilon_1, \epsilon_2}\) be a multicomponent signal with \(K\) modes for sufficiently small \(\epsilon_1, \epsilon_2 > 0\). 
Consider its FCT \(\mathcal{D}_x^{\xi^m g}(t,\eta,\gga)\) with window \(\xi^m g(\xi)\). 
Suppose \((t, \eta, \gga) \in Z_k\) and  the matrix \(E_0\) defined in \eqref{matrix_def} satisfies \(|\det(E_0)| > \epsilon_0^{-1}\). 
Then, the reference functions \(\widehat{t}(t, \eta, \gga)\) and \(\widehat{r}(t, \eta, \gga)\) in \eqref{def_gO1} and \eqref{def_gL1} satisfy:
\begin{align}
   \label{estimator_t_2} \left|\widehat{t}(t, \eta, \gga) - \theta_k'(\eta)\right| &\leq \frac{\epsilon_0}{2\pi} \left( \Lambda_{0,k} \gga_2 + \Lambda_{1,k} \gga_1 \right), \\
   \label{estimator_r_2} \left|\widehat{r}(t, \eta, \gga) - \theta_k''(\eta)\right| &\leq \frac{\epsilon_0}{2\pi} \left( \Lambda_{0,k}\gga_1  + \Lambda_{1,k}  \gga_0 \right),
\end{align}
where \(\gga_m\) and \(\Lambda_{m,k}\) are defined in \eqref{modulus_Cx} and \eqref{Lambdamk}, respectively.
\end{theo}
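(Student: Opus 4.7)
The plan is to exploit the fact, already implicit in the derivation leading to \eqref{def_gO1rev}, that for a single mode modeled as a constant-amplitude linear-chirp signal in the frequency domain, the reference functions reproduce the true $\gth_k'(\eta)$ and $\gth_k''(\eta)$ exactly: a direct check using \eqref{partial_eta} with constant $B_k$ gives $\det(E_1) = -i2\pi\gth_k'(\eta)\det(E_0)$ and $\det(E_2) = -i2\pi\gth_k''(\eta)\det(E_0)$ in that ideal setting. Treating the general multicomponent signal as a perturbation of this ideal, I can write
\[
\widehat{t}(t,\eta,\gga) - \gth_k'(\eta) = -\tfrac{1}{2\pi}\operatorname{Im}\!\Bigl(\frac{\det(E_1)+i2\pi\gth_k'(\eta)\det(E_0)}{\det(E_0)}\Bigr),
\]
and similarly for $\widehat{r}-\gth_k''$, so that the hypothesis $|\det(E_0)|>\epsilon_0^{-1}$ reduces the problem to bounding the two numerators.

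The first step will be to establish the algebraic identities
\begin{align*}
\det(E_1) + i2\pi\gth_k'(\eta)\det(E_0) &= \mathcal{D}_x^{\xi^2 g}\,\tilde{\Lambda}_{0,k} - \mathcal{D}_x^{\xi g}\,\tilde{\Lambda}_{1,k},\\
\det(E_2) + i2\pi\gth_k''(\eta)\det(E_0) &= \mathcal{D}_x^{g}\,\tilde{\Lambda}_{1,k} - \mathcal{D}_x^{\xi g}\,\tilde{\Lambda}_{0,k},
\end{align*}
where $\tilde{\Lambda}_{m,k} := \partial_\eta\mathcal{D}_x^{\xi^m g} + i2\pi\gth_k'(\eta)\mathcal{D}_x^{\xi^m g} + i2\pi\gth_k''(\eta)\mathcal{D}_x^{\xi^{m+1} g}$. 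Both follow from expanding the $2\times 2$ determinants in \eqref{matrix_def} and then adding and subtracting $i2\pi\gth_k''(\eta)\mathcal{D}_x^{\xi^{m+1}g}$ in the first case (respectively $i2\pi\gth_k'(\eta)\mathcal{D}_x^{\xi^m g}$ in the second) so that each column of mixed derivatives collapses into a $\tilde{\Lambda}_{m,k}$ factor.

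The substantive work will be the pointwise bounds $|\mathcal{D}_x^{\xi^m g}| \le \gga_m$ and $|\tilde{\Lambda}_{m,k}| \le \Lambda_{m,k}$. For the first, I would Taylor expand $\widehat{x}_l(\xi+\eta) = B_l(\xi+\eta)e^{-i2\pi\gth_l(\xi+\eta)}$ for each $l$, treating $B_l$ to zeroth order with remainder controlled by $|B_l'|\le\epsilon_1$ and $\gth_l$ to second order with remainder controlled by $|\gth_l^{(3)}|\le\epsilon_2$, to obtain $\mathcal{D}_{x_l}^{\xi^m g}(t,\eta,\gga) = \widehat{x}_l(\eta)\,\mathcal{C}(\xi^m g)(\gth_l'(\eta)-t,\gth_l''(\eta)-\gga) + E_{m,l}$ with $|E_{m,l}|\le\Pi_{m,l}$ (using $|e^{ix}-1|\le |x|$ on the cubic phase remainder). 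The separation condition combined with $(t,\eta,\gga)\in Z_k$ forces $(t,\eta,\gga)\notin Z_l$ for every $l\neq k$, so \eqref{Upsilonmk} applies and the triangle inequality gives $|\mathcal{D}_x^{\xi^m g}|\le \gga_m$. A parallel expansion of $\widehat{x}_l'(\xi+\eta) = B_l'(\xi+\eta)e^{-i2\pi\gth_l(\xi+\eta)} - i2\pi\gth_l'(\xi+\eta)\widehat{x}_l(\xi+\eta)$, in which $\gth_l'(\xi+\eta)$ is Taylor expanded to first order with second-order remainder controlled by $\epsilon_2$, yields
\[
\partial_\eta\mathcal{D}_{x_l}^{\xi^m g} = -i2\pi\gth_l'(\eta)\mathcal{D}_{x_l}^{\xi^m g} - i2\pi\gth_l''(\eta)\mathcal{D}_{x_l}^{\xi^{m+1}g} + R_{m,l},
\]
with $|R_{m,l}|$ built from the $\epsilon_1 I_m$, $\pi\epsilon_2 B_l(\eta) I_{m+2}$, and $\pi\epsilon_1\epsilon_2 I_{m+3}$ pieces. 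Substituting into $\tilde{\Lambda}_{m,k}$, the $l=k$ contributions cancel by construction and the residual reads $\sum_{l\neq k}[i2\pi(\gth_k'-\gth_l')\mathcal{D}_{x_l}^{\xi^m g} + i2\pi(\gth_k''-\gth_l'')\mathcal{D}_{x_l}^{\xi^{m+1}g}] + \sum_l R_{m,l}$; inserting the off-diagonal bound $|\mathcal{D}_{x_l}^{\xi^m g}|\le B_l(\eta)\Upsilon_{m,l}(\eta)+\Pi_{m,l}$ reproduces precisely $\Lambda_{m,k}$ as defined in \eqref{Lambdamk}.

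Combining the two algebraic identities with $|\det(E_0)|>\epsilon_0^{-1}$, $|\mathcal{D}_x^{\xi^m g}|\le \gga_m$, $|\tilde{\Lambda}_{m,k}|\le \Lambda_{m,k}$, and the triangle inequality then delivers \eqref{estimator_t_2}--\eqref{estimator_r_2} at once. The hard part will be the careful bookkeeping of Taylor-remainder constants: I must ensure that the phase cubic remainder produces the $\pi/3$ prefactor inside $\Pi_{m,l}$ while the analogous cubic term arising from $\partial_\eta\mathcal{D}_{x_l}^{\xi^m g}$ enters with a $\pi$ prefactor inside $\Lambda_{m,k}$, that cross contributions of order $\epsilon_1\epsilon_2$ are assigned to the right summand, and that the implicit higher-order $O(\epsilon_1\epsilon_2^2)$ and $O(\epsilon_1^2)$ pieces are absorbed under the ``sufficiently small'' hypothesis on $\epsilon_1,\epsilon_2$.
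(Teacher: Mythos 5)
Your proposal is correct and follows essentially the same route as the paper: your $\tilde{\Lambda}_{m,k}$ is exactly the paper's residual $\mathrm{Res}_{m,k}$, your two determinant identities coincide (up to sign convention) with the column-operation computation in the paper's proof, and your Taylor-expansion bounds $|\mathcal{D}_x^{\xi^m g}|\le \gga_m$ and $|\tilde{\Lambda}_{m,k}|\le\Lambda_{m,k}$ reproduce the paper's Lemmas B.1 and B.2, including the $\tfrac{\pi}{3}\epsilon_2 I_{m+3}$ versus $\pi\epsilon_2 I_{m+2}$ bookkeeping and the use of the separation condition to place $(t,\eta,\gga)\notin Z_l$ for $l\neq k$. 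No gaps; this is the paper's argument.
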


The proof of Theorem \ref{theorem_eta_lambda} is postponed to Appendix \ref{proof of theorem} and is based on two lemmas to be established therein.

\section{Frequency-domain signal separation operation scheme}
In this section, we propose a  frequency-domain group signal separation operation  (FGSSO) scheme based on the TSFCT. 
This scheme is specifically designed to efficiently recover individual modes from multicomponent signals exhibiting  intersecting GD curves. 

The three-dimensional CT- or WCT-based signal separation operation (SSO) scheme \cite{chui2021time,chui2023analysis} rely on accurate IF and chirprate estimation to extract crossing IF components.
However, the slow decay of CT and WCT along the chirprate dimension results in  residual interference from adjacent modes. The group SSO  algorithm proposed in \cite{li2022chirplet} successfully suppresses such inter-mode interference, outperforming conventional SSO methods.
Similarly, the FCT is limited by its slow decay along the GDD direction. To address this issue, we extend the group SSO framework to the TF-GDD space, thereby significantly enhancing the accuracy of mode retrieval. 
Furthermore, we provide a rigorous error analysis of the proposed FGSSO method, demonstrating that it depends on the accuracy of the estimated GD and GDD curves. 

Assuming $x(t) \in \mathcal{B}_{\epsilon_1,\epsilon_2}$ and that $\epsilon_1$ and $\epsilon_2$ are sufficiently small, we obtain from Lemma \ref{approxC_x} (see Appendix \ref{proof of theorem}) that
\begin{align}
\label{approx_multicomponent}  
\mathcal{D}_{x}^{g}(t,\eta,\gga) \approx \sum_{k=1}^{K} \widehat{x}_k(\eta) \mathcal{C}(g)(\theta_k'(\eta)-t, \theta_k''(\eta)-\gga), 
\end{align}
where $\mathcal{C}(g)$ is defined in \eqref{notation1}.
By applying the TSFCT to the multicomponent signal $x(t)$, we can generate a highly concentrated TF-GDD representation.  
This representation enables precise characterization of signal components and allows for the extraction of GD and GDD curves.  
The ridge curves $\check{\tau}_{k}(\eta)$ and $\check{\gga}_{k}(\eta)$, obtained via a three-dimensional ridge-extraction method \cite{zhu2020frequency,zhang2022two}, provide accurate approximations
\[
\check{\tau}_k(\eta)\approx\theta_k'(\eta),\qquad
\check{\gga}_k(\eta)\approx\theta_k''(\eta),\qquad
k=1,2,\dots,K.
\]

Then, when \( t = \check{\tau}_k(\eta) \) and \( \gga = \check{\gga}_k(\eta) \) for each \( k = 1, 2, \ldots, K \), Eq.~\eqref{approx_multicomponent} can be rewritten in matrix form as:
\begin{align}
	\begin{bmatrix}
		\mathcal{D}^{g}_x(\check{\tau}_1(\eta),\eta,\check{\gga}_1(\eta)) \\
		 \mathcal{D}^{g}_x(\check{\tau}_2(\eta),\eta,\check{\gga}_2(\eta)) \\
	\vdots \\
	   \mathcal{D}^{g}_x(\check{\tau}_K(\eta),\eta,\check{\gga}_K(\eta))
	\end{bmatrix}
	\approx
	\begin{bmatrix}
	a_{1,1} & a_{1,2} & \cdots & a_{1,K} \\
	a_{2,1} & a_{2,2} & \cdots & a_{2,K} \\
	\vdots & \vdots & \ddots & \vdots \\
	a_{K,1} & a_{K,2} & \cdots & a_{K,K}
	\end{bmatrix}
	\begin{bmatrix}
	\widehat{x}_1(\eta) \\
	\widehat{x}_2(\eta) \\
	\vdots \\
	\widehat{x}_K(\eta)
	\end{bmatrix}, \label{linear_system}
 \end{align}
where 
 \begin{align*}
  a_{k,l}:=\mathcal{C}(g)(\check{\tau}_l(\eta)-\check{\tau}_k(\eta),\check{\gga}_l(\eta)-\check{\gga}_k(\eta)). 
 \end{align*}
 By solving the linear system, the approximate solution \(\widetilde{\widehat{x}}_k(\eta)\) is given by
 \begin{equation}
\label{recover}
	 \begin{bmatrix}
	 \widetilde{\widehat{x}}_1(\eta) \\
	\widetilde{\widehat{x}}_2(\eta) \\
	\vdots \\
	\widetilde{\widehat{x}}_K(\eta)
	 \end{bmatrix}
	 =
	 \begin{bmatrix}
		 a_{1,1} & a_{1,2} & \cdots & a_{1,K} \\
		 a_{2,1} & a_{2,2} & \cdots & a_{2,K} \\
		 \vdots & \vdots & \ddots & \vdots \\
		 a_{K,1} & a_{K,2} & \cdots & a_{K,K}
	 \end{bmatrix}
	 ^{-1} 
	 \begin{bmatrix}
		 \mathcal{D}^{g}_x(\check{\tau}_1(\eta),\eta,\check{\gga}_1(\eta)) \\
		 \mathcal{D}^{g}_x(\check{\tau}_2(\eta),\eta,\check{\gga}_2(\eta)) \\
		 \vdots \\
		 \mathcal{D}^{g}_x(\check{\tau}_K(\eta),\eta,\check{\gga}_K(\eta))
	 \end{bmatrix}.
 \end{equation}
The matrix $A := [a_{k,l}]_{1 \leq k,l \leq K}$ is assumed to be nonsingular; otherwise, $A^{-1}$ refers to its pseudo-inverse.

In the next theorem,  we establish the approximation error between the reconstructed mode $\widetilde{\widehat{x}}_k(\eta)$ and the original mode's Fourier transform $\widehat{x}_k(\eta)$.
\begin{theo} \label{theorem_recover}
Let $\widetilde{\widehat{x}}_k(\eta), 1\le k\le K$ be the estimate of $\wh x_k(\eta)$ given by \eqref{recover}. Then 
\begin{align}
 \label{error_recov}    \left| \widetilde{\widehat{x}}_k(\eta)-\widehat{x}_k(\eta)\right| \leq \gO_0(\eta)\sum_{l=1}^{K} \left|b_{l,k}\right|,
\end{align}
and \(b_{l,k}\) denote the \((l,k)\)-th entry  of  the matrix \(A^{-1}\) defined in \eqref{recover}, and 
\begin{align}
\label{def_Oml}
\Omega_0(\eta):=\epsilon_1 K I_{1} + \epsilon_2\frac{\pi}{3} M(\eta) I_{3} + \sum_{l=1}^{K}  B_l(\eta) \left(2\pi \left|\check{\tau}_l(\eta)-\theta'_l(\eta)\right|I_1 + \pi \left|\check{\gga}_l(\eta)-\theta''_l(\eta)\right|I_2\right).
\end{align}
\end{theo}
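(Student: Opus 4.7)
The plan is to bound $|\widetilde{\widehat{x}}_k(\eta)-\widehat{x}_k(\eta)|$ by first controlling the componentwise residual of the linear system \eqref{linear_system} at each ridge point, and then propagating this residual through $A^{-1}$. Concretely, for $l=1,\dots,K$ I would introduce the residual
\[
\mathcal{E}_l(\eta) := \mathcal{D}^{g}_x\bigl(\check{\tau}_l(\eta),\eta,\check{\gga}_l(\eta)\bigr) - \sum_{j=1}^{K} a_{l,j}\,\widehat{x}_j(\eta),
\]
so that \eqref{recover} gives $\widetilde{\widehat{x}}_k(\eta)-\widehat{x}_k(\eta)=\sum_{l=1}^{K}b_{l,k}\,\mathcal{E}_l(\eta)$ in the indexing convention used in the statement. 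The theorem then reduces to showing the uniform bound $|\mathcal{E}_l(\eta)|\le \Omega_0(\eta)$ for every $l$, after which taking absolute values and factoring $\Omega_0(\eta)$ out of the sum yields \eqref{error_recov}.

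To estimate $\mathcal{E}_l(\eta)$, I would split it into two pieces. The first is the model-approximation error in \eqref{approx_multicomponent}: by Lemma \ref{approxC_x} (proved in Appendix \ref{proof of theorem}), one has
\[
\Bigl|\mathcal{D}_x^{g}\bigl(\check{\tau}_l(\eta),\eta,\check{\gga}_l(\eta)\bigr)-\sum_{j=1}^{K}\widehat{x}_j(\eta)\,\mathcal{C}(g)\bigl(\theta_j'(\eta)-\check{\tau}_l(\eta),\theta_j''(\eta)-\check{\gga}_l(\eta)\bigr)\Bigr|\le \Pi_0,
\]
with $\Pi_0 = \epsilon_1 K I_1 + \epsilon_2\tfrac{\pi}{3} M(\eta) I_3$, which supplies the first two terms of $\Omega_0(\eta)$. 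The second piece is the ridge-substitution error measuring the gap between $\mathcal{C}(g)(\theta_j'-\check{\tau}_l,\theta_j''-\check{\gga}_l)$ and $a_{l,j}=\mathcal{C}(g)(\check{\tau}_j-\check{\tau}_l,\check{\gga}_j-\check{\gga}_l)$. Differentiating \eqref{notation1} under the integral gives the uniform Lipschitz bounds $|\partial_t\mathcal{C}(g)(t,\gga)|\le 2\pi I_1$ and $|\partial_\gga \mathcal{C}(g)(t,\gga)|\le \pi I_2$, so the mean value theorem applied in the two arguments yields
\[
\bigl|\mathcal{C}(g)(\theta_j'-\check{\tau}_l,\theta_j''-\check{\gga}_l)-\mathcal{C}(g)(\check{\tau}_j-\check{\tau}_l,\check{\gga}_j-\check{\gga}_l)\bigr|\le 2\pi\bigl|\check{\tau}_j(\eta)-\theta_j'(\eta)\bigr| I_1 + \pi\bigl|\check{\gga}_j(\eta)-\theta_j''(\eta)\bigr| I_2.
\]
Multiplying by $|\widehat{x}_j(\eta)|=B_j(\eta)$, summing over $j$, and combining with the $\Pi_0$ bound produces exactly $|\mathcal{E}_l(\eta)|\le \Omega_0(\eta)$ in the form \eqref{def_Oml}.

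The argument is largely mechanical once Lemma \ref{approxC_x} is in hand; no fine cancellations are exploited, and the role of the FGSSO is only to package the residuals $\mathcal{E}_l$ into a linear system whose inversion transfers them to the mode error with weights $|b_{l,k}|$. The main obstacle, and the only step requiring care, is lining up the Lipschitz constants $2\pi I_1$ and $\pi I_2$ coming from the derivatives of $\mathcal{C}(g)$ with the exact coefficients in \eqref{def_Oml}, and cleanly separating the model-approximation part (which uses $\epsilon_1, \epsilon_2$ and involves no ridge data) from the ridge-substitution part (which uses $|\check{\tau}_l-\theta'_l|$ and $|\check{\gga}_l-\theta''_l|$ but not $\epsilon_1, \epsilon_2$) so that the two error sources are added rather than double-counted.
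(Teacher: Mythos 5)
Your proposal is correct and follows essentially the same route as the paper: decompose the residual of the linear system at each ridge point into the model-approximation part (bounded by $\Pi_0$ via Lemma \ref{approxC_x}) and the ridge-substitution part, bound the latter by $2\pi|\check{\tau}_l-\theta'_l|I_1+\pi|\check{\gga}_l-\theta''_l|I_2$, and propagate through $A^{-1}$. The only cosmetic difference is that you obtain the ridge-substitution bound by differentiating $\mathcal{C}(g)$ and invoking a Lipschitz estimate, whereas the paper bounds the difference of phase factors pointwise inside the integral defining $\mathcal{C}(g)$ --- the two computations are interchangeable and yield identical constants.
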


\begin{proof} Observe that
\begin{align*}
	\begin{bmatrix}
		\sum_{l=1}^{K}a_{1,l}\widehat{x}_l(\eta) \\
		\sum_{l=1}^{K}a_{2,l}\widehat{x}_l(\eta) \\
	\vdots \\
	   \sum_{l=1}^{K}a_{K,l}\widehat{x}_l(\eta)
	\end{bmatrix}
	=
	\begin{bmatrix}
	a_{1,1} & a_{1,2} & \cdots & a_{1,K} \\
	a_{2,1} & a_{2,2} & \cdots & a_{2,K} \\
	\vdots & \vdots & \ddots & \vdots \\
	a_{K,1} & a_{K,2} & \cdots & a_{K,K}
	\end{bmatrix}
	\begin{bmatrix}
	\widehat{x}_1(\eta) \\
	\widehat{x}_2(\eta) \\
	\vdots \\
	\widehat{x}_K(\eta)
	\end{bmatrix}.
 \end{align*}
Combining this with \eqref{linear_system}, we obtain
 \begin{equation}
	 \begin{bmatrix}
	 \widetilde{\widehat{x}}_1(\eta)-\widehat{x}_1(\eta) \\
	\widetilde{\widehat{x}}_2(\eta)-\widehat{x}_2(\eta) \\
	\vdots \\
	\widetilde{\widehat{x}}_K(\eta)-\widehat{x}_K(\eta)
	 \end{bmatrix}
	 =
	 \begin{bmatrix}
		 a_{1,1} & a_{1,2} & \cdots & a_{1,K} \\
		 a_{2,1} & a_{2,2} & \cdots & a_{2,K} \\
		 \vdots & \vdots & \ddots & \vdots \\
		 a_{K,1} & a_{K,2} & \cdots & a_{K,K}
	 \end{bmatrix}
	 ^{-1} 
	 \begin{bmatrix}
		 \mathcal{D}^{g}_x(\check{\tau}_1(\eta),\eta,\check{\gga}_1(\eta))-\sum_{l=1}^{K}a_{1,l}\widehat{x}_l(\eta) \\
		 \mathcal{D}^{g}_x(\check{\tau}_2(\eta),\eta,\check{\gga}_2(\eta))-\sum_{l=1}^{K}a_{2,l}\widehat{x}_l(\eta)  \\
		 \vdots \\
		 \mathcal{D}^{g}_x(\check{\tau}_K(\eta),\eta,\check{\gga}_K(\eta))-\sum_{l=1}^{K}a_{K,l}\widehat{x}_l(\eta) 
	 \end{bmatrix}. \label{recover_error_bound}
 \end{equation}
As for the k-th row of the system \eqref{recover_error_bound}, we have
\begin{align*}
    &\mathcal{D}^{g}_x(\check{\tau}_k(\eta),\eta,\check{\gga}_k(\eta)) - \sum_{l=1}^{K} a_{k,l} \widehat{x}_l(\eta) \\
    &= \mathcal{D}^{g}_x(\check{\tau}_k(\eta),\eta,\check{\gga}_k(\eta)) - \sum_{l=1}^{K} \wh{x}_l(\eta) \mathcal{C}(g)\bigl(\theta'_l(\eta) - \check{\tau}_k(\eta), \theta''_l(\eta) - \check{\gga}_k(\eta)\bigr) \\
    &\quad + \sum_{l=1}^{K} \wh{x}_l(\eta) \Bigl( \mathcal{C}(g)\bigl(\theta'_l(\eta) - \check{\tau}_k(\eta), \theta''_l(\eta) - \check{\gga}_k(\eta)\bigr)-\mathcal{C}(g)\bigl(\check{\tau}_l(\eta) - \check{\tau}_k(\eta), \check{\gga}_l(\eta) - \check{\gga}_k(\eta)\bigr) \Bigr).
\end{align*}
Note that,
\begin{align*}
     &\left|\mathcal{C}(g)\bigl(\theta'_l(\eta) - \check{\tau}_k(\eta), \theta''_l(\eta) - \check{\gga}_k(\eta)\bigr)-\mathcal{C}(g)\bigl(\check{\tau}_l(\eta) - \check{\tau}_k(\eta), \check{\gga}_l(\eta) - \check{\gga}_k(\eta)\bigr)\right| \\
     &=\left|\int_{\mathbb{R}} g(\xi) e^{i 2 \pi \xi \check{\tau}_k(\eta)} e^{i \pi \xi^2 \check{\gga}_k(\eta)} \left( e^{-i 2 \pi \xi \check{\tau}_l(\eta)}e^{-i \pi \xi^2 \check{\gga}_l(\eta)}-e^{-i 2 \pi \xi \theta'_l(\eta)}e^{-i \pi \xi^2 \theta''_l(\eta)}\right) d\xi\right|,    \\
     &\leq 2\pi \left|\check{\tau}_l(\eta)-\theta'_l(\eta)\right|I_1 + \pi \left|\check{\gga}_l(\eta)-\theta''_l(\eta)\right|I_2.
\end{align*}
Additionally, according to \eqref{approx_allk} in Appendix \ref{proof of theorem}, when \(m=0 , t =\check{\tau}_k(\eta), \gamma=\check{\gga}_k(\eta)\), then
\[\left| \mathcal{D}^{g}_x(\check{\tau}_k(\eta),\eta,\check{\gga}_k(\eta)) - \sum_{l=1}^{K} \wh{x}_l(\eta) \mathcal{C}(g)\bigl(\theta'_l(\eta) - \check{\tau}_k(\eta), \theta''_l(\eta) - \check{\gga}_k(\eta)\bigr) \right|\leq \Pi_0=\epsilon_1 K I_{1} + \epsilon_2\frac{\pi}{3} M(\eta) I_{3},\]
Furthermore, we have the bound
\begin{align*}
    &\left|\mathcal{D}^{g}_x(\check{\tau}_k(\eta),\eta,\check{\gga}_k(\eta)) 
     - \sum_{l=1}^{K} a_{k,l} \widehat{x_l}(\eta)\right| \nonumber \\
    &\leq \left| \mathcal{D}^{g}_x(\check{\tau}_k(\eta),\eta,\check{\gga}_k(\eta)) 
     - \sum_{l=1}^{K} \wh{x}_l(\eta) \mathcal{C}(g)\bigl(\theta'_l(\eta) - \check{\tau}_k(\eta), 
     \theta''_l(\eta) - \check{\gga}_k(\eta)\bigr) \right| \nonumber \\
    &\quad + \sum_{l=1}^{K} \left| \wh{x}_l(\eta)\right|   \left|\Bigl( \mathcal{C}(g)\bigl(\theta'_l(\eta) - \check{\tau}_k(\eta), 
     \theta''_l(\eta) - \check{\gga}_k(\eta)\bigr) \right. \left. - \mathcal{C}(g)\bigl(\check{\tau}_l(\eta) - \check{\tau}_k(\eta), 
     \check{\gga}_l(\eta) - \check{\gga}_k(\eta)\bigr) \Bigr)\right| \nonumber \\
    &= \Omega_0(\eta),
\end{align*}
where $\Omega_0(\eta)$ is defined by \eqref{def_Oml}.  

Thus, from  Eq.~\eqref{recover_error_bound}, 
\begin{align*}
   \left| \widetilde{\widehat{x}}_k(\eta)-\widehat{x}_k(\eta)\right| \leq \Omega_0(\eta) \sum_{l=1}^{K} \left|b_{l,k}\right| ,
\end{align*}
as desired. 
\end{proof}



When considering equivalence under scalar scaling, time-domain shifting, and frequency-domain modulation, the Gaussian function is the only window function that achieves optimal time-frequency resolution.
In this paper, we will use the Gaussian function \(g_\sigma(\xi)\) given in \eqref{Gaussian_function} 
as the window function of FCT. In this case,  the \(a_{k,l}\) in \eqref{recover} is given by
\begin{align*}
  a_{k,l}= \frac{1}{\sqrt{1+i2\pi \sigma^2  (\check{\gga}_k(\eta)-\check{\gga}_l(\eta))}}      e^{\frac{-2\pi^2 \sigma^2 (\check{\tau}_k(\eta)-\check{\tau}_l(\eta))^2}{1+i2\pi  \sigma^2 (\check{\gga}_k(\eta)-\check{\gga}_l(\eta)) }}.
 \end{align*} 
Furthermore, the reconstruction error in \eqref{error_recov} is directly related to the values of $I_m$. 
As \eqref{Im} (in Appendix \ref{section error functions}) shows, a smaller $\sigma$
 leads to a reduction in the values of \(I_m\)
 and thus a smaller reconstruction error.

\section{Experimental results} 
\subsection{Implementation}

The  parameter $\sigma$ in the Gaussian window function $g_\sigma(\xi)$ affects the energy concentration of the TF-GDD representation. In this paper, we employ Rényi entropy to determine the value of the parameter \( \sigma \).
For the FCT, we define the Rényi entropy as:
\begin{equation}
\label{def_renyi_entropy_spec}
E_{\sigma} := \frac{1}{1-\ell} \log_2 \left(
\frac{\iiint_{\mathbb{R}^3} |\mathcal{D}^{g_\sigma}_x(t,\eta,\lambda)|^{2\ell} dt\, d\eta\, d\lambda}
{\left(\iiint_{\mathbb{R}^3} |\mathcal{D}^{g_\sigma}_x(t,\eta,\lambda)|^2 dt\, d\eta\, d\lambda\right)^\ell}
\right),
\end{equation}
where $ \ell > 1$ controls the entropy sensitivity. 
The Rényi entropy provides a quantitative measure of concentration for TF--GDD representations, where lower entropy values correspond to more concentrated energy distributions. 
To determine the optimal window parameter $\sigma$, we define the optimization problem:
\begin{align}
  \sigma_{\mathrm{opt}} := \operatorname*{arg\,min}_{\sigma > 0} \, E_{\sigma}(\ell), 
\label{opt_sigma}
\end{align}
where $E_{\sigma}(\ell)$ represents the Rényi entropy of order $\ell$.
The order parameter \(\ell\) is set to 2.5 throughout this paper.

Once the optimal window parameter $\sigma_{\mathrm{opt}}$ is obtained, we focus on the implementation of the FCT. 
Since the Gaussian function \(g_\sigma(\xi)\) is real-valued, the FCT admits the following equivalent forms:
\begin{align}
\mathcal{D}_{x}^{g_\sigma}(t, \eta, \gga):&= \int_{\mathbb{R}} \widehat{x}(\xi + \eta) g_\sigma(\xi)  e^{i2\pi \xi t} e^{i\pi  \gga \xi^2} \, d\xi \nonumber \\
                             &= \int_{\mathbb{R}} \widehat{x}(\xi + \eta) e^{i2\pi \xi t}   \overline{ g_\sigma(\xi)  e^{-i\pi  \gga \xi^2}} \, d\xi  \nonumber\\
                             &= \int_{\mathbb{R}} {x}(t-\tau) e^{i2\pi \eta (t-\tau)}   \overline{ \mathcal{C}(g_\sigma)(\tau,\gga)} \, d\tau  \nonumber\\
                              &= \int_{\mathbb{R}} {x}(\tau) e^{i2\pi \eta \tau}   \overline{ \mathcal{C}(g_\sigma)(t-\tau,\gga)} \, d\tau,  \label{defFCT_implement}
\end{align}
where \(\mathcal{C}(g)(t,\gga) \) was defined in \eqref{notation1}.

This work adopts  \eqref{defFCT_implement} for the FCT implementation.
Once  the corresponding window functions are obtained, the implementation of the FCT is identical to that of the STFT. 
Assume that the input signal \( x(t) \) is uniformly discretized at the points
\begin{align*}
t_n =  n\Delta t, \quad n = 0, 1, \cdots, N-1,
\end{align*}
where \( \Delta t \) represents the sampling interval (time step), and \( N \) stands for the total number of sampling points. For the window function, its corresponding time variable is defined as \( \tau_k = k \Delta t \), where \( k=0,1 \dots, N-1 \).
The parameter $\gga$ is defined over the interval $[-R_0, R_0]$, and is discretized into $L = 2\lfloor N/2 \rfloor + 1$ equidistant points. For $l = 1,\dots,L$, the $l$-th sample point and corresponding sampling interval are given by:
\begin{align*}
	\gga_l &= -R_0 + (l-1) \Delta \gga, \quad \Delta \gga = {2R_0}/{(L-1)},
\end{align*}
This discretization ensures that $\gga_1 = -R_0$, $\gga_L = R_0$, and provides a uniformly spaced grid across the interval \([-R_0, R_0]\).
Additionally, choosing $L$ as an odd number  guarantees symmetric sampling about $\gga = 0$.
The frequency variable $\eta$ is discretized into $N$ points with resolution $\Delta\eta = 1/(N\Delta t)$:
\begin{equation*}
\eta_j = \begin{cases} 
j\Delta\eta, & 0 \leq j \leq \lfloor N/2 \rfloor \\
(j-N)\Delta\eta, & \lfloor N/2 \rfloor < j < N
\end{cases}
\end{equation*}
For the non-negative frequency indices ($0 \leq j \leq \lfloor N/2 \rfloor$), the product $\eta_j\tau_k$ simplifies to $k j/N$. This leads to the discrete FCT:
\begin{align}
\mathcal{D}_x^{g_\sigma}(t_n,\eta_j,\gga_l) &= \sum_{k=0}^{N-1} x(\tau_k)\text{conj}(\mathcal{C}(g_\sigma)(t_n-\tau_k,\gga_l))e^{-i2\pi \eta_j\tau_k} \nonumber \\
&= \sum_{k=0}^{N-1} x(\tau_k)\text{conj}(\mathcal{C}(g_\sigma)(t_n-\tau_k,\gga_l))e^{-i2\pi kj/N}
\label{eq:fct_fft}
\end{align}
The second form in \eqref{eq:fct_fft}  reduces computational complexity by leveraging FFT compatibility.
The total computational cost, amounting to \( LN\log_2 N \) operations, is  equivalent to that of the CT.

For $\mathcal{D}_{x}^{\xi^n g_\sigma}(t, \eta, \gga)$ with $n = 0, 1, 2$, we have:
\begin{align*}
\mathcal{D}_x^{\xi^n g_\sigma}(t,\eta,\gga) = \int_\mathbb{R} x(\tau) e^{-i2\pi  \eta \tau} \overline{ \mathcal{C}(\xi^n g_\sigma)(t-\tau,\gga)}  d\tau,
\end{align*}
where the corresponding window functions $\mathcal{C}(\xi^n g_\sigma)(t,\gga)$ are provided in Appendix \ref{section error functions} (specifically, see Eqs.~\eqref{Cg}, \eqref{Cxig}, and \eqref{Cxi2g}).

Building upon the transform $\mathcal{D}_{x}^{\xi^n g}(t, \eta, \gga)$, we first construct the discrete time-reassigned operators $\widehat{t}(t_n,\eta_j,\gga_l)$ and $\widehat{r}(t_n,\eta_j,\gga_l)$ using the relations defined in \eqref{def_gO1} and \eqref{def_gL1}. 
This allows us to obtain the discrete  synchrosqueezing representation $\mathbb{D}^g_x(\tau_p,\eta_j,u_q)$
\begin{align}
    \label{eq: synchrosqueezing_transform}
    \mathbb{D}^g_x(\tau_p,\eta_j, u_q) = \sum_{(n,l) \in O_{\epsilon}} \mathcal{D}_x^{g}(t_n,\eta_j,\gga_l),
\end{align}
where the selection set $O_{\epsilon}$ is defined by:
\begin{equation}
O_{\epsilon} := \left\{ (n, l) : 
\begin{aligned}
&|\widehat{t}(t_n,\eta_j,\gga_l) -\tau_p| \leq \tfrac{1}{2} \Delta t, \quad |\widehat{r}(t_n,\eta_j,\gga_l)- u_q| \leq \tfrac{1}{2} \Delta\gga, \\
&|\det(E_0)(t_n,\eta_j,\gga_l)| > \epsilon, 
\end{aligned}
\right\},
\end{equation}
where \( \det(E_0)(t_n,\eta_j,\gga_l) \) is the discrete representation of \( \det(E_0)(t,\eta,\gga) \), defined in \eqref{matrix_def}, and \( \epsilon \) denotes a predefined threshold.

\subsection{Numerical validation}

In this part, we will present numerical examples to verify the effectiveness of the proposed TSFCT.
First, we consider a multicomponent signal $x(t)$ comprising two generalized  frequency-domain chirp signals whose GD curves intersect in the time-frequency plane. The frequency-domain representation is given by:
\begin{equation}
\widehat{x}(\eta) = e^{-0.00002(\eta-256)^2} e^{-i2\pi  (0.0003\eta^2+0.1\eta)} + e^{-0.00003(\eta-256)^2} e^{-i2\pi  (0.0002\eta^2+0.356\eta)} \label{equation:example_signal}
\end{equation}
with $\eta \in [0, 512)$ Hz and $t \in [0, 0.5)$ s.
Fig.~\ref{figure:GDs and GDDs of signal $x(t)$} displays the actual GD and GDD profiles of signal $x(t)$.
A key feature is the intersection of the GD trajectories of the two modes  at \((t, \eta) = (0.2536\ \text{s}, 256\ \text{Hz})\) in the time-frequency plane.

\begin{figure}[H]
    \centering
    \begin{tabular}{cc}
        \subfloat[GDs of the signal $x(t)$]{%
            \includegraphics[width=0.32\textwidth]{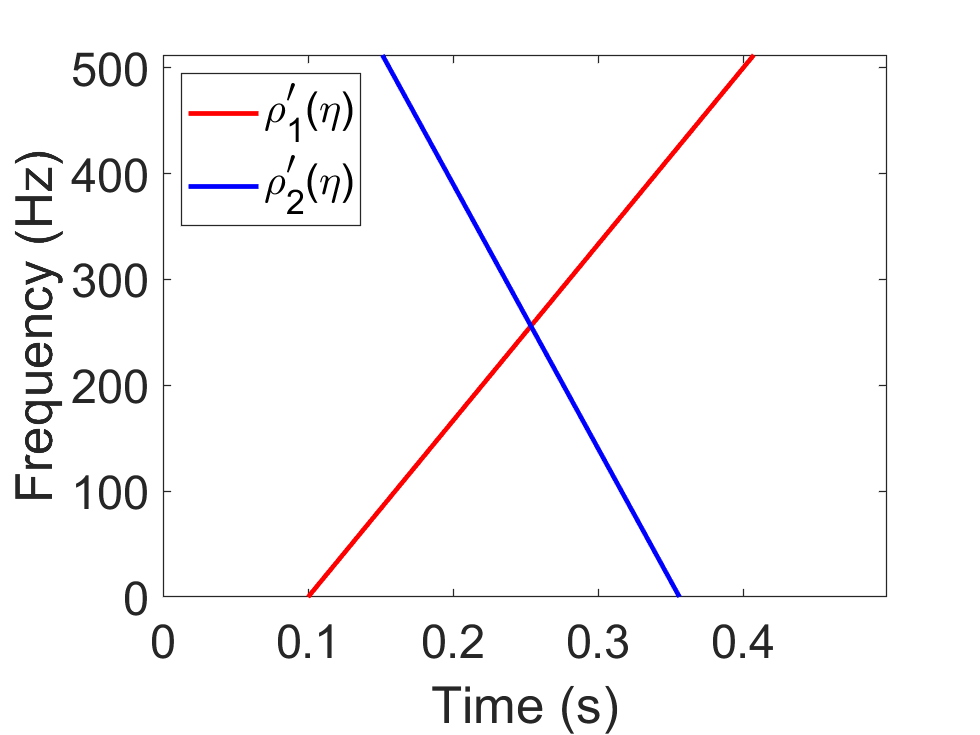}%
        } &
        \subfloat[GDDs of the signal $x(t)$]{%
            \includegraphics[width=0.32\textwidth]{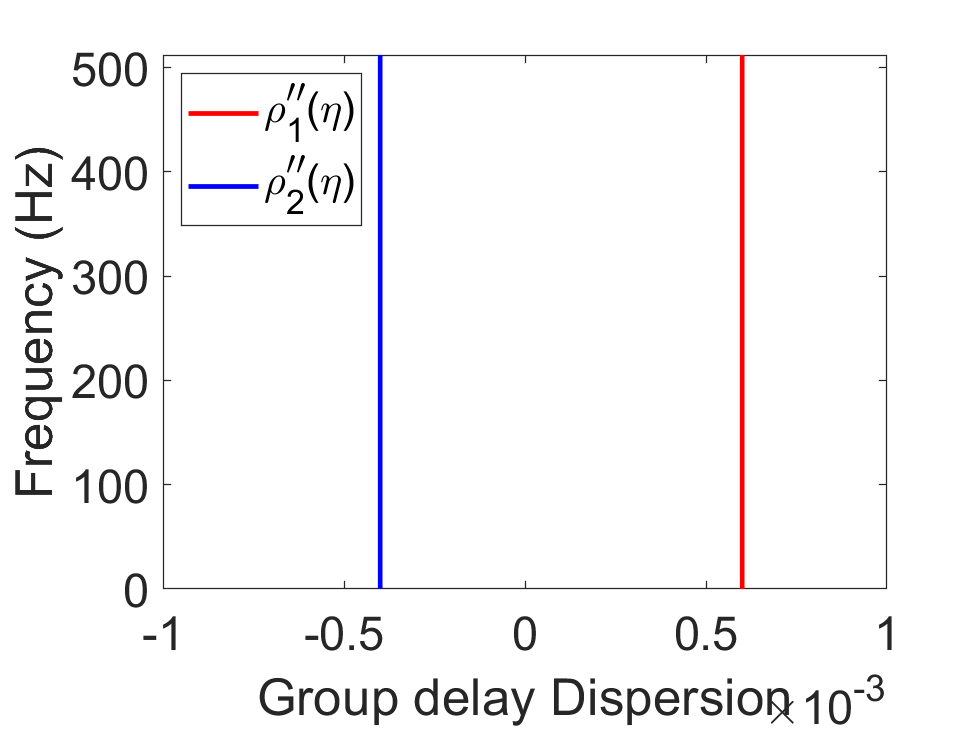}%
        } 
    \end{tabular}
    \caption{\small GDs and GDDs of the signal $\wh{x}(t)$}
    \label{figure:GDs and GDDs of   signal $x(t)$}
\end{figure}

\begin{figure}[H]
    \centering
    \begin{tabular}{cccc}
         \begin{subfigure}[t]{0.22\textwidth}
            \centering
            \resizebox{\linewidth}{!}{\includegraphics{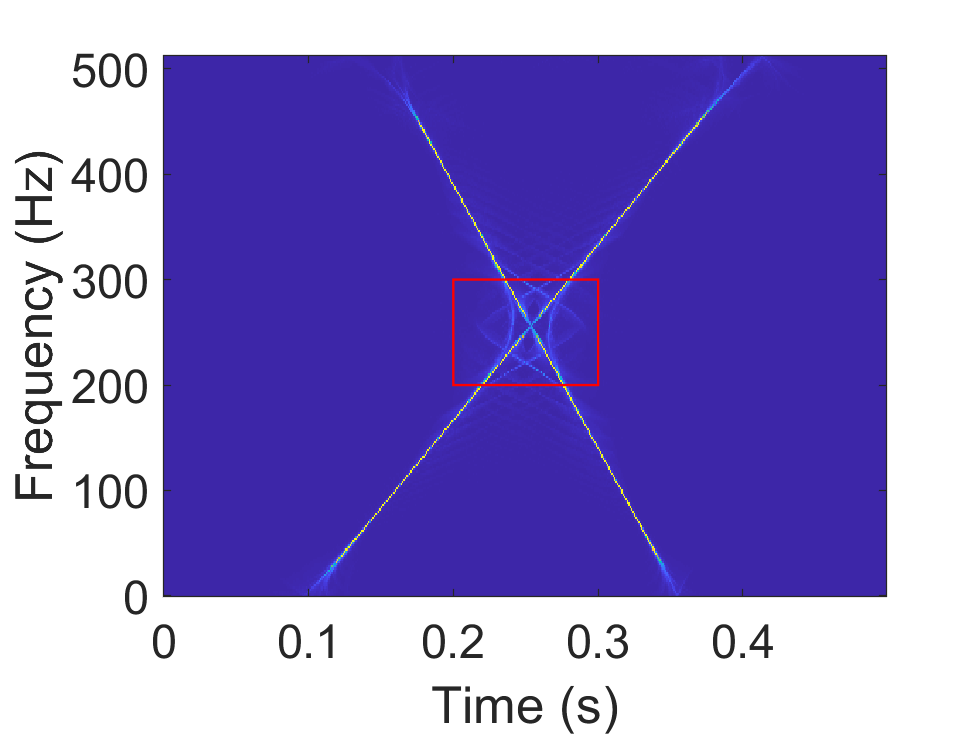}} 
        \end{subfigure} &
        \begin{subfigure}[t]{0.22\textwidth}
            \centering
            \resizebox{\linewidth}{!}{\includegraphics{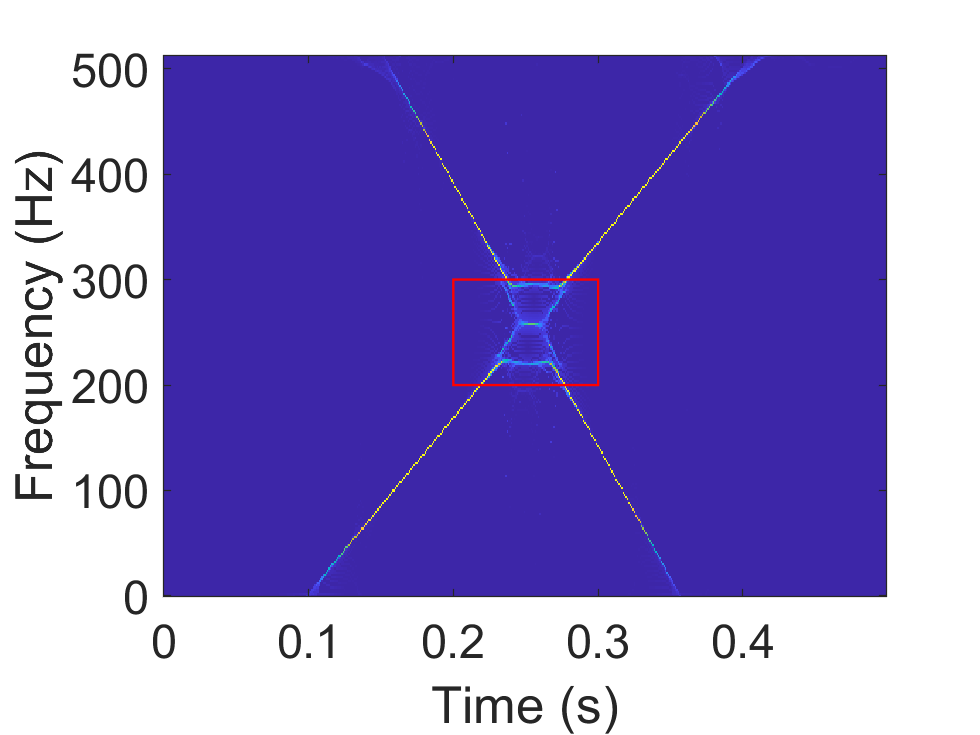}}
        \end{subfigure} &
        \begin{subfigure}[t]{0.22\textwidth}
            \centering
            \resizebox{\linewidth}{!}{\includegraphics{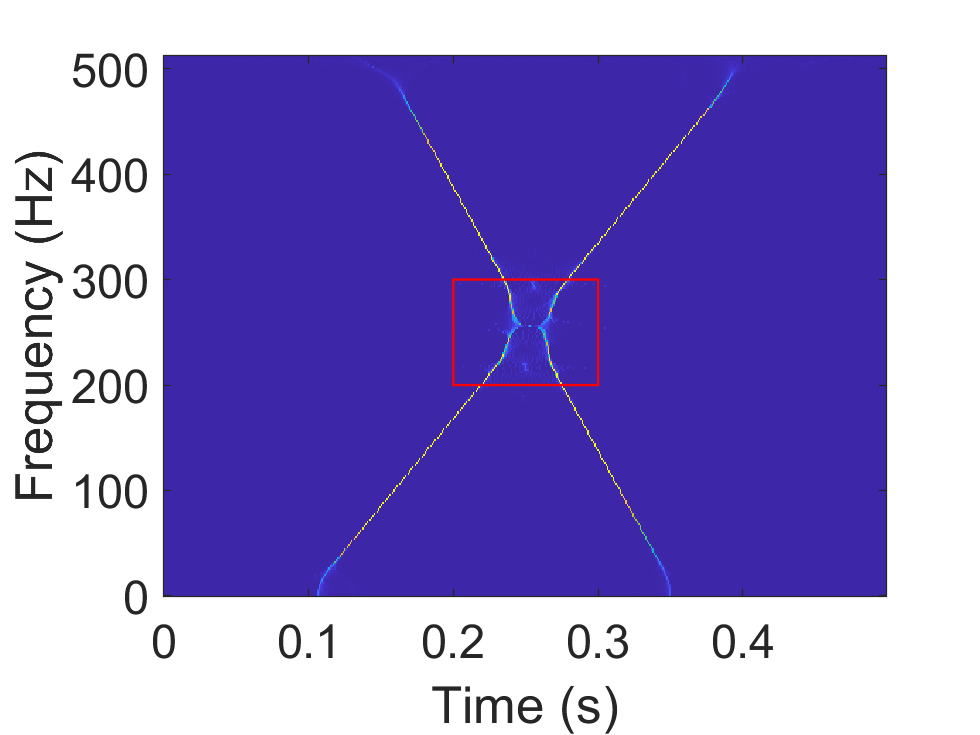}}
        \end{subfigure} &
        \begin{subfigure}[t]{0.22\textwidth}
            \centering
            \resizebox{\linewidth}{!}{\includegraphics{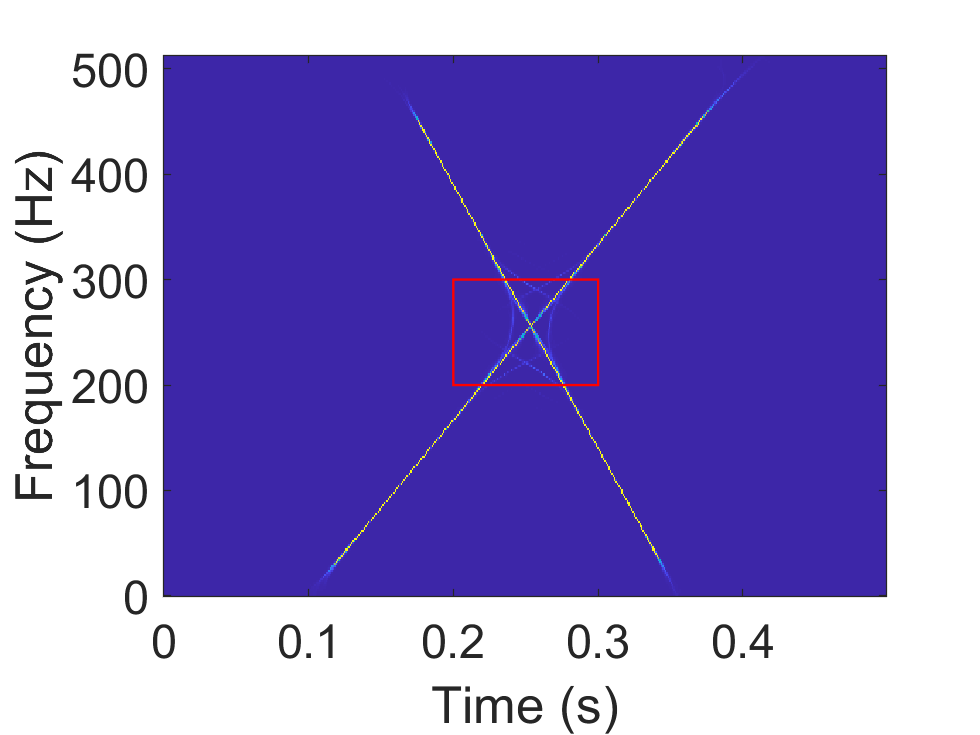}}
        \end{subfigure}  \\
         \begin{subfigure}[t]{0.22\textwidth}
            \centering
            \resizebox{\linewidth}{!}{\includegraphics{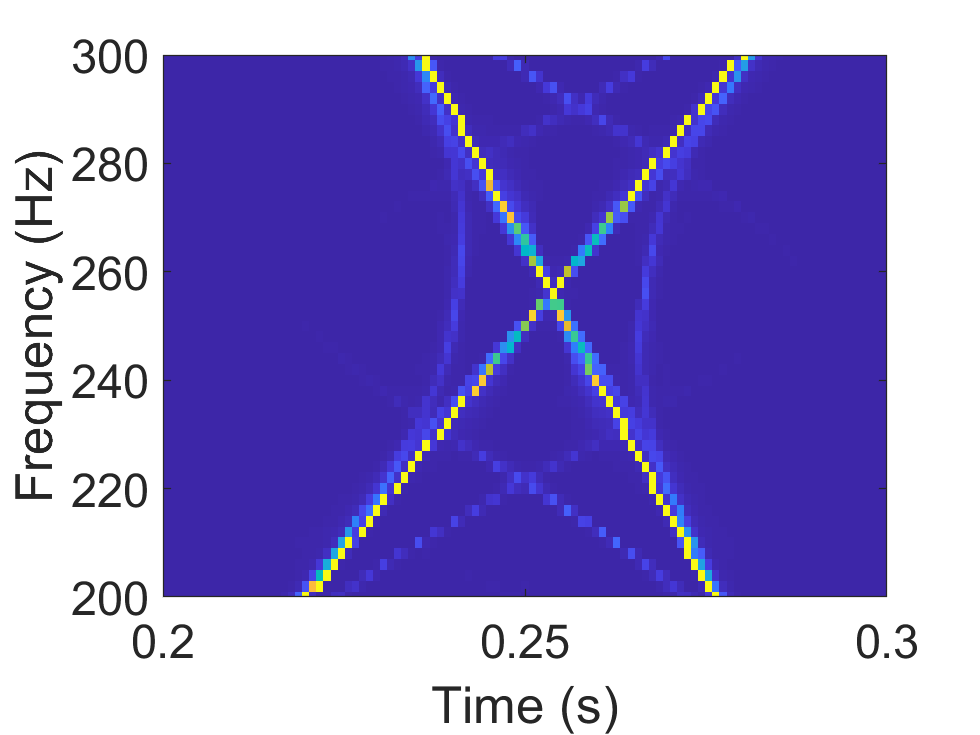}}    
        \end{subfigure}&
        \begin{subfigure}[t]{0.22\textwidth}
            \centering
            \resizebox{\linewidth}{!}{\includegraphics{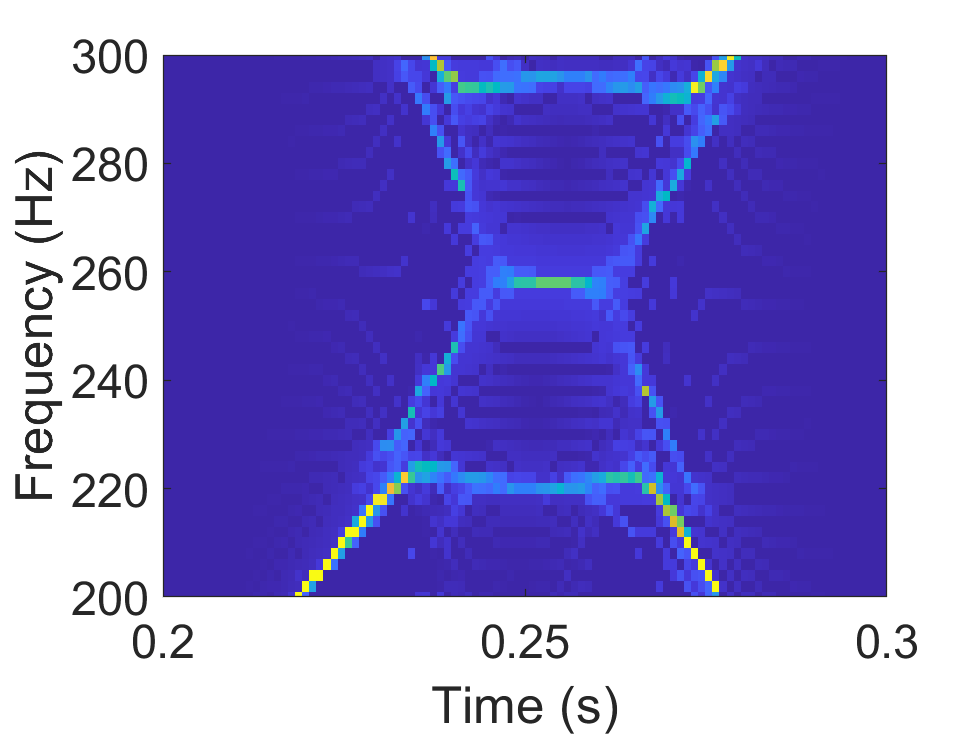}} 
        \end{subfigure} &
        \begin{subfigure}[t]{0.22\textwidth}
            \centering
            \resizebox{\linewidth}{!}{\includegraphics{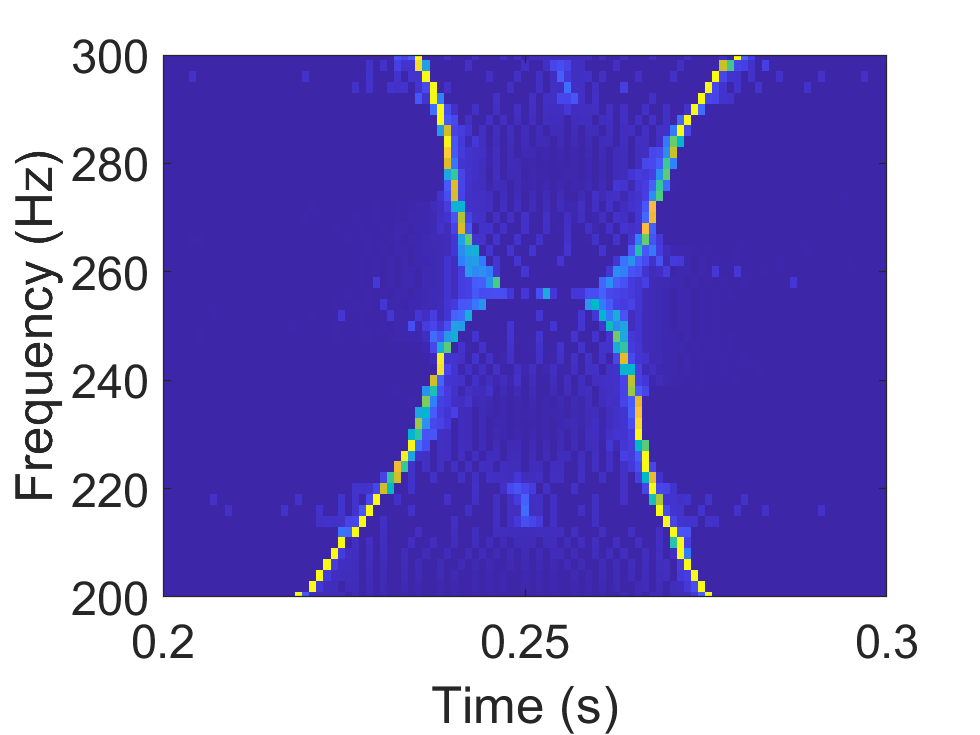}}    
        \end{subfigure} &
        \begin{subfigure}[t]{0.22\textwidth}
            \centering
            \resizebox{\linewidth}{!}{\includegraphics{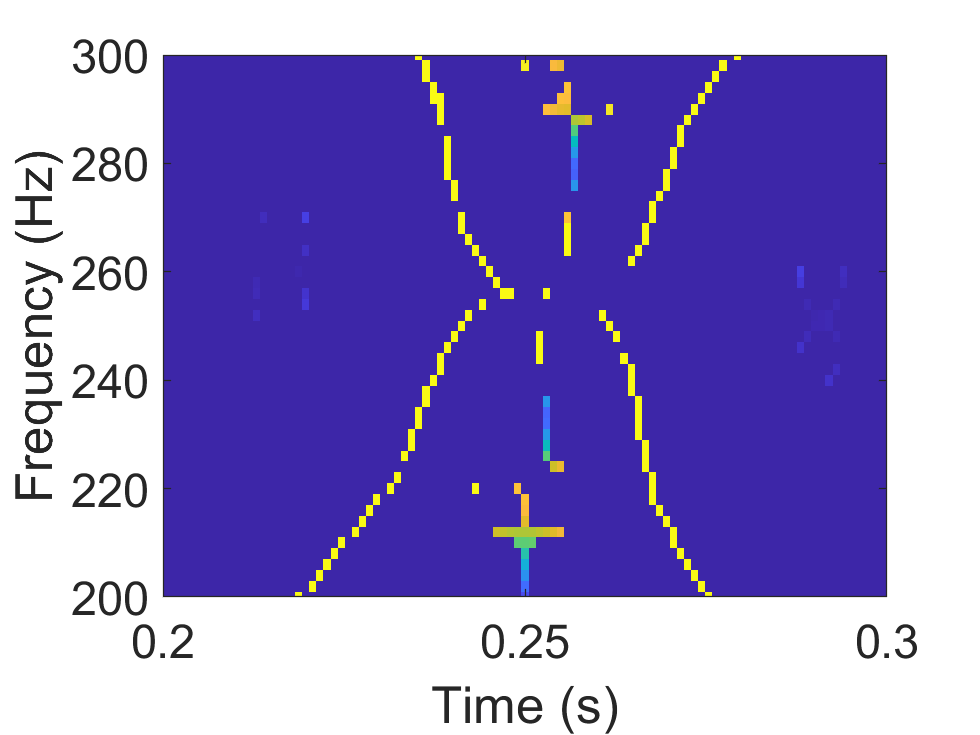}}    
        \end{subfigure}  \\        
    \end{tabular}
\caption{\small TFRs of $x(t)$. 
First row (from left to right): TSFCT, SST-2nd, TSST-2nd, TET-2nd; 
Second row: Corresponding local zoomed views of the TFRs in the first row.}
\end{figure}

Applying the Rényi entropy minimization method to  \eqref{opt_sigma}, the TSFCT optimal parameter $\sigma = 25$.
 For comparison purposes, we  also present the TFRs of several advanced time-frequency methods in Fig.~2,
including the second-order synchrosqueezing transform (SST-2nd) \cite{oberlin2017second},
the second-order time-reassigned synchrosqueezing transform (TSST-2nd) \cite{he2020gaussian}, and the second-order transient-extracting transform (TET-2nd) \cite{yu2021second}. Here we consider the TFR $\mathfrak{T}_{x}^{g}(\tau, \eta)$ generated by the TSFCT to be the following quantity obtained by integrating the squared modulus of the TF-GDD representation along the group delay dimension:
\begin{equation}
\mathfrak{T}_{x}^{g}(\tau, \eta) := \int_{\mathbb{R}} \left|\mathbb{D}_x^{g}(\tau,\eta,u) \right|^2 du.
\label{eq:tfr_definition}
\end{equation}

Most time-frequency methods operating in the time-frequency plane suffer from significant energy dispersion and spectral overlap near the intersection point.
In contrast, although the \(\mathfrak{T}_{x}^{g}(\tau, \eta)\) generated by the TSFCT shows some blurring around the intersection, 
it successfully captures the signal's time-frequency characteristics without apparent aliasing.


\begin{figure}[H]
    \centering
    \begin{tabular}{cccc}
        \begin{subfigure}[t]{0.22\textwidth}
            \centering
            \resizebox{\linewidth}{!}{\includegraphics{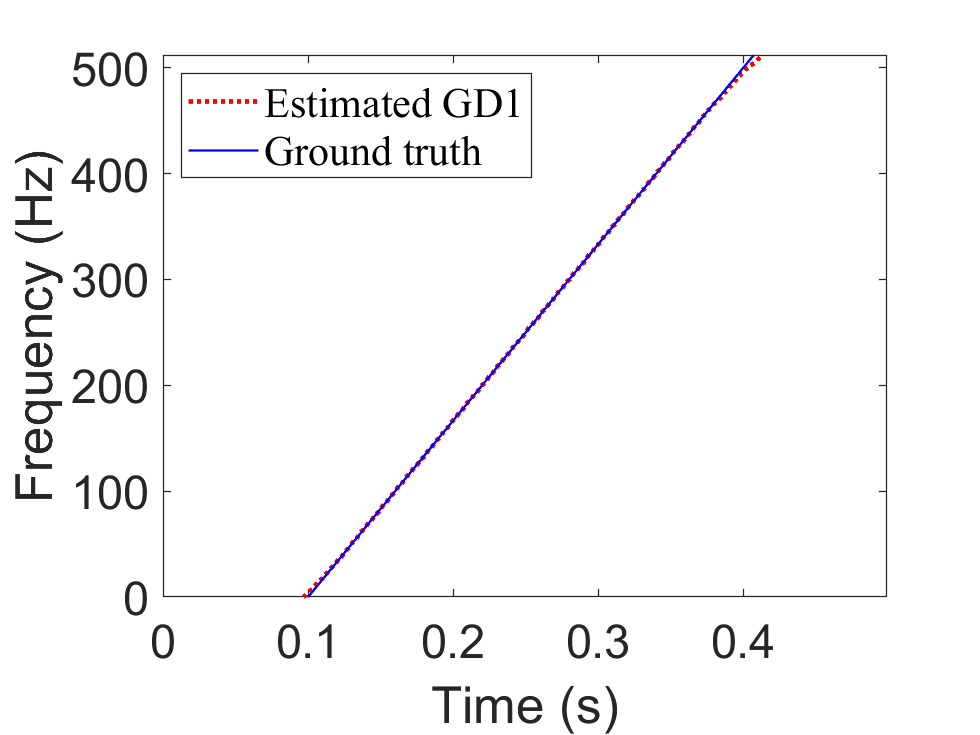}}
        \end{subfigure} &
        \begin{subfigure}[t]{0.22\textwidth}
            \centering
            \resizebox{\linewidth}{!}{\includegraphics{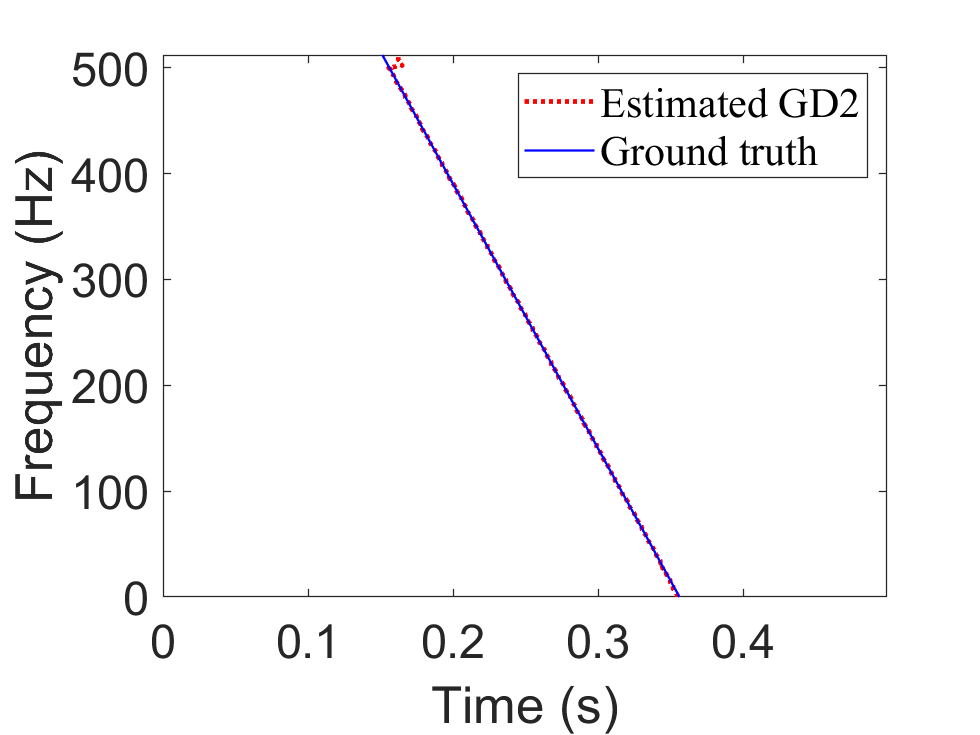}}
        \end{subfigure} &
        \begin{subfigure}[t]{0.22\textwidth}
            \centering
            \resizebox{\linewidth}{!}{\includegraphics{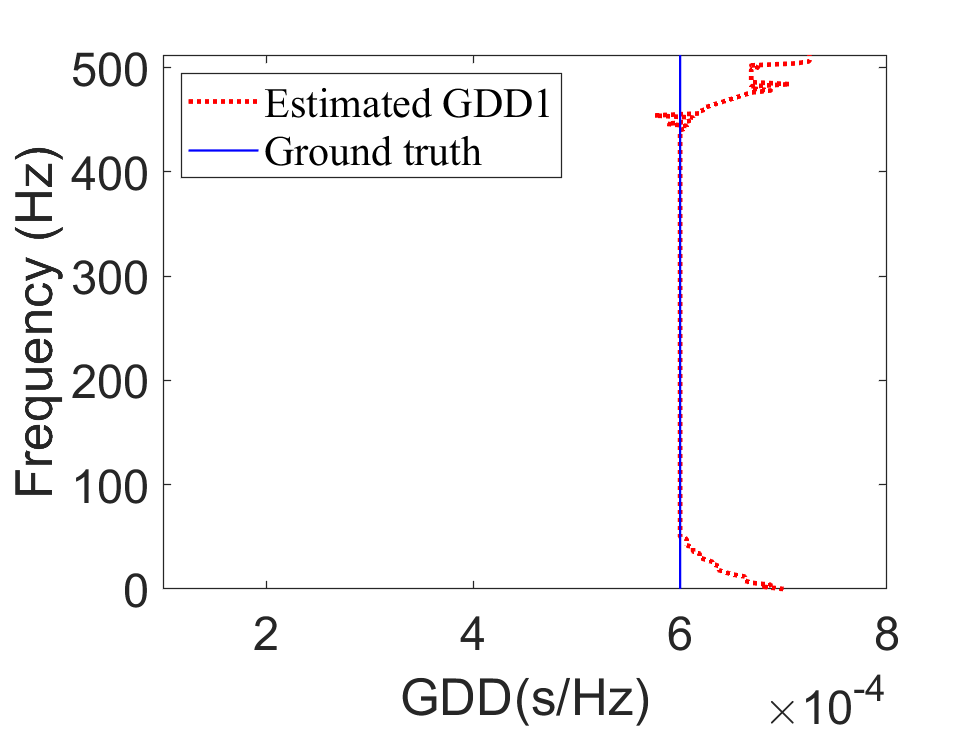}}
        \end{subfigure} &
        \begin{subfigure}[t]{0.22\textwidth}
            \centering
            \resizebox{\linewidth}{!}{\includegraphics{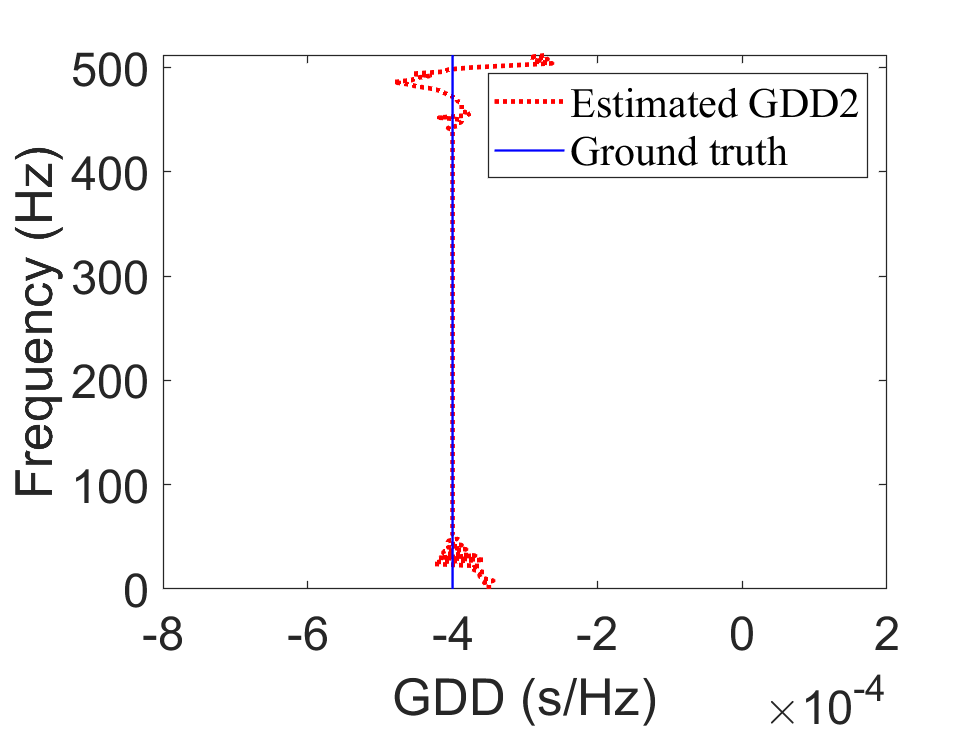}} 
        \end{subfigure} \\   
    \end{tabular}
\caption{GD and GDD estimation of $x(t)$ by  TSFCT.}
\label{figure:gdd_estimation}
\end{figure}


The TSFCT accurately captures both the GDs and GDDs of $x(t)$, except at the boundaries (see Fig.~\ref{figure:gdd_estimation}). These estimates are then used to reconstruct the components via the FGSSO scheme. The resulting recovery errors for both the retrieved modes $\widehat{x}_{1}(\eta)$, $\widehat{x}_{2}(\eta)$ and their time-domain counterparts $x_{1}(t)$, $x_{2}(t)$ are presented in Fig.~\ref{figure:recover_errors_of_modes$x(t)$}.
The FGSSO scheme successfully recovers the frequency-domain signals $\widehat{y}_1(\eta)$ and $\widehat{y}_2(\eta)$, except near the boundaries. However, these boundary errors may propagate through the inverse Fourier transform, leading to minor errors in the reconstructed time-domain signals.

\begin{figure}[H]
    \centering
    \begin{subfigure}[t]{0.22\textwidth}
        \centering
        \includegraphics[width=\linewidth]{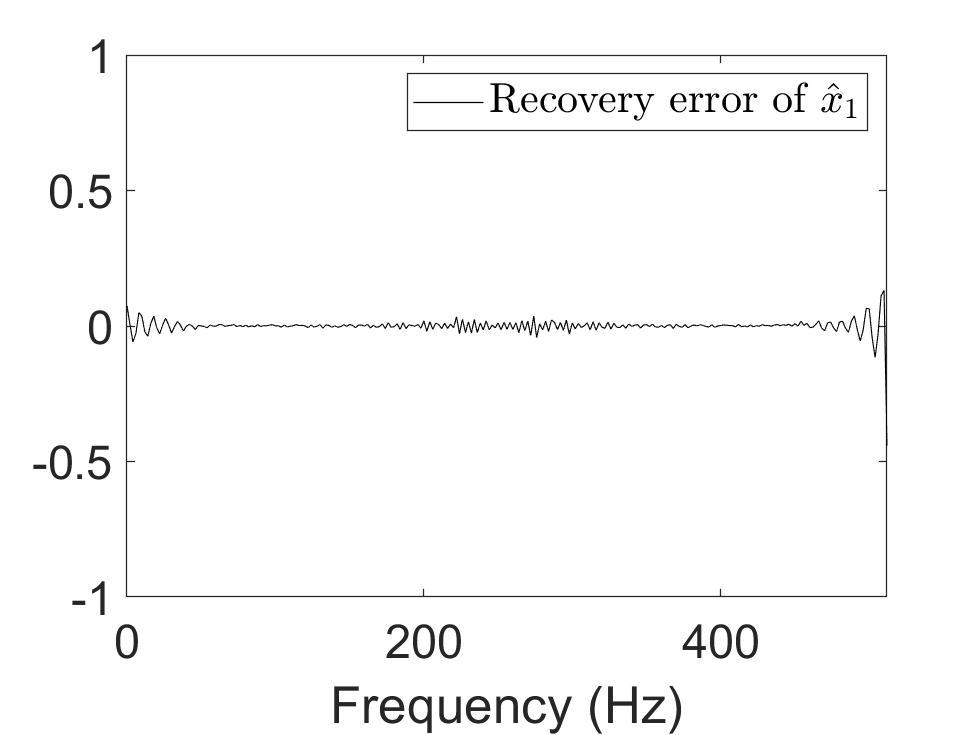}
        \caption{}
    \end{subfigure}
    \hfill
    \begin{subfigure}[t]{0.22\textwidth}
        \centering
        \includegraphics[width=\linewidth]{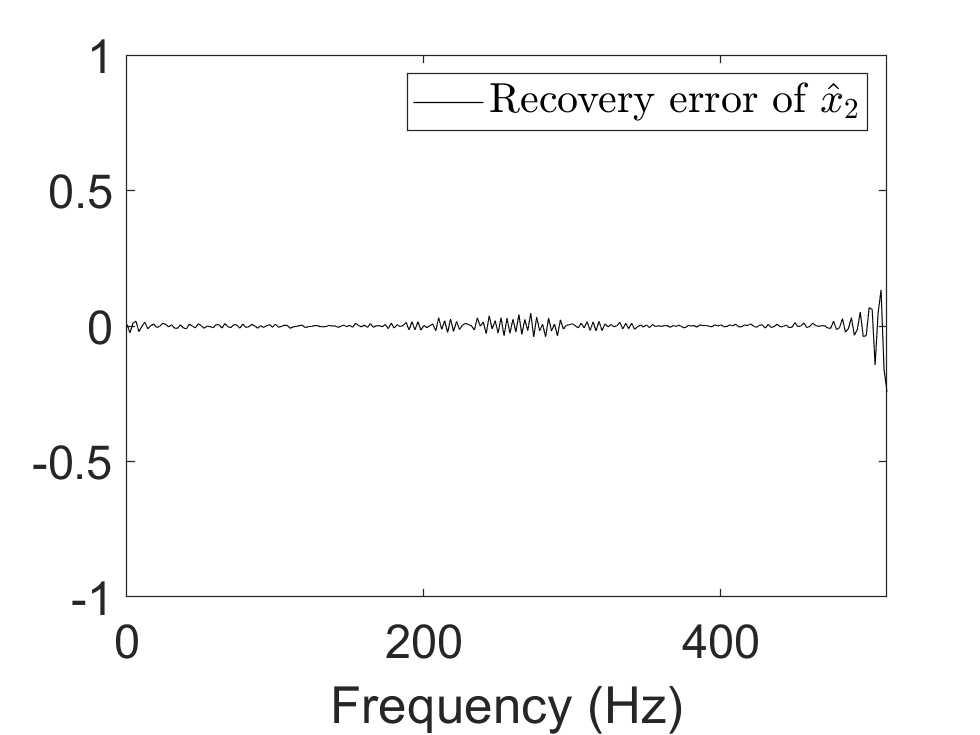}
        \caption{}
    \end{subfigure}
    \hfill
    \begin{subfigure}[t]{0.22\textwidth}
        \centering
        \includegraphics[width=\linewidth]{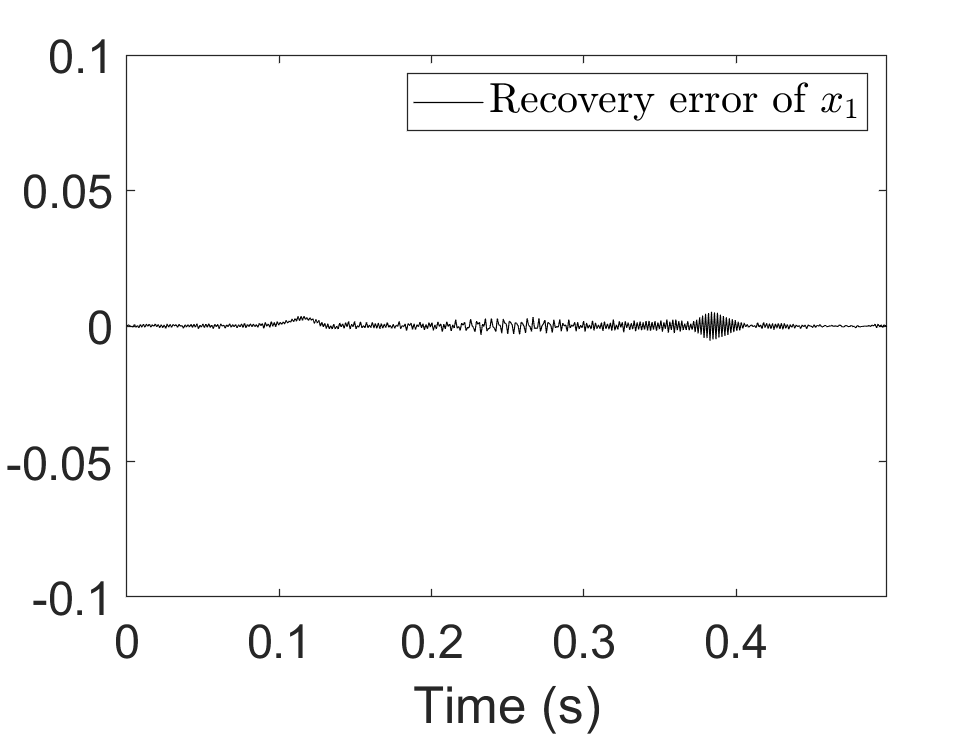}
        \caption{}
    \end{subfigure}
    \hfill
    \begin{subfigure}[t]{0.22\textwidth}
        \centering
        \includegraphics[width=\linewidth]{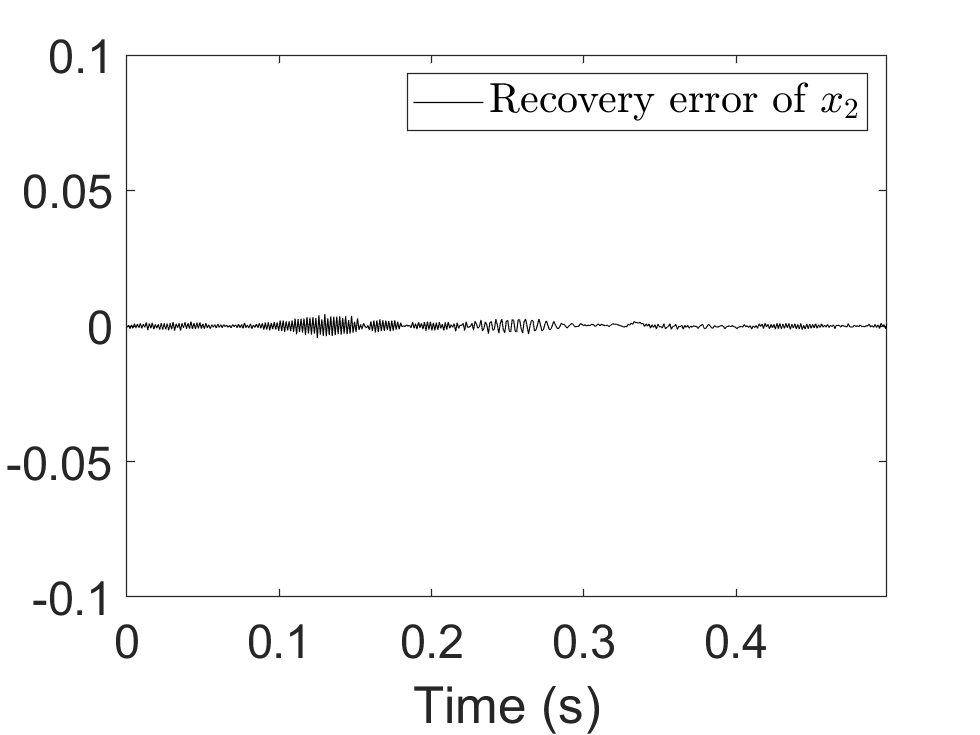}
        \caption{}
    \end{subfigure}
    \caption{Recovery errors(real part) of $y(t)$ using the frequency-domain SSO algorithm. (a) Recovery error of $\widehat{x}_1(\eta)$; (b) recovery error of $\widehat{x}_2(\eta)$; (c) recovery error of $x_1(t)$; (d) recovery error of $x_2(t)$.}
    \label{figure:recover_errors_of_modes$x(t)$}
\end{figure}

Moreover, the reconstruction error is directly related to the inverse coefficient matrix $A^{-1}$ (defined in \eqref{recover}), as established in Theorem~\ref{theorem_recover}. 
This dependency motivates the analysis of the infinity norm $\|A^{-1}\|_\infty$. 
To further understand the stability of the linear system in \eqref{recover}, we also examine 
the 2-norm condition number \cite{golub2013matrix} $\kappa_2(A) = \|A\|_2 \|A^{-1}\|_2$.
These two quantities are shown in Fig.~\ref{fig:inf_norm_inverse_matrix}.

\begin{figure}[H]
    \centering
    \begin{tabular}{cc}
        \includegraphics[width=0.32\textwidth]{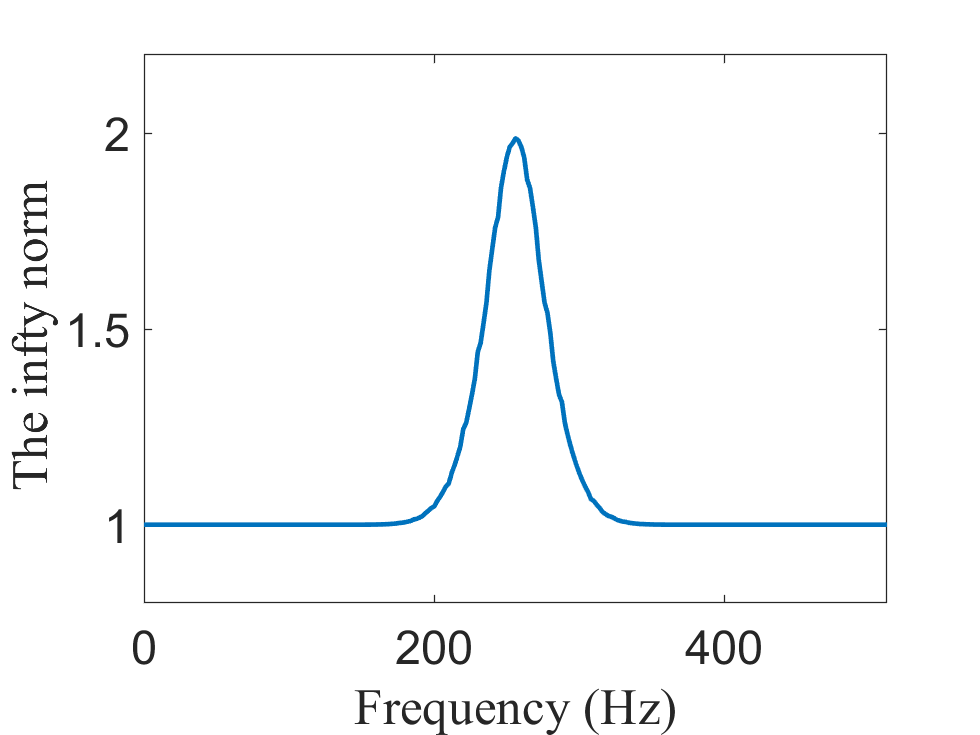} &
        \includegraphics[width=0.32\textwidth]{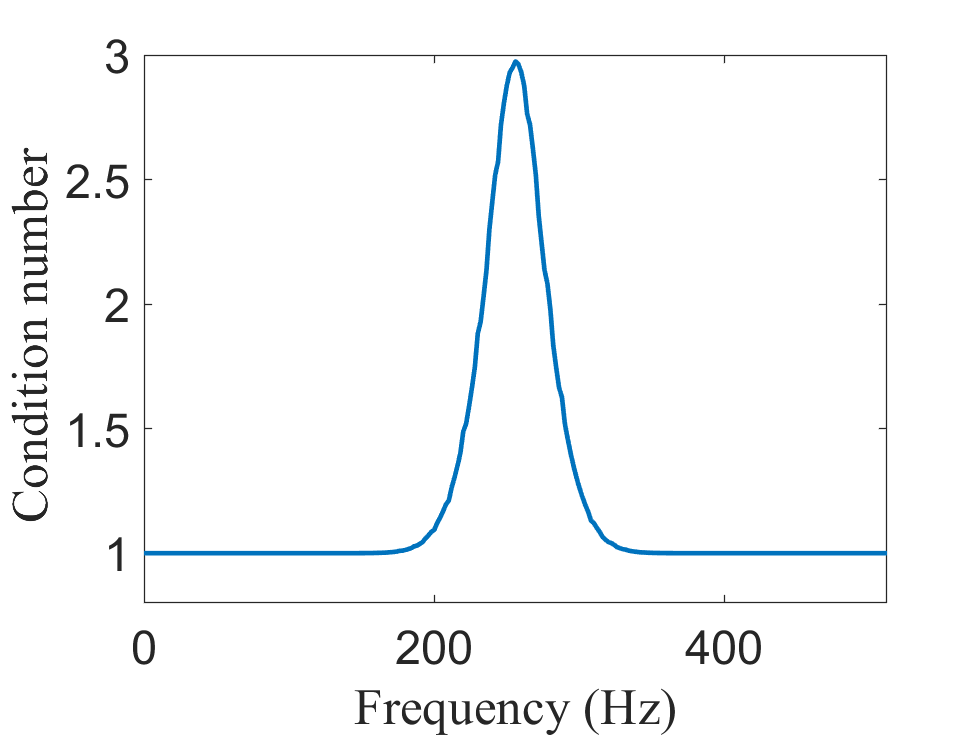}
    \end{tabular}
    \caption{The infinity norm of the inverse coefficient matrix (left) and 2-norm  condition number (right).}
    \label{fig:inf_norm_inverse_matrix}
\end{figure}
It is evident that the infinity norm of the inverse coefficient matrix is significantly smaller when $\eta$ is far from the frequency at which the GDs intersect ($\eta = 256$ Hz). As $\eta$ approaches this intersection frequency, the norm increases a little bit. 
This trend is also reflected in Panels (a) and (b) of Fig.~\ref{figure:recover_errors_of_modes$x(t)$}, where the reconstruction error peaks at the intersection point (except at the boundaries).
Besides, the condition number $\kappa_2(A)$ also exhibits a similar pattern, indicating that the linear system in \eqref{recover} becomes not as stable  as $\eta$ approaches the intersection frequency, but overall the linear system is quite stable since the condition number is quite small for any $\eta$.

To further validate our method, we consider a more complex scenario involving a multicomponent signal $y(t)$ with crossing modes that possess distinct GD and GDD properties. 
Let $y(t)$ be a signal whose Fourier transform is given by:
\begin{equation}
\widehat{y}(\eta) = \widehat{y}_1(\eta) + \widehat{y}_2(\eta) = e^{-0.00032(\eta-256)^2} e^{-i2\pi \theta_1(\eta)} + e^{-0.00025(\eta-256)^2} e^{-i2\pi \theta_2(\eta)},
\end{equation}
where $\eta \in [0, 512)$ Hz and $t \in [0, 0.5)$ seconds.
The phase spectrum  functions are defined as:
\begin{align*}
\theta_1(\eta) = -\frac{51.2}{\pi} \sin\left(\frac{\pi \eta}{256}\right) + 0.25\eta, \quad
\theta_2(\eta) = \frac{51.2}{\pi} \cos\left(\frac{\pi \eta}{256}\right) + 0.25\eta.
\end{align*}
The GDs given by the first derivatives, are:
\begin{align*}
\theta'_1(\eta) = -0.2 \cos\left(\frac{\pi \eta}{256}\right) + 0.25, \quad
\theta'_2(\eta) = 0.2 \cos\left(\frac{\pi \eta}{256}\right) + 0.25.
\end{align*}
The GDDs given by the second derivatives, are:
\begin{align*}
\theta''_1(\eta) = \frac{\pi}{1280} \sin\left(\frac{\pi \eta}{256}\right), \quad
\theta''_2(\eta) = -\frac{\pi}{1280} \sin\left(\frac{\pi \eta}{256}\right).
\end{align*}
The GD curves of the two modes intersect at $t = 0.25$ s, while their GDD curves crossing at zero, as shown in Fig.~\ref{figure:GDs and GDDs of signal $y(t)$}.
\begin{figure}[H]
    \centering
    \begin{tabular}{cc}
        \subfloat[GDs of the signal $y(t)$]{%
            \includegraphics[width=0.32\textwidth]{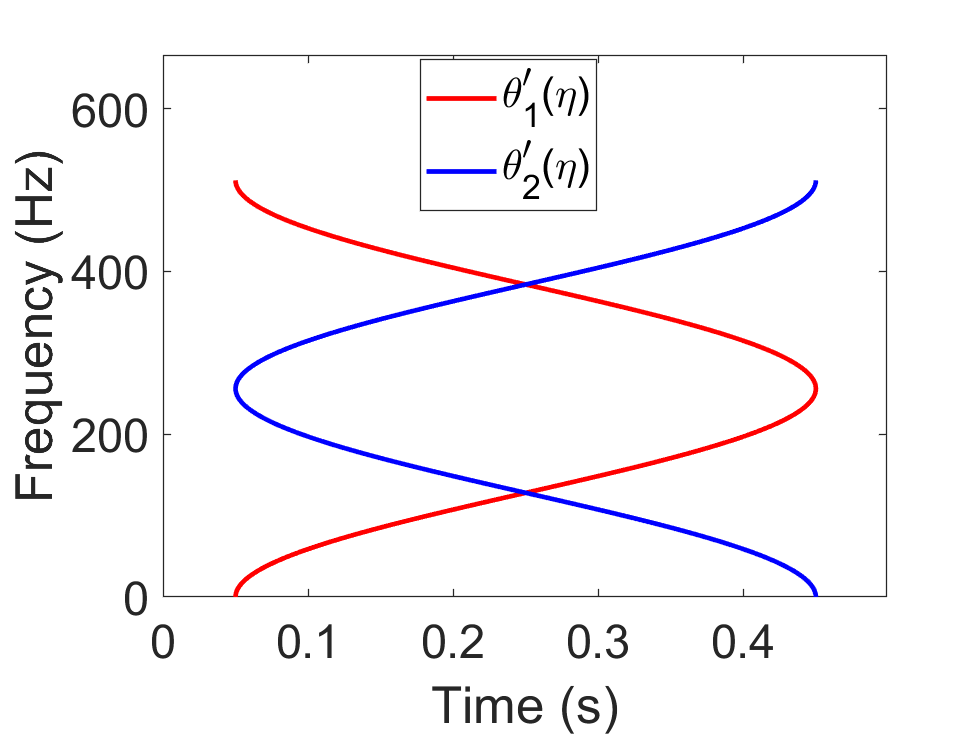}%
        } &
        \subfloat[GDDs of the signal $y(t)$]{%
            \includegraphics[width=0.32\textwidth]{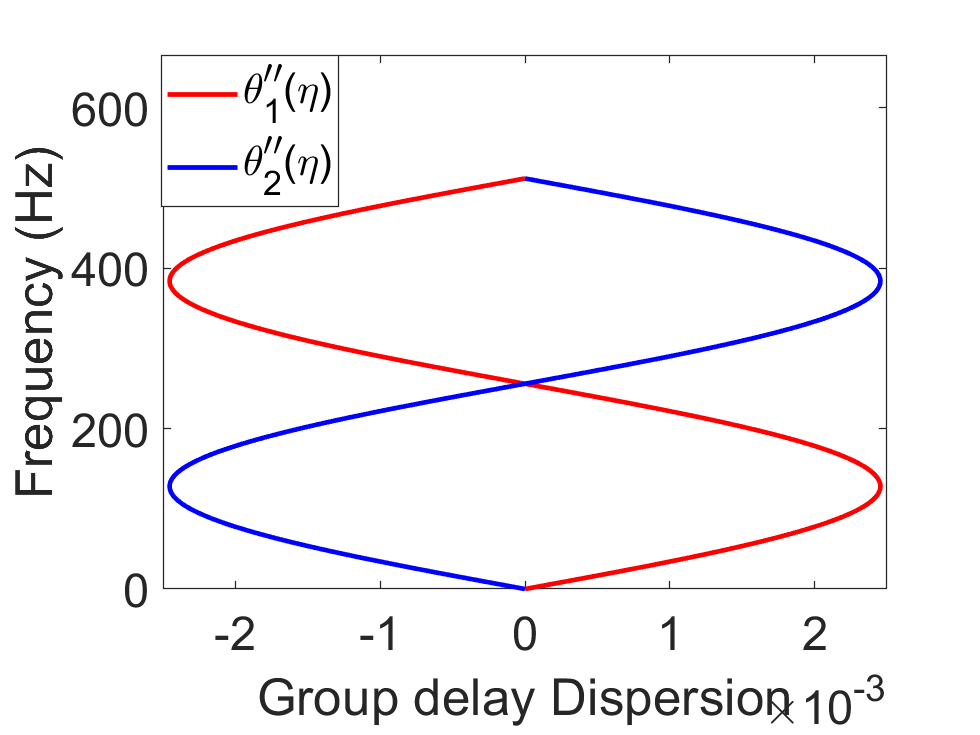}%
        } \\
    \end{tabular}
    \caption{\small GDs and GDDs of the signal $\wh{y}(t)$}
    \label{figure:GDs and GDDs of   signal $y(t)$}
\end{figure}

Indeed, as a three-dimensional TFC representation, the synchrosqueezing chirplet transform (SCT)~\cite{chen2023disentangling} could also achieve satisfactory GDD estimation for the signal in \eqref{equation:example_signal}; however, it fails to accurately estimate GDD values for the signal $y(t)$.
The key issue arises when $\eta = 256$, at this point $\theta''_1(\eta) = \theta''_2(\eta) = 0$, corresponding to infinite instantaneous chirprates.
In practical applications, \(\lambda\) is usually set within a reasonable range; that is, there is no such a \(\lambda\) that exactly matches the current chirprate.
The CT-based  and WCT-based methods, despite generating well-resolved TFC representations, exhibit the same limitation. 
In contrast, the FCT and TSFCT methods remain unaffected by this constraint.

Next, we elaborate on this point in greater detail. As noted in the introduction, a significant amount of recent literature has focused on methods based on the CT; moreover, these CT-based approaches can be naturally extended within the FCT framework. Therefore, we adopt the SCT as the primary benchmark for evaluating the proposed TSFCT.
To ensure a fair comparison, all methods utilize the same Gaussian window function as given in \eqref{Gaussian_function}, with the window parameter $\sigma$ optimized via Rényi entropy minimization. Specifically, during this minimization process (as defined in \eqref{def_renyi_entropy_spec}), the SCT replaces the TF-GDD representation $\mathcal{D}_{x}^{g}(t,\eta,\lambda)$ with ${Q}_{x}^{g}(t,\eta,\lambda)$. 
For consistency, the order parameter of the Rényi entropy for CT is also set to $\ell = 2.5$, and the frequency and chirprate bin sizes are kept identical with FCT.
The optimal $\sigma$ values obtained from the Rényi entropy minimization are 17.1 for the TSFCT and 0.007 for the SCT.

Fig.~\ref{cross-sections of $y(t)$} displays the cross-sections of $\left|\mathcal{D}_{x}^{g}(t, \eta, \gga)\right|$  and $\left|{Q}_{x}^{g}(t, \eta, \gl)\right|$  at the fixed frequency $\eta = 256$ Hz, where the GDDs of both signal modes vanish, i.e., $\theta''_1(\eta) = \theta''_2(\eta) = 0$.
Panel (a) shows that the FCT representation exhibits two distinct peaks around $(0.05, 0)$ and $(0.45, 0)$. In contrast, Panel (b) reveals that the CT produces two broad energy bands centered at $t = 0.05$ and $t = 0.45$, making it difficult to accurately identify the true chirprate locations.
\begin{figure}[H]
    \centering
    \begin{tabular}{cc}
        \subfloat[FCT \(\left|\mathcal{D}_{x}^{g}(t, 256, \gga)\right|\)]
        {%
            \includegraphics[width=0.32\textwidth]{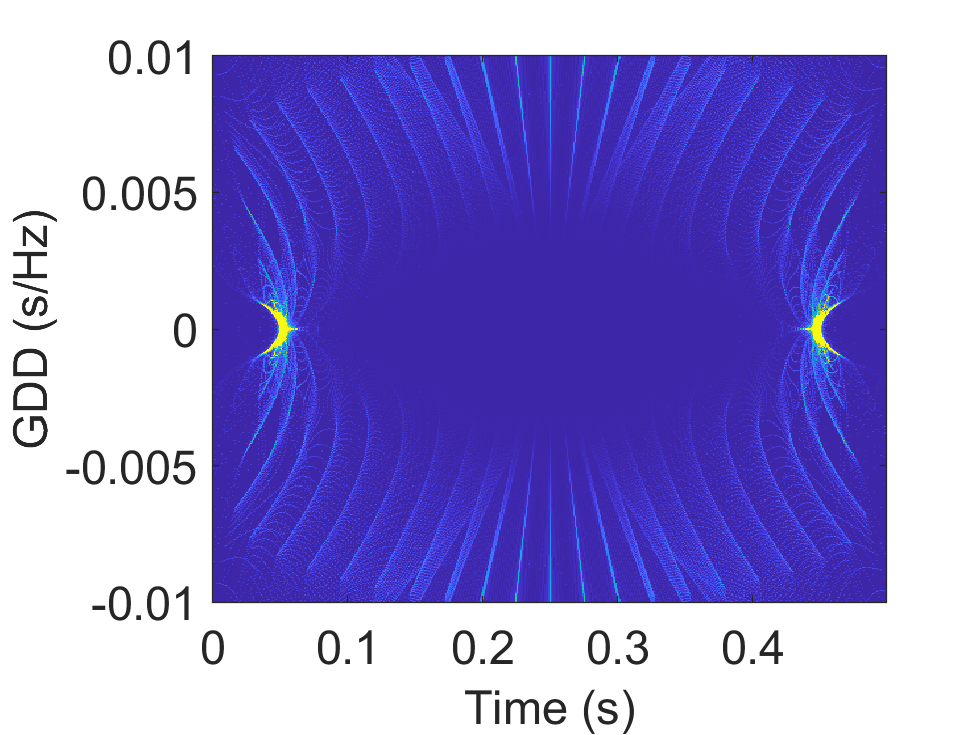}%
        } &
        \subfloat[ CT \(\left|{Q}_{x}^{g}(t, 256, \gga)\right|\)]
        {%
            \includegraphics[width=0.32\textwidth]{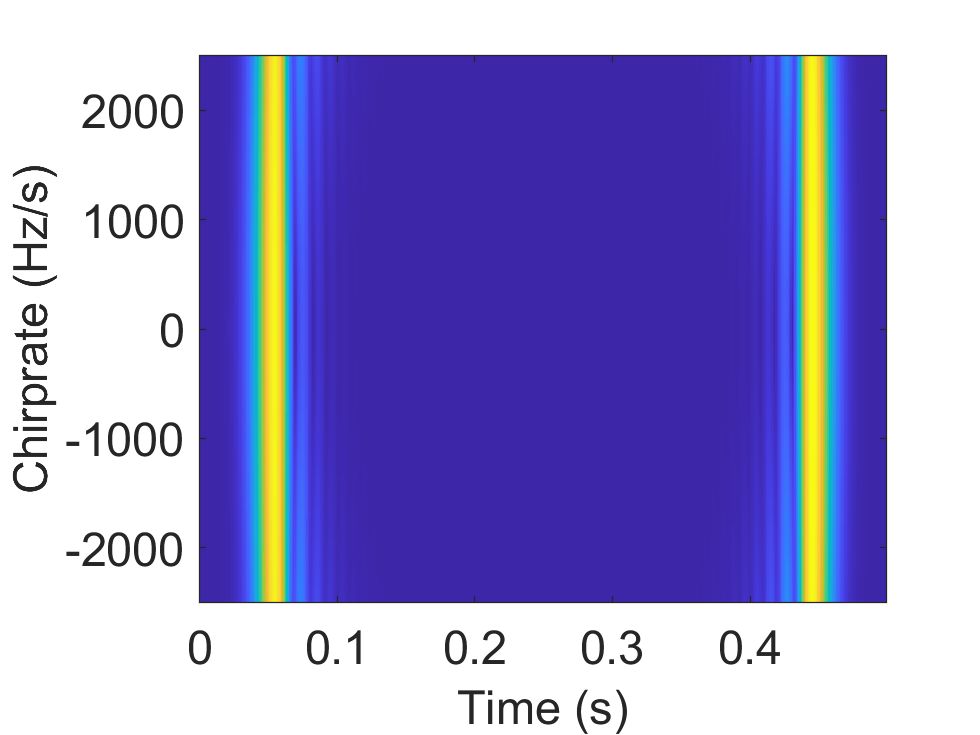}%
        } \\
    \end{tabular}
    \caption{\small Cross-sections of FCT $\mathcal{D}_{x}^{g}(t, \eta, \gga)$  and  CT ${Q}_{x}^{g}(t, \eta, \gga)$ at \(\eta=256\) }
    \label{cross-sections of $y(t)$}
\end{figure}

Subsequently, the three-dimensional views of the TSFCT magnitude  and the SCT magnitude  are plotted in Fig.~\ref{three-dimensional views of the TSFCT and SCT}.
Two continuous curves are observed in Panel (a). In contrast, both curves in Panel (b) exhibit a clear break at 
\(\eta=256\) Hz, which complicates the accurate extraction of the ridge curves.

\begin{figure}[H]
    \centering
    \begin{tabular}{cc}
        \subfloat[TSFCT magnitude]
        {%
            \includegraphics[width=0.32\textwidth]{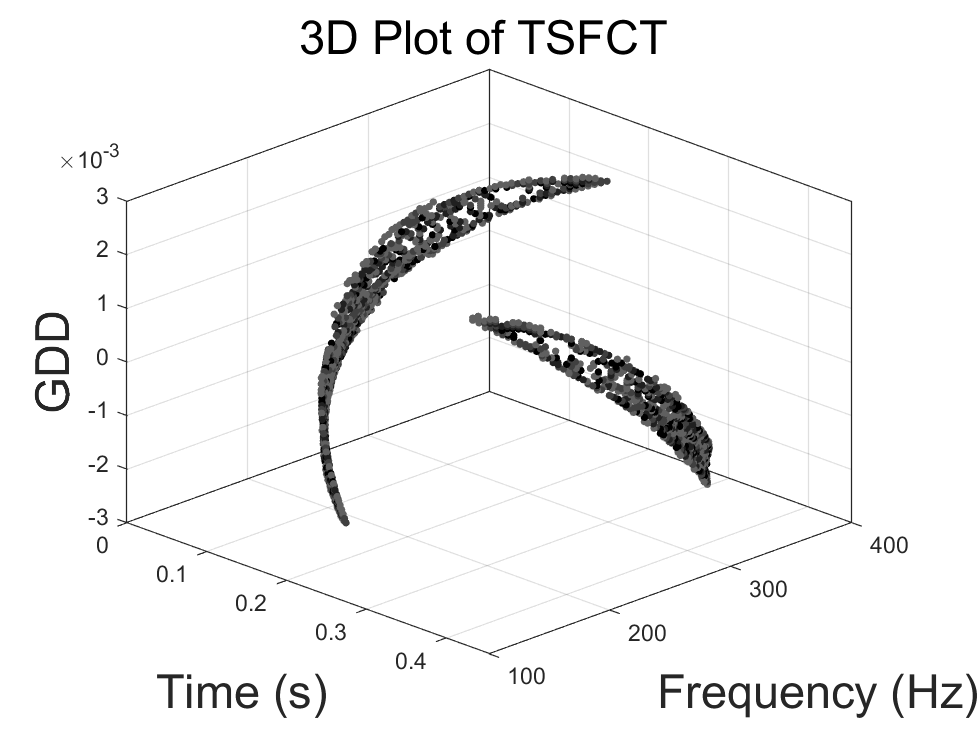}%
        } &
        \subfloat[SCT  magnitude]
        {%
            \includegraphics[width=0.32\textwidth]{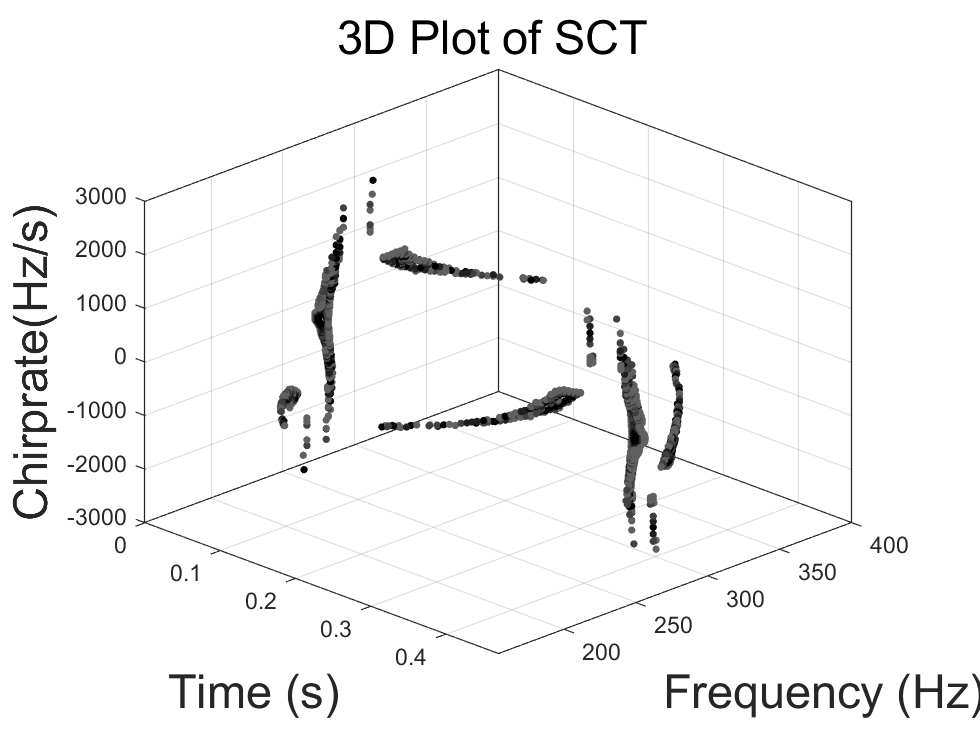}%
        } \\
    \end{tabular}
    \caption{\small TSFCT magnitude   and  SCT magnitude }
    \label{three-dimensional views of the TSFCT and SCT}
\end{figure}

Fig.~\ref{figure:GDs and GDDs estimation of $y(t)$} shows the GD and GDD estimation results of signal $y(t)$ by TSFCT. Except at the boundaries, the estimated GDs are virtually error-free. As for the GDD estimation, only minor deviations from the ground truth are observed, which fully demonstrates the capability of the TSFCT.
\begin{figure}[H]
    \centering
    \begin{tabular}{cccc}
        \begin{subfigure}[t]{0.22\textwidth}
            \centering
            \resizebox{\linewidth}{!}{\includegraphics{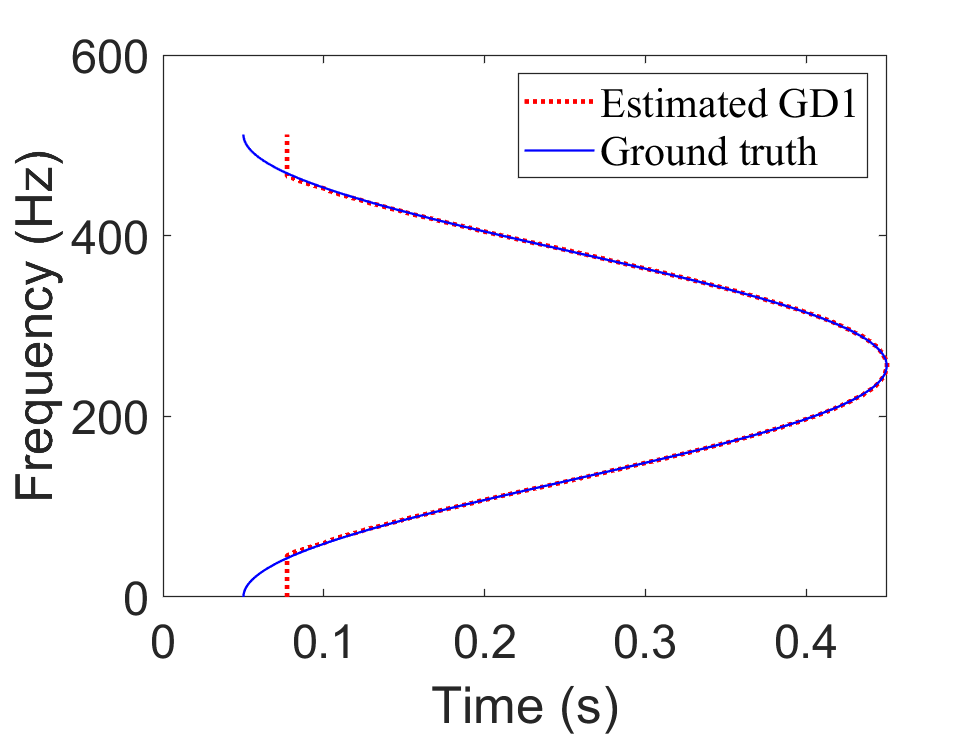}}
        \end{subfigure} &
        \begin{subfigure}[t]{0.22\textwidth}
            \centering
            \resizebox{\linewidth}{!}{\includegraphics{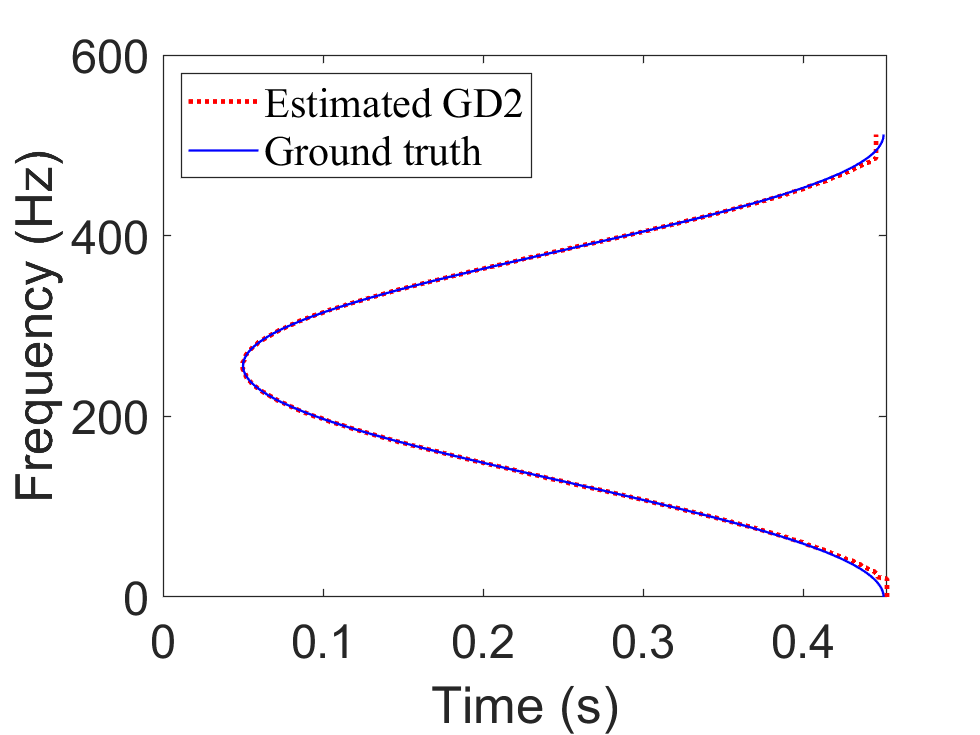}}
        \end{subfigure} &
        \begin{subfigure}[t]{0.22\textwidth}
            \centering
            \resizebox{\linewidth}{!}{\includegraphics{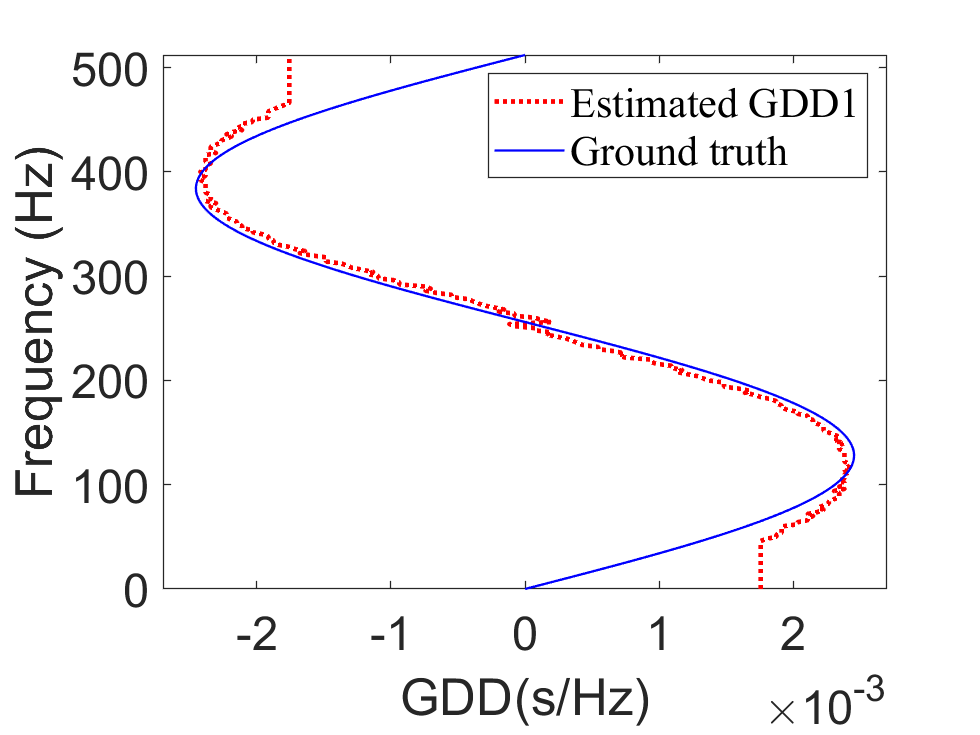}}
        \end{subfigure} &
        \begin{subfigure}[t]{0.22\textwidth}
            \centering
            \resizebox{\linewidth}{!}{\includegraphics{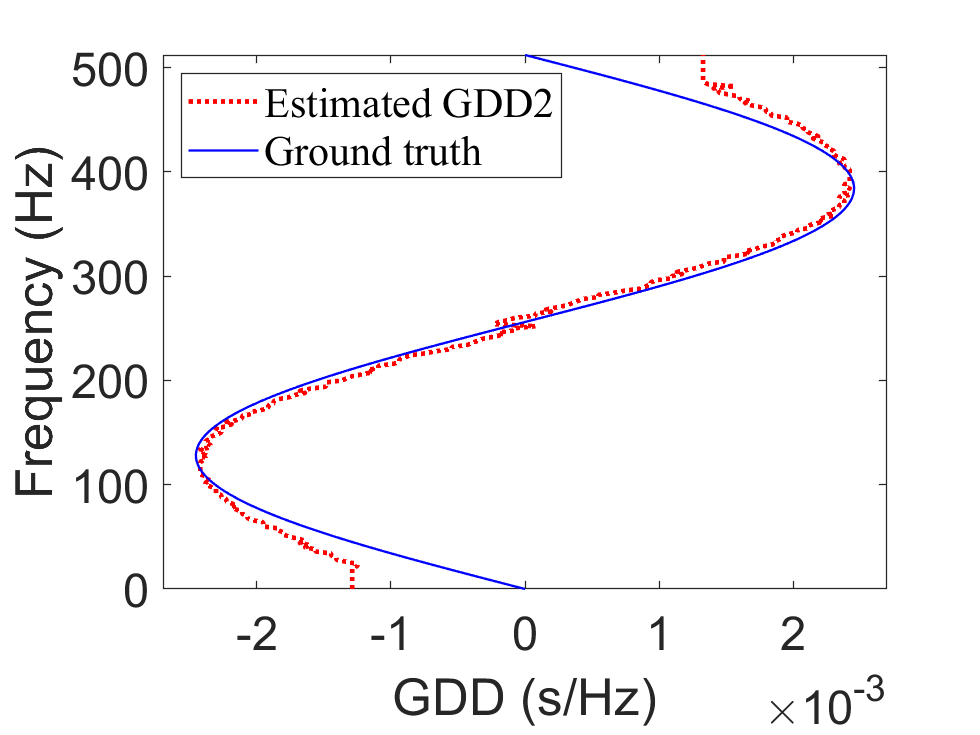}} 
        \end{subfigure}
    \end{tabular} 
    \small \caption{Estimation of GDs and GDDs for \(y(t)\) by TSFCT.} 
      \label{figure:GDs and GDDs estimation of $y(t)$}
\end{figure}

Finally, Fig.~\ref{figure:recover_errors_of_modes} presents the recovery errors of the modes using the FGSSO algorithm. The estimated GDs and GDDs ( Fig.~\ref{figure:GDs and GDDs estimation of $y(t)$} ), extracted from the TSFCT, are close to the ground truth.
The first row clearly reveals that while the FGSSO algorithm effectively recovers the modes $\widehat{y}_k(\eta)$ and ${y}_k(t)$  ($k=1,2$) , it introduces errors due to the complex sinusoidal structure of $\widehat{y}_k(\eta)$ itself.

As established in Appendix~\ref{section error functions} and Theorem~\ref{theorem_recover}, a smaller $\sigma$ yields a smaller value of $I_m$, which in turn could lead to a reduced reconstruction error. 
Therefore, we set $\sigma_0 = \sigma / 3 = 5.7$ to evaluate the reconstruction performance by rerunning the FGSSO algorithm using the same estimated GDs and GDDs curves, as presented in the second row.
The FGSSO algorithm with $\sigma_0$ demonstrates significantly improved recovery performance.

\begin{figure}[H]
    \centering
    \begin{tabular}{cccc}
         \begin{subfigure}[t]{0.22\textwidth}
            \centering
            \resizebox{\linewidth}{!}{\includegraphics{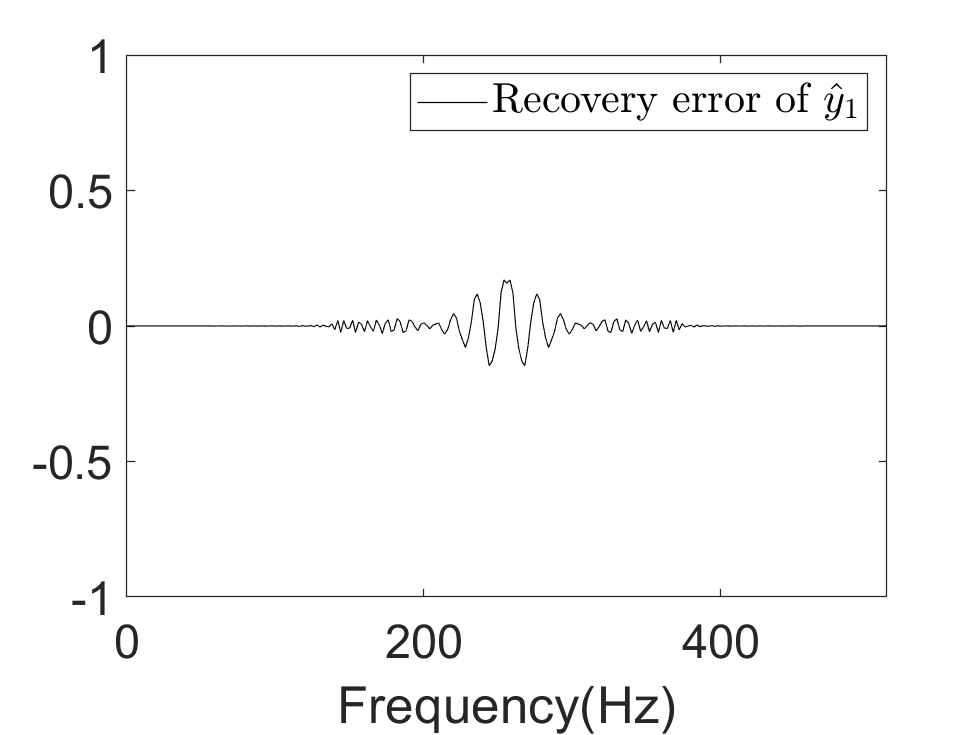}} 
        \end{subfigure} &
        \begin{subfigure}[t]{0.22\textwidth}
            \centering
            \resizebox{\linewidth}{!}{\includegraphics{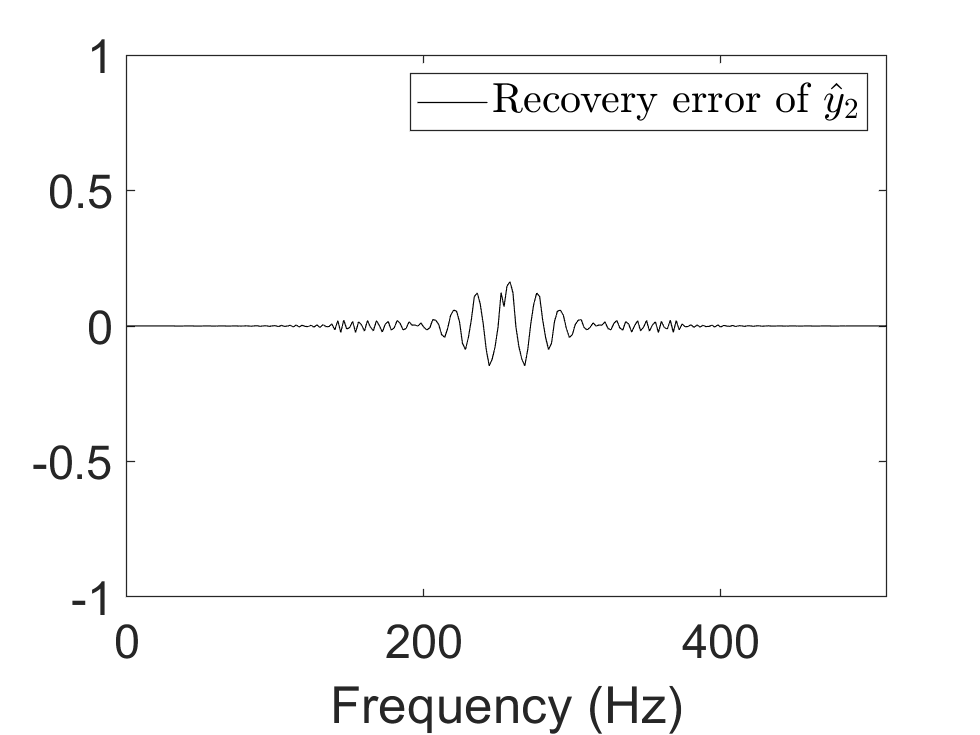}}
        \end{subfigure} &
        \begin{subfigure}[t]{0.22\textwidth}
            \centering
            \resizebox{\linewidth}{!}{\includegraphics{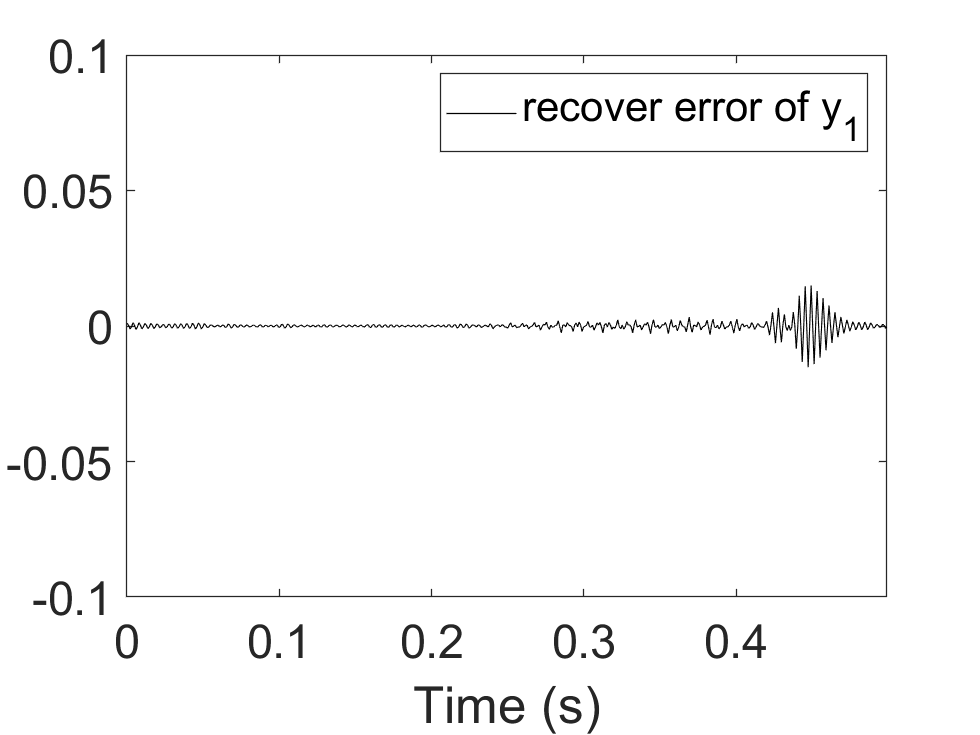}}
        \end{subfigure} &
        \begin{subfigure}[t]{0.22\textwidth}
            \centering
            \resizebox{\linewidth}{!}{\includegraphics{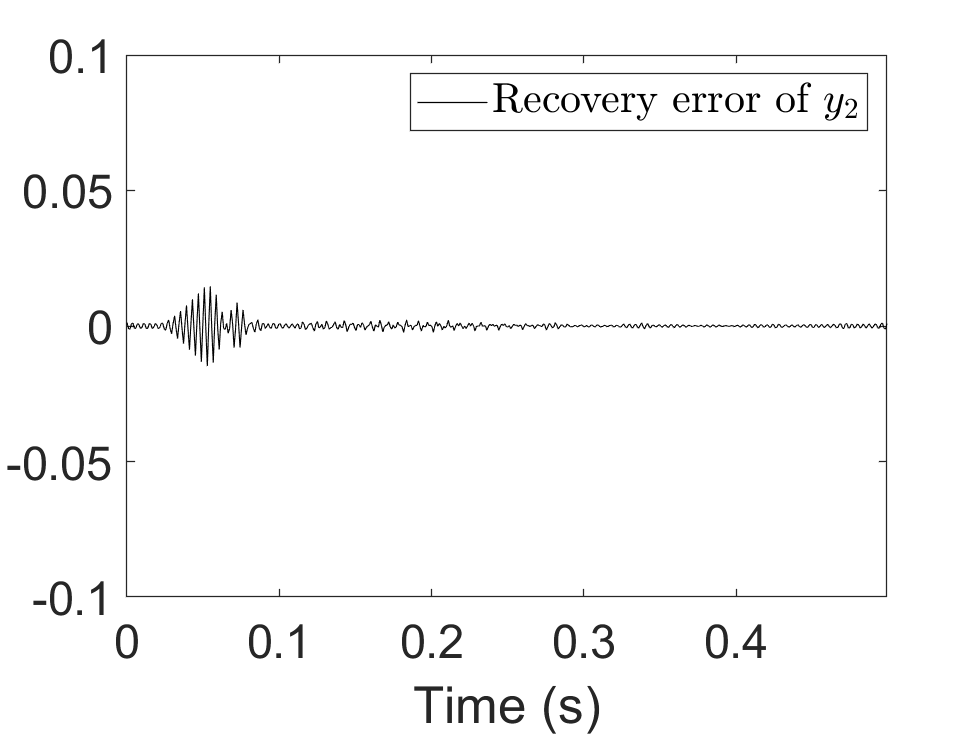}}
        \end{subfigure}  \\
         \begin{subfigure}[t]{0.22\textwidth}
            \centering
            \resizebox{\linewidth}{!}{\includegraphics{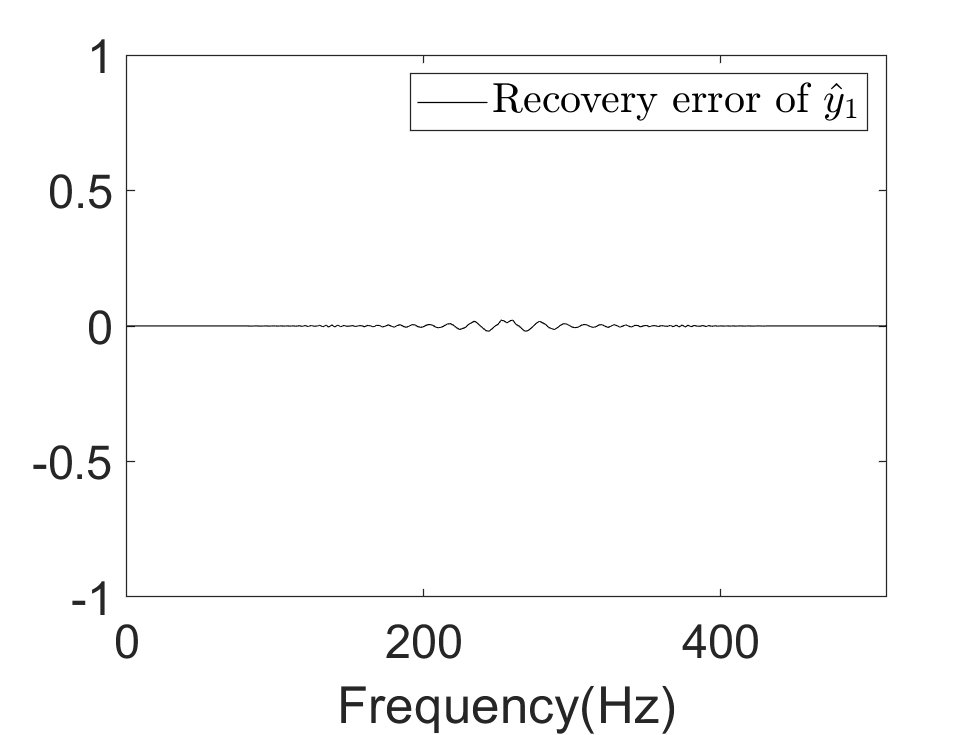}}    
        \end{subfigure}&
        \begin{subfigure}[t]{0.22\textwidth}
            \centering
            \resizebox{\linewidth}{!}{\includegraphics{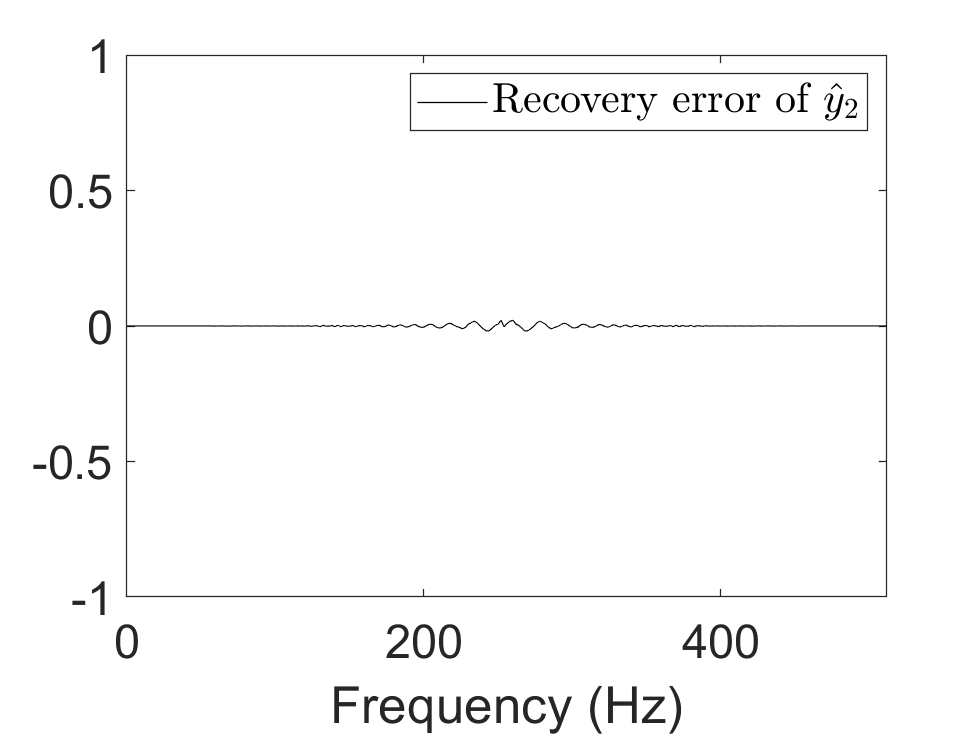}} 
        \end{subfigure} &
        \begin{subfigure}[t]{0.22\textwidth}
            \centering
            \resizebox{\linewidth}{!}{\includegraphics{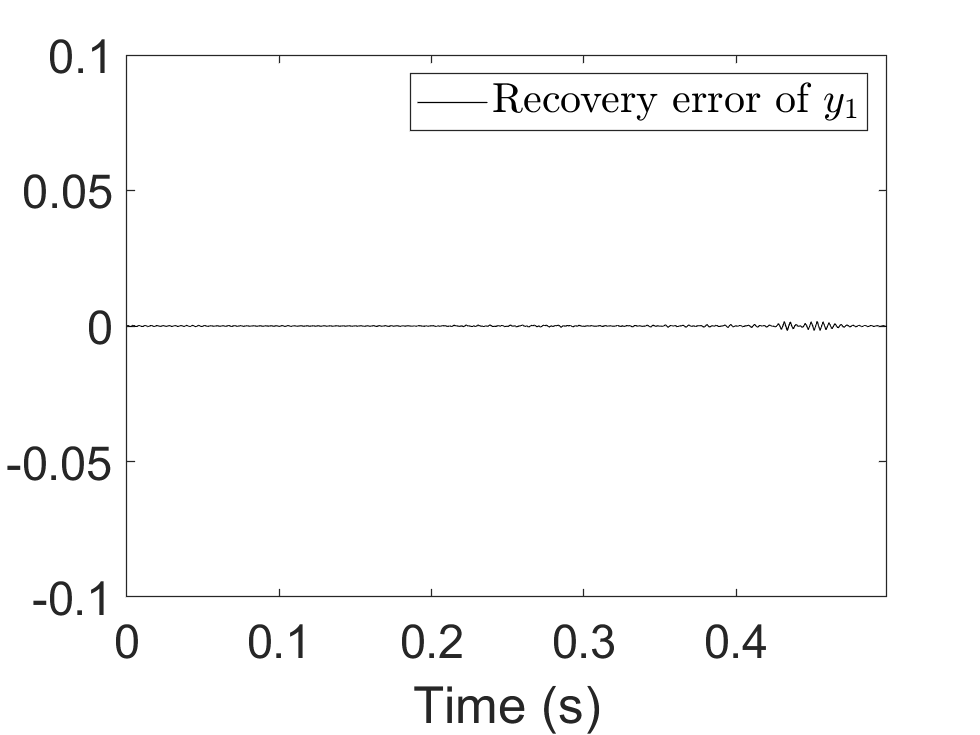}}    
        \end{subfigure} &
        \begin{subfigure}[t]{0.22\textwidth}
            \centering
            \resizebox{\linewidth}{!}{\includegraphics{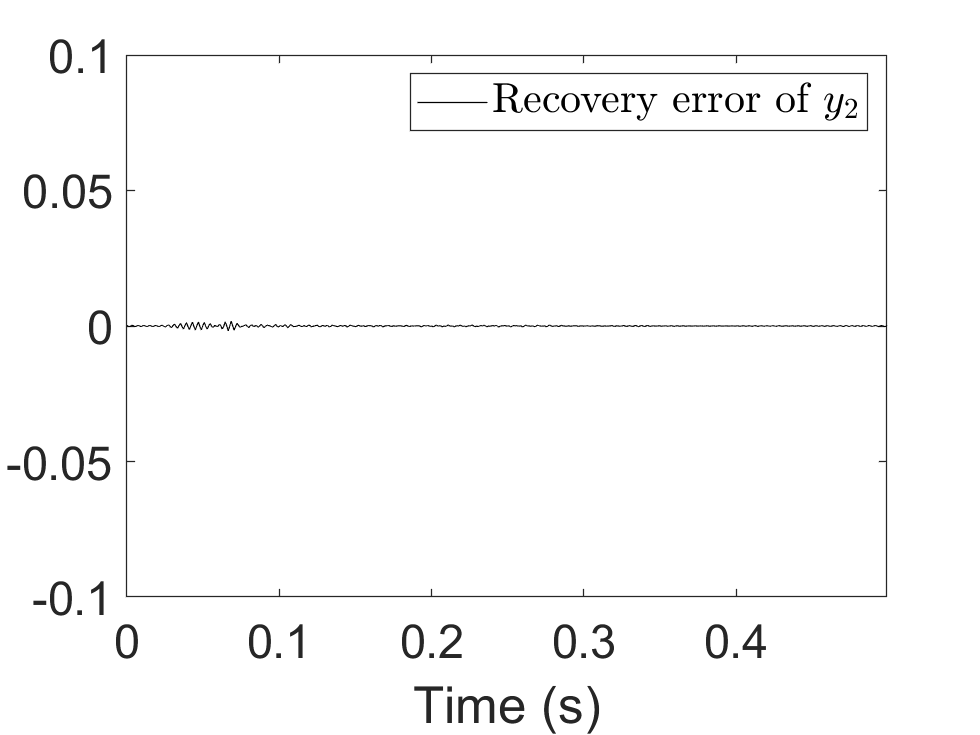}}    
        \end{subfigure}  \\        
    \end{tabular}
\caption{First row: Recovery errors (real part) of $y(t)$ using the FGSSO algorithm with $\sigma = 17.1$. From left to right: Recovery error of $\widehat{y}_1(\eta)$; recovery error of $\widehat{y}_2(\eta)$; recovery error of $y_1(t)$; recovery error of $y_2(t)$. Second row: Corresponding recovery errors using $\sigma_0 = \sigma/3 = 5.7$.}
    \label{figure:recover_errors_of_modes}
\end{figure}

\subsection{Audio recording of killer whale vocalizations}

The dataset\footnote{Data available at: \url{https://datadryad.org/dataset/doi:10.5061/dryad.1h46d}} comprises 142 audio recordings of killer whale (\textit{Orcinus orca}) vocalizations from off the Western Australian coast. These recordings capture a variety of sound types, including echolocation clicks used for navigation and foraging, whistles believed to facilitate social communication, and burst-pulse sounds which function as contact calls for group recognition and coordinating behavior.
A detailed assessment of these vocalizations can provide valuable insights into the population status, habitat usage, migration patterns, behavior, and acoustic ecology of this species in Australian waters, thus paving the way for further investigation of this poorly understood marine population \cite{wellard2015vocalisations}. However, the analysis of these signals is particularly challenging due to the complex underwater hydrological environment.

To validate the performance of the proposed TSFCT method on  complex, real-world signals, we applied it to one specific recording from this dataset (Sequence No. 115) for detailed examination. The original signal has a sampling rate of 96 kHz. 
To reduce computational complexity while preserving  relevant frequency content, we downsampled it by a factor of 3. Our analysis targeted a temporal segment ($0.6$ to $1.1$ s; duration: $0.5$ s) of this downsampled recording.

\begin{figure}[H]
    \centering
    \begin{tabular}{cc}
        \begin{subfigure}[t]{0.28\textwidth}
            \centering
            \resizebox{\linewidth}{!}{\includegraphics{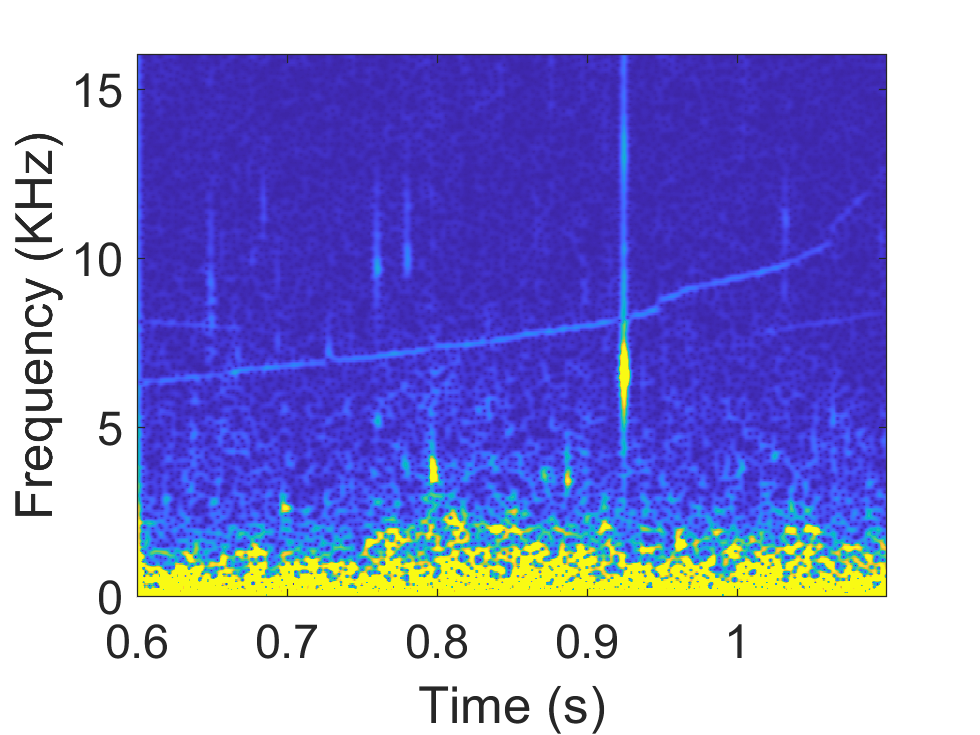}}
        \end{subfigure} &
        \begin{subfigure}[t]{0.28\textwidth}
            \centering
            \resizebox{\linewidth}{!}{\includegraphics{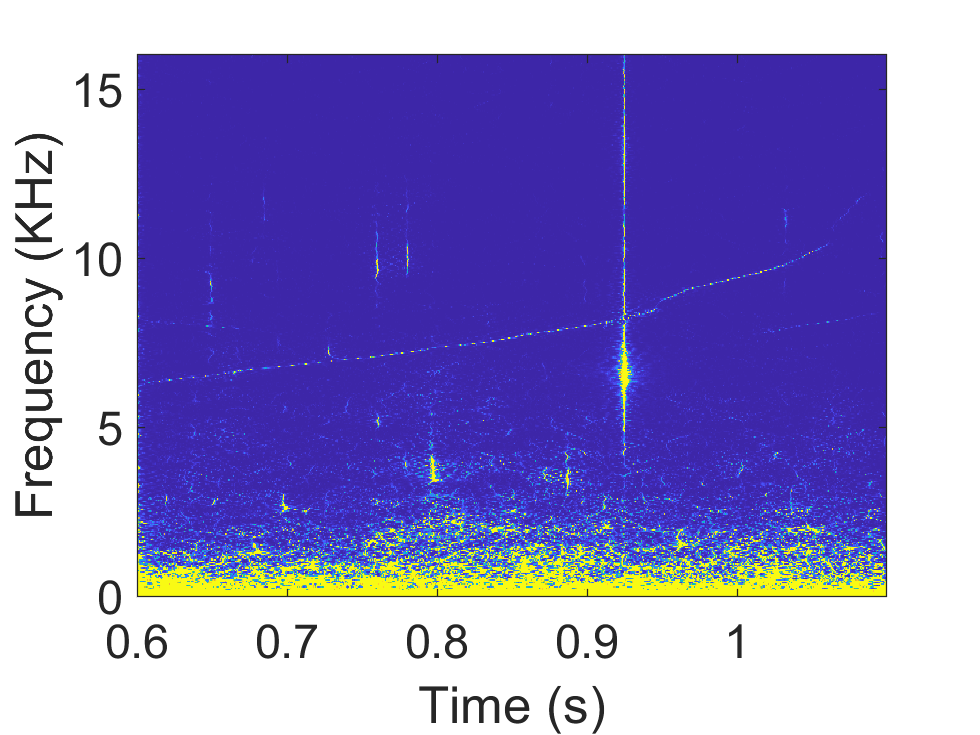}}
        \end{subfigure} 
    \end{tabular} 
 \small
\caption{TFRs of the audio recording. From left to right: STFT, 
TFR projection $\mathfrak{T}_{x}^{g}(\tau, \eta)$ (defined in \eqref{eq:tfr_definition}).}
      \label{figure:TFRs of killer whale}
\end{figure}

The TFR generated by the STFT method is shown in the left panel of Fig.~\ref{figure:TFRs of killer whale}. Although substantial background noise contaminates the 0-3 kHz frequency range, two distinct vocalization types are discernible. 
An echolocation click spanning 4-16 kHz occurs around 0.925 s, while a whistle signal appears from 0.6 s to 1.1 s.
Since echolocation clicks appear nearly perpendicular to the time axis in the time-frequency plane, the three-dimensional methods based on CT or WCT bases are unsuitable. Under these circumstances, FCT and TSFCT serve as valuable complements for analyzing such transient features.
Due to the presence of background noise that perturbs the energy distribution, displaying the complete three-dimensional TF-GDD representation becomes impractical. 
Therefore, we present 
TFR projection $\mathfrak{T}_{x}^{g}(\tau, \eta)$ (defined in \eqref{eq:tfr_definition}), which is shown in  the right panel.
The TFR generated by the TSFCT method effectively captures the time-frequency characteristics of both the echolocation click and the whistle, while reducing background noise interference.

\section{Conclusion}

In this paper, we propose a method based on the frequency-domain chirplet transform (FCT) to address multicomponent signals with intersecting group delay (GD) curves. By introducing a novel group delay dispersion (GDD) variable, the FCT extends the traditional time-frequency (TF) plane to a three-dimensional time-frequency-group delay dispersion (TF-GDD) space. Compared to the conventional chirplet transform (CT), the proposed FCT-based method effectively handles multicomponent signals featuring crossing or rapidly varying frequency ridges.
We further generalize the time-reassigned framework to this three-dimensional space and derive novel GD and GDD reference functions. This generalization yields significantly sharper TF-GDD distributions, which enables precise estimation of both GDs and GDDs. Furthermore, we propose a novel frequency-domain group signal separation operation (FGSSO) scheme to 
enable accurate recovery of transient signal modes--even in the cases with intersecting group GD curves.
Additionally, we establish theorems that characterize two key error metrics: first, the approximation error of the GD and GDD reference functions relative to the true GD and GDD of the signal; second, the error bounds for mode retrieval using FGSSO.

These advancements contribute to the progress of non-stationary signal processing. However, the analysis of complex signals containing both transient and harmonic components--which remains challenging for both CT and FCT--continues to be an open problem. Therefore, future research will focus on developing novel methods specifically tailored for such mixed signals.

\section*{Acknowledgments} 
This work was partially supported by the National Key Research and Development Program of China  under Grants 2022YFA1005703  and the National Natural Science Foundation of China under Grants U21A20455 and 12571109.

\begin{appendices}
  \numberwithin{equation}{section} 
  
\section{Expressions of the high-order GD and GDD reference functions}

For a signal $x(t)$, we define the following matrices:
\begingroup
\renewcommand{\arraystretch}{1.1}
\setlength{\arraycolsep}{4pt}
\begin{align*}
& E^N_{0} =
\begin{bmatrix}
\mathcal{D}_x^{g} & \mathcal{D}_x^{\xi g} & \cdots & \mathcal{D}_x^{\xi^{N-1} g} \\
\mathcal{D}_x^{\xi g} & \mathcal{D}_x^{\xi^2 g} & \cdots & \mathcal{D}_x^{\xi^{N} g} \\
\vdots & \vdots & \ddots & \vdots \\
\mathcal{D}_x^{\xi^{N-1} g} & \mathcal{D}_x^{\xi^{N} g} & \cdots & \mathcal{D}_x^{\xi^{2N-2} g}
\end{bmatrix},
&& E^N_{11} =
\begin{bmatrix}
\mathcal{D}_x^{g'} & \mathcal{D}_x^{\xi g} & \cdots & \mathcal{D}_x^{\xi^{N-1} g} \\
\mathcal{D}_x^{\xi g'} & \mathcal{D}_x^{\xi^{2} g} & \cdots & \mathcal{D}_x^{\xi^{N} g} \\
\vdots & \vdots & \ddots & \vdots \\
\mathcal{D}_x^{\xi^{N-1} g'} & \mathcal{D}_x^{\xi^{N} g} & \cdots & \mathcal{D}_x^{\xi^{2N-2} g}
\end{bmatrix},
\\[2ex]
& E^N_{12} =
\begin{bmatrix}
0 & \mathcal{D}_x^{\xi g} & \cdots & \mathcal{D}_x^{\xi^{N-1} g} \\
\mathcal{D}_x^{g} & \mathcal{D}_x^{\xi^{2} g} & \cdots & \mathcal{D}_x^{\xi^{N} g} \\
\vdots & \vdots & \ddots & \vdots \\
(N-1) \mathcal{D}_x^{\xi^{N-2}g} & \mathcal{D}_x^{\xi^{N} g} & \cdots & \mathcal{D}_x^{\xi^{2N-2} g}
\end{bmatrix},
&& E^N_{21} =
\begin{bmatrix}
\mathcal{D}_x^{g} & \mathcal{D}_x^{g'} & \cdots & \mathcal{D}_x^{\xi^{N-1} g} \\
\mathcal{D}_x^{\xi g} & \mathcal{D}_x^{\xi g'} & \cdots & \mathcal{D}_x^{\xi^{N} g} \\
\vdots & \vdots & \ddots & \vdots \\
\mathcal{D}_x^{\xi^{N-1} g} & \mathcal{D}_x^{\xi^{N-1} g'} & \cdots & \mathcal{D}_x^{\xi^{2N-2} g}
\end{bmatrix},
\\[2ex]
& E^N_{22} =
\begin{bmatrix}
\mathcal{D}_x^{g} & 0 & \cdots & \mathcal{D}_x^{\xi^{N-1} g} \\
\mathcal{D}_x^{\xi g} & \mathcal{D}_x^{g} & \cdots & \mathcal{D}_x^{\xi^{N} g} \\
\vdots & \vdots & \ddots & \vdots \\
\mathcal{D}_x^{\xi^{N-1} g} & (N-1) \mathcal{D}_x^{\xi^{N-2} g} & \cdots & \mathcal{D}_x^{\xi^{2N-2} g}
\end{bmatrix}.
\end{align*}
\endgroup

For $(t, \eta, \gga)$ with $\det(E^N_0) \neq 0$,  define 
\begin{align}
\label{NGD}\widehat{t}_{N}(t,\eta,\gga) &:= t + \frac{1}{2\pi}\operatorname{Im}\left(\frac{\det(E^N_{11}) + \det(E^N_{12})}{\det(E^N_0)}\right), \\[1ex]
\label{NGDD}\widehat{r}_{N}(t,\eta,\gga) &:= \lambda + \frac{1}{2\pi}\operatorname{Im}\left(\frac{\det(E^N_{21}) + \det(E^N_{22})}{\det(E^N_0)}\right).
\end{align}

One can verify that for a signal in the form of 
\[
\hat{x}(\eta)=\mathrm{e}^{-d(\eta)}\,\mathrm{e}^{-i 2\pi\theta(\eta)},
\]
where \(d(\eta)\) and \(\theta(\eta)\) are real-valued polynomials of degree~\(N\), the following hold: 
$$
\widehat{t}_{N}(t,\eta,\gga) =\theta'(\eta), \quad \widehat{r}_{N}(t,\eta,\gga) =\theta''(\eta).
$$
Thus the quantities on the righ-hand sides of \eqref{NGD} and \eqref{NGDD} can be used as the  $N$th-order GD and GDD reference functions.

\section{Proof of Theorem \ref{theorem_eta_lambda} }\label{proof of theorem}

The proof of Theorem \ref{theorem_eta_lambda}  is based on the two lemmas to be established below.

\begin{lem}\label{approxC_x}
  Let \( x(t) \in \mathcal{B}_{\epsilon_1,\epsilon_2} \) be a multicomponent signal 
for some \(\epsilon_1 > 0\) and \(\epsilon_2 > 0\). Then for \( m = 0,1,2,\cdots \), 
its FCT satisfies:
    \begin{align}
     \left|\mathcal{D}_{x_l}^{\xi^m g}(t,\eta,\gga) - \wh{x}_l(\eta) \mathcal{C}(\xi^m g)(\theta_l'(\eta)-t,\theta_l''(\eta)-\gga)\right| \leq \Pi_{m,l},
    \end{align}
 where   $\Pi_{m,l}$ is defined by \eqref{piml}. 
\end{lem}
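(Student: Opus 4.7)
The plan is to recast the claimed difference as one integral and then bound the integrand pointwise by Taylor's theorem. Using the definitions of $\mathcal{D}_{x_l}^{\xi^m g}$ in \eqref{defFCT_freq} and of $\mathcal{C}(\xi^m g)$ in \eqref{notation1}, together with $\wh{x}_l(\eta+\xi)=B_l(\eta+\xi)e^{-i2\pi\theta_l(\eta+\xi)}$, I would write
\begin{equation*}
\mathcal{D}_{x_l}^{\xi^m g}(t,\eta,\gga) - \wh{x}_l(\eta)\,\mathcal{C}(\xi^m g)\bigl(\theta_l'(\eta)-t,\theta_l''(\eta)-\gga\bigr)
= \int_{\mathbb{R}} \xi^m g(\xi)\, e^{i2\pi\xi t}\,e^{i\pi\gga\xi^2}\, F(\xi)\,d\xi,
\end{equation*}
where $F(\xi):=B_l(\eta+\xi)e^{-i2\pi\theta_l(\eta+\xi)} - B_l(\eta)e^{-i2\pi[\theta_l(\eta)+\theta_l'(\eta)\xi+\frac12\theta_l''(\eta)\xi^2]}$ is the linearization error produced by comparing the true frequency-domain signal with its quadratic-phase, constant-amplitude idealization at $\eta$. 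The problem then reduces to estimating $|F(\xi)|$ uniformly in $\xi$ and integrating.

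Next, I would decompose $F(\xi)$ cleanly into an amplitude contribution $\bigl[B_l(\eta+\xi)-B_l(\eta)\bigr]e^{-i2\pi\theta_l(\eta+\xi)}$ and a phase contribution $B_l(\eta)\bigl[e^{-i2\pi\theta_l(\eta+\xi)}-e^{-i2\pi(\theta_l(\eta)+\theta_l'(\eta)\xi+\frac12\theta_l''(\eta)\xi^2)}\bigr]$. For the amplitude piece, the mean value theorem together with $|B_l'|\le\epsilon_1$ from the definition of $\mathcal{B}_{\epsilon_1,\epsilon_2}$ gives $|B_l(\eta+\xi)-B_l(\eta)|\le\epsilon_1|\xi|$. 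For the phase piece, Taylor's theorem with Lagrange remainder applied to $\theta_l$, combined with $|\theta_l'''|\le\epsilon_2$, yields a real remainder $R(\xi)$ satisfying $|R(\xi)|\le\tfrac16\epsilon_2|\xi|^3$; applying the elementary inequality $|e^{i\alpha}-1|\le|\alpha|$ with $\alpha=-2\pi R(\xi)$ then bounds the exponential difference by $\tfrac{\pi}{3}\epsilon_2|\xi|^3$.

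Combining these pointwise bounds gives $|F(\xi)|\le\epsilon_1|\xi|+B_l(\eta)\,\tfrac{\pi}{3}\epsilon_2|\xi|^3$. Multiplying by $|\xi|^m|g(\xi)|$, integrating over $\mathbb{R}$, and matching against the moment notation $I_m$ in \eqref{notation2}, I obtain
\begin{equation*}
\epsilon_1\!\int_{\mathbb{R}}|\xi|^{m+1}|g(\xi)|\,d\xi + B_l(\eta)\frac{\pi}{3}\epsilon_2\!\int_{\mathbb{R}}|\xi|^{m+3}|g(\xi)|\,d\xi = \epsilon_1 I_{m+1}+\epsilon_2\frac{\pi}{3}B_l(\eta)I_{m+3} = \Pi_{m,l},
\end{equation*}
which is the asserted bound.

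I do not expect a serious obstacle, since this is a standard Taylor-remainder estimate tailored to the FCT integrand. The two details that require careful bookkeeping are (i) the factor of $2\pi$ between the phase and the exponent, which is the source of the constant $\tfrac{\pi}{3}=2\pi\cdot\tfrac16$ appearing in $\Pi_{m,l}$, and (ii) the shift in the window-moment indices from $m$ to $m+1$ and $m+3$, which must be reconciled with the definition of $I_m$. Finite $I_{m+1}$ and $I_{m+3}$ are implicitly assumed, consistent with the admissibility of $g$ adopted elsewhere in the paper.
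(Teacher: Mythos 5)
Your proposal is correct and follows essentially the same route as the paper: the paper likewise splits $\wh{x}_l(\eta+\xi)$ into the quadratic-phase idealization plus an amplitude increment bounded by $\epsilon_1|\xi|$ and a phase-linearization error bounded by $\tfrac{\pi}{3}\epsilon_2|\xi|^3$, then integrates against $|\xi^m g(\xi)|$ to obtain $\epsilon_1 I_{m+1}+\epsilon_2\tfrac{\pi}{3}B_l(\eta)I_{m+3}=\Pi_{m,l}$. Your packaging of the two error terms into a single integrand $F(\xi)$ and your explicit justification of the constant $\tfrac{\pi}{3}=2\pi\cdot\tfrac16$ via the Lagrange remainder and $|e^{i\alpha}-1|\le|\alpha|$ are only presentational refinements of the same argument.
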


\begin{proof}
For each mode $\widehat{x_l}(\eta)=B_k(\eta)e^ {-i2 \pi \theta_k(\eta)}$, we can write $ \widehat{x_l}(\eta+\xi)$ as:
 \begin{align*}
     \widehat{x_l}(\eta+\xi)= &B_l  (\eta+\xi)e^ {-i2 \pi \theta_l(\eta+\xi)}  \\
                =&B_l(\eta)(\eta)e^{-i 2 \pi \left(\theta_l(\eta)+\theta'_l(\eta)\xi+\frac{1}{2}\theta''_l(\eta) \xi^2 \right)} +\left(B_l(\eta+\xi)-B_l(\eta)\right)e^ {-i2 \pi \theta_l(\eta+\xi)} \\
                +&B_l(\eta)\left(e^ {-i2 \pi \theta_l(\eta+\xi)}- e^{-i 2 \pi \left(\theta_l(\eta)+\theta'_l(\eta)\xi+\frac{1}{2}\theta''_l(\eta) \xi^2 \right)}\right).           
 \end{align*}
 Then,
 \begin{align*}
    &\left|\mathcal{D}_{x_l}^{\xi^m g}(t,\eta,\gga)-\wh{x}_l(\eta) \mathcal{C}(\xi^m g)(\theta_l'(\eta)-t,\theta_l''(\eta)-\gga) \right|\\
    \leq& \left| \int_{\mathbb{R}} (B_l(\eta+\xi)-B_l(\eta)) e^{-i 2 \pi \theta_l(\eta+\xi)} \xi^m g(\xi) e^{i 2 \pi \xi t} e^{i \pi \gga \xi^2} d\xi \right|\\
    +&\left| \int_{\mathbb{R}} B_l(\eta)\left(e^ {-i2 \pi \theta_l(\eta+\xi)}- e^{-i 2 \pi \left(\theta_l(\eta)+\theta'_l(\eta)\xi+\frac{1}{2}\theta''_l(\eta) \xi^2 \right)}\right) \xi^m g(\xi) e^{i 2 \pi \xi t} e^{i \pi \gga \xi^2} d\xi \right|        
 \end{align*}

    Since $|B_l(\eta+\xi)-B_l(\eta)|\leq \epsilon_1 |\xi| $ and 
     \begin{align*}
         \left|e^ {-i2 \pi \theta_k(\eta+\xi)}-e^{-i 2 \pi \left(\theta_l(\eta)+\theta'_l(\eta)\xi+\frac{1}{2}\theta''_l(\eta) \xi^2 \right)} \right|\leq \frac{\pi}{3}\epsilon_2 |\xi|^{3},
     \end{align*}
    the remainder  can be bounded by \(\Pi_{m,l}\),
   \begin{align*}
   \left|\mathcal{D}_{x_l}^{\xi^m g}(t,\eta,\gga)-\wh{x}_l(\eta) \mathcal{C}(\xi^m g)(\theta_l'(\eta)-t,\theta_l''(\eta)-\gga) \right|\leq  \epsilon_1 I_{m+1} + \frac{\pi}{3} B_l(\eta)\epsilon_2 I_{m+3}=\Pi_{m,l}.
   \end{align*}
  \end{proof}

Furthermore, we have
\begin{align}\label{approx_allk}
\bigg| \mathcal{D}_{x}^{\xi^m g}(t,\eta,\gga) 
- \sum_{l=1}^{K} \widehat{x_l}(\eta) \mathcal{C}(\xi^m g)\bigl(\theta_l'(\eta)-t,\theta_l''(\eta)-\gga\bigr) \bigg| 
\leq \sum_{l=1}^{K} \Pi_{m,l} = \Pi_m.
\end{align}
If \((t,\eta,\gga) \in Z_k\)\((k\neq l)\), based on the Lemma \ref{approxC_x} and \eqref{Upsilonmk},
\begin{align}
     \left|\mathcal{D}_{x_l}^{\xi^m g}(t,\eta,\gga)\right| \leq B_l(\eta) \Upsilon_{m,l}(\eta) + \Pi_{m,l}. \label{modulus_not_in_Z_k}          
 \end{align}
Moreover, the $\gga_m$  is the  upper bound of the modulus  
for $\mathcal{D}_{x}^{\xi^{m}g}(t,\eta,\gga)$ in the region $Z_k$, where $ \gga_m$ is defined by  \eqref{modulus_Cx}. 
Indeed,
    \begin{align*}
     \left|\mathcal{D}_{x}^{\xi^m g}(t,\eta,\gga)\right| \leq&  \sum_{l=1}^{K}\left|\wh{x}_l(\eta) \mathcal{C}(\xi^m g)(\theta_l'(\eta)-t,\theta_l''(\eta)-\gga)\right|+\Pi_m  \nonumber\\
       \leq& B_k(\eta)I_m + \sum_{l\neq k} B_l(\eta) \Upsilon_{m,l}(\eta)+\Pi_m=\gga_m.         
 \end{align*}

Next, we analyze the error bound of the residual term 
\begin{align}\label{def_res}
\mathrm{Res}_{m,k} = \partial_\eta \mathcal{D}_x^{\xi^m g}(t,\eta,\gga) 
+ i2\pi  \theta'_k(\eta) \mathcal{D}_x^{\xi^{m} g}(t,\eta,\gga) 
+ i2\pi  \theta''_k(\eta) \mathcal{D}_x^{\xi^{m+1} g}(t,\eta,\gga)
\end{align}
in the \( Z_k \) regions, which plays a crucial role in proving  the main theorem.

\begin{lem}
    \label{difference_partial_eta}
    Assume that $x(t)$ satisfies the same conditions as in Lemma \ref{approxC_x}. If $(t,\eta,\gga) \in Z_k$, then for  the residual term $\mathrm{Res}_{m,k}$ defined in \eqref{def_res}, we have:
    \begin{align*}
    \left| \mathrm{Res}_{m,k}\right|\leq \Lambda_{m,k},
     \end{align*}
    where $\Lambda_{m,k}$ is defined by  \eqref{Lambdamk}.
\end{lem}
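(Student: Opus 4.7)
The plan is to split $\mathrm{Res}_{m,k}$ into two parts: one collecting each mode's own residual against its own phase derivatives, and one accounting for the mismatch between $(\theta'_k,\theta''_k)$ and the phase derivatives of the other modes. Using the linearity of the FCT, I would write
\begin{align*}
\mathrm{Res}_{m,k}=\sum_{l=1}^{K}\mathcal{E}_l+i2\pi\sum_{l\neq k}\Big[(\theta'_k(\eta)-\theta'_l(\eta))\mathcal{D}_{x_l}^{\xi^m g}+(\theta''_k(\eta)-\theta''_l(\eta))\mathcal{D}_{x_l}^{\xi^{m+1}g}\Big],
\end{align*}
where $\mathcal{E}_l:=\partial_\eta\mathcal{D}_{x_l}^{\xi^m g}+i2\pi\theta'_l(\eta)\mathcal{D}_{x_l}^{\xi^m g}+i2\pi\theta''_l(\eta)\mathcal{D}_{x_l}^{\xi^{m+1}g}$ is the residual of mode $l$ evaluated with its own group delay and dispersion. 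This decomposition aligns with the two mechanisms contributing to $\Lambda_{m,k}$: the local chirplet approximation error absorbed into $\mathcal{E}_l$, and the inter-mode interference captured by the cross sum.

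To bound each $\mathcal{E}_l$, I would differentiate \eqref{defFCT_freq} under the integral sign to obtain
\begin{align*}
\partial_\eta\mathcal{D}_{x_l}^{\xi^m g}=\int_{\mathbb{R}}\bigl[B'_l(\eta+\xi)-i2\pi\theta'_l(\eta+\xi)B_l(\eta+\xi)\bigr]e^{-i2\pi\theta_l(\eta+\xi)}\xi^m g(\xi)e^{i2\pi\xi t}e^{i\pi\gga\xi^2}d\xi,
\end{align*}
and then Taylor expand $\theta'_l(\eta+\xi)=\theta'_l(\eta)+\theta''_l(\eta)\xi+r_l(\eta,\xi)$ with $|r_l(\eta,\xi)|\le\tfrac12\epsilon_2\xi^2$ (using the hypothesis $|\theta_l^{(3)}|\le\epsilon_2$). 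The two linear-in-$\xi$ contributions absorb into $-i2\pi\theta'_l(\eta)\mathcal{D}_{x_l}^{\xi^m g}-i2\pi\theta''_l(\eta)\mathcal{D}_{x_l}^{\xi^{m+1}g}$, leaving
\begin{align*}
\mathcal{E}_l=\int_{\mathbb{R}}\bigl[B'_l(\eta+\xi)-i2\pi r_l(\eta,\xi)B_l(\eta+\xi)\bigr]e^{-i2\pi\theta_l(\eta+\xi)}\xi^m g(\xi)e^{i2\pi\xi t}e^{i\pi\gga\xi^2}d\xi.
\end{align*}
With $|B'_l|\le\epsilon_1$ (which also implies $B_l(\eta+\xi)\le B_l(\eta)+\epsilon_1|\xi|$), a direct modulus estimate yields $|\mathcal{E}_l|\le\epsilon_1 I_m+\pi\epsilon_2 B_l(\eta)I_{m+2}+\pi\epsilon_1\epsilon_2 I_{m+3}$. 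Summing over $l$ and using $\sum_l B_l(\eta)=M(\eta)$ reproduces exactly the first three terms of $\Lambda_{m,k}$.

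For the cross terms, the key observation is that the separation condition in Definition \ref{def:multicomponent_class} combined with the triangle inequality forces $Z_k\cap Z_l=\emptyset$ whenever $l\neq k$: if $(t,\eta,\gga)$ belonged to both $Z_k$ and $Z_l$, then $|\theta'_k(\eta)-\theta'_l(\eta)|<2\Delta_1$ and $|\theta''_k(\eta)-\theta''_l(\eta)|<2\Delta_2$ would hold simultaneously, contradicting the hypothesis. Consequently, $(t,\eta,\gga)\in Z_k$ implies $(t,\eta,\gga)\notin Z_l$ for $l\neq k$, so the pre-established estimate \eqref{modulus_not_in_Z_k} gives $|\mathcal{D}_{x_l}^{\xi^{j}g}|\le B_l(\eta)\Upsilon_{j,l}(\eta)+\Pi_{j,l}$ for $j=m,m+1$. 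Multiplying these by $2\pi|\theta'_k-\theta'_l|$ and $2\pi|\theta''_k-\theta''_l|$ respectively and summing on $l\neq k$ yields the remaining terms of $\Lambda_{m,k}$, completing the proof. I expect the main technical obstacle to be the careful bookkeeping in the Taylor-expansion step, since the product $r_l(\eta,\xi)B_l(\eta+\xi)$ couples an $\epsilon_2$-small remainder with a potentially $\epsilon_1$-shifted amplitude; one must separate the contributions carefully so that the moment factors $I_{m+2}$ and $I_{m+3}$ emerge with precisely the prefactors prescribed by $\Lambda_{m,k}$.
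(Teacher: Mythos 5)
Your proposal is correct and follows essentially the same route as the paper: the identical decomposition of $\mathrm{Res}_{m,k}$ into per-mode residuals $\mathcal{E}_l$ (the paper's $Q_{m,l}$) plus cross terms weighted by $\theta'_k-\theta'_l$ and $\theta''_k-\theta''_l$, the same Taylor/amplitude estimates yielding $\epsilon_1 I_m+\pi\epsilon_1\epsilon_2 I_{m+3}+\pi\epsilon_2 B_l(\eta)I_{m+2}$ per mode, and the same use of \eqref{modulus_not_in_Z_k} off $Z_l$. Your explicit verification that the separation condition forces $Z_k\cap Z_l=\emptyset$ is a detail the paper leaves implicit, and is a welcome addition.
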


\begin{proof}
 First, we define 
\begin{align*} 
Q_{m,l} &= \partial_\eta \mathcal{D}_{x_l}^{\xi^m g}(t,\eta,\gga) + i2\pi  \theta'_l(\eta) \mathcal{D}_x^{\xi^{m} g}(t,\eta,\gga) + i2\pi  \theta''_l(\eta) \mathcal{D}_x^{\xi^{m+1} g}(t,\eta,\gga) \\
&= \int_{\mathbb{R}} B'_l(\eta+\xi) e^{-i 2 \pi \theta_l(\eta+\xi)} \xi^m g(\xi) e^{i 2 \pi \xi t} e^{i \pi \gga \xi^2} d\xi \\
&+ \int_{\mathbb{R}} \Bigl(B_l(\eta+\xi)-B_l(\eta)\Bigr) \Bigl(i2\pi  \theta'_l(\eta) + i2\pi  \theta''_l(\eta)\xi - i2\pi  \theta'_l(\eta+\xi) \Bigr) e^{-i 2 \pi \theta_l(\eta+\xi)} \xi^m g(\xi) e^{i 2 \pi \xi t} e^{i \pi \gga \xi^2} d\xi,\\
&+ \int_{\mathbb{R}} B_l(\eta) \Bigl(i2\pi  \theta'_l(\eta) + i2\pi  \theta''_l(\eta)\xi - i2\pi  \theta'_l(\eta+\xi) \Bigr) e^{-i 2 \pi \theta_l(\eta+\xi)} \xi^m g(\xi) e^{i 2 \pi \xi t} e^{i \pi \gga \xi^2} d\xi,
\end{align*}
Notice that,
\[\left|Q_{m,l}\right| \leq \epsilon_1 I_m +\epsilon_1\epsilon_2\pi I_{m+3}+\epsilon_2 \pi B_l(\eta)I_{m+2}\]
Hence,
  \begin{align*}
\mathrm{Res}_{m,k} &= \partial_\eta \mathcal{D}_x^{\xi^m g}(t,\eta,\gga)  + i2\pi  \theta'_k(\eta) \mathcal{D}_x^{\xi^{m} g}(t,\eta,\gga) + i2\pi  \theta''_k(\eta) \mathcal{D}_x^{\xi^{m+1} g}(t,\eta,\gga), \\
&= \sum_{l=1}^{K} \left(\partial_\eta \mathcal{D}_{x_l}^{\xi^m g} + i2\pi  \theta'_l(\eta) \mathcal{D}_{x_l}^{\xi^{m} g}+ i2\pi  \theta''_l(\eta) \mathcal{D}_{x_l}^{\xi^{m+1} g}\right) \\
&+ \sum_{l=1}^{K} i2\pi  \left(  (\theta'_k(\eta)-\theta'_l(\eta))\mathcal{D}_{x_l}^{\xi^{m} g}+ (\theta''_k(\eta)- \theta''_l(\eta)) \mathcal{D}_{x_l}^{\xi^{m+1} g}\right).
\end{align*}
 When \(\left(t,\eta,\gga\right) \in Z_k \), then from \eqref{modulus_not_in_Z_k}, we have
    \begin{align*}
        \left|\mathrm{Res}_{m,k} \right| & \leq \sum_{l=1}^{K}\left|Q_{m,l} \right| +\sum_{l\neq k} 2\pi \left(  \left|\theta'_k(\eta)-\theta'_l(\eta)\right| \left|\mathcal{D}_{x_l}^{\xi^{m} g}\right|+ \left|\theta''_k(\eta)- \theta''_l(\eta) \right| \left|\mathcal{D}_{x_l}^{\xi^{m+1} g}\right|\right)\\
                            &= \sum_{l\neq k} 2\pi \Big( \left|\theta'_k(\eta) - \theta'_l(\eta)\right| \big( B_l(\eta) \Upsilon_{m,l}(\eta) + \Pi_{m,l} \big)+ \left|\theta''_k(\eta) - \theta''_l(\eta)\right| \big( B_l(\eta) \Upsilon_{m+1,l}(\eta) + \Pi_{m+1,l} \big) \Big) \\
&\quad + \epsilon_1 K I_m + \epsilon_1 \epsilon_2 \pi K I_{m+3} + \epsilon_2 \pi M(\eta) I_{m+2}.
    \end{align*}  
\end{proof}

\n {\bf Proof of Theorem \ref{theorem_eta_lambda}}.
    When \(\left(t,\eta,\gga\right) \in Z_k\), according to Lemma \ref{difference_partial_eta}, we have
    \begin{align*}
        &  - \det(E_1)  - i 2 \pi \theta'_k(\eta) \det(E_0) =
        \begin{vmatrix}
           - \partial_\eta \mathcal{D}_x^{g} - i 2 \pi \theta'_k(\eta) \mathcal{D}_x^{g} & \mathcal{D}_x^{\eta g} \\
           - \partial_\eta \mathcal{D}_x^{\eta g} - i 2 \pi \theta'_k(\eta) \mathcal{D}_x^{\eta g} & \mathcal{D}_x^{\eta^2 g} \\
        \end{vmatrix} \\
        &= 
        \begin{vmatrix}
            {i2\pi  \theta'_k(\eta)}\mathcal{D}_x^{g}+{i2\pi  \theta''_k(\eta)}\mathcal{D}_x^{\eta g} - i 2 \pi \theta'_k(\eta) \mathcal{D}_x^{g} - \mathrm{Res}_{0,k} & \mathcal{D}_x^{\eta g} \\
             {i2\pi  \theta'_k(\eta)}\mathcal{D}_x^{\eta g}+{i2\pi  \theta''_k(\eta)}\mathcal{D}_x^{\eta^2 g}- i 2 \pi \theta'_k(\eta) \mathcal{D}_x^{\eta g} - \mathrm{Res}_{1,k} & \mathcal{D}_x^{\eta^2 g} \\
        \end{vmatrix} \\ 
        &=- 
        \begin{vmatrix}
            \mathrm{Res}_{0,k} & \mathcal{D}_x^{\eta g} \\
            \mathrm{Res}_{1,k} & \mathcal{D}_x^{\eta^2 g}\\
        \end{vmatrix} = -\mathrm{Res}_{0,k}\mathcal{D}_x^{\eta^2 g}+\mathrm{Res}_{1,k}\mathcal{D}_x^{\eta g}.  
    \end{align*}
    Note that, except for $ \mathrm{Res}_{0,k}$ and $ \mathrm{Res}_{1,k}$ in the first column, the other entries in the first column do not contribute to the determinant.
    Following similar arguments, we can derive analogous results:
\begin{align*}
     - \det(E_2)  - i 2 \pi \theta''_k(t) \det(E_0) = \mathrm{Res}_{0,k}\mathcal{D}_x^{\eta g} - \mathrm{Res}_{1,k}\mathcal{D}_x^{g}.
\end{align*}
    Thus, we have
    \begin{align*}
        |\wh{t}(t,\eta,\gga) - \theta_k'(\eta)| &\leq \left| -\det(E_1) - i 2 \pi \theta'_k(\eta) \det(E_0)\right|\left|2 \pi \det(E_0)\right|^{-1} \leq \frac{\epsilon_0}{2\pi} \left( \Lambda_{0,k} \gga_2 + \Lambda_{1,k} \gga_1 \right), \\
        |\wh{r}(t,\eta,\gga) - \theta_k''(\eta)| &\leq \left| -\det(E_2) - i 2 \pi \theta''_k(\eta) \det(E_0) \right|\left|2 \pi \det(E_0) \right|^{-1} \leq \frac{\epsilon_0}{2\pi} \left( \Lambda_{0,k} \gga_1 + \Lambda_{1,k} \gga_0 \right).
    \end{align*}
This completes the proof of Theorem \ref{theorem_eta_lambda}. 
\hfill $\Box$

\section{Error functions}\label{section error functions}
In this part, with $g(\xi)$ chosen as the Gaussian function \eqref{Gaussian_function}, we first present the corresponding window functions $\mathcal{C}\big(\xi^n g_\sigma\big)(t,\gga)$ for $n = 0, 1, 2$, respectively. A discussion of the quantities $\mathit{\Upsilon}_{n,k}(\eta)$ for the same values of $n$, as defined in \eqref{Upsilonmk}, will then follow.

Since \(g_\sigma(\xi)\) (given in \eqref{Gaussian_function}) is a positive even function, then
\[ 
I_m =\int_{-\infty}^{+\infty} \left|\xi^m g_\sigma(\xi)\right| \, d\xi= 2\int_{0}^{+\infty} \xi^m g_\sigma(\xi) \, d\xi= \frac{\sigma^l}{\sqrt{2\pi}} \cdot 2^{\frac{l+1}{2}} \Gamma\left( \frac{l+1}{2} \right),
\]
where $\Gamma(z)$ is Gamma function.
Then, we  have
\begin{align}\label{Im}
I_0 = 1; \quad
I_1 =  \frac{2\sigma}{\sqrt{2\pi}}; \quad
I_2 = {\sigma^2}; \quad
I_3 =  \frac {4 \sigma^3}{\sqrt{2\pi}}; \quad
I_4 = {3\sigma^4}; \quad
I_5=\frac{16\sigma^5}{\sqrt{2\pi}}.
\end{align}

Besides, when the  window function  is  \(g_\sigma(\xi)\), according to  \cite{chui2021time,li2022chirplet}, we have
\begin{align}
 &\mathcal{C}(g_\sigma)(\eta, \gga) =\frac 1{\sqrt{1+i2\pi \gs^2 \gl}} e^{-\frac{2\pi^2\gs^2 \eta^2}{1+i2\pi  \gs^2\gl}},\label{Cg}\\      
 &\mathcal{C}(\xi g_\sigma)(\eta, \gga)= \frac{-i2\pi  \sigma^2 \eta}{(1+i2\pi \sigma^2 \gga)^{\frac{3}{2}}} e^{-\frac{2\pi^2\sigma^2 \eta^2}{1+i2\pi  \sigma^2\gga}},\label{Cxig} \\
 &\mathcal{C}(\xi^2 g_\sigma)(\eta, \gga)= \sigma^2\left(\frac{1}{(1+i2\pi \sigma^2 \gga)^{\frac{3}{2}}} - \frac{(2\pi \sigma \eta)^2}{(1+i2\pi \sigma^2 \gga)^{\frac{5}{2}}}\right) e^{-\frac{2\pi^2\sigma^2 \eta^2}{1+i2\pi  \sigma^2\gga}} \label{Cxi2g}.
\end{align}
Note that 
\begin{equation*}
|\mathcal{C}(g_\sigma)(\eta, \gga)|= \frac{1}{(1 + 4\pi^2 \sigma^4 \gga^2)^{1/4}} e^{-\frac{2\pi^2 \sigma^2 \eta^2}{1 + 4\pi^2 \sigma^4 \gga^2}}.
\label{eq:42}
\end{equation*}
First, we can   obtain that
\begin{equation*}
|\mathcal{C}(g_\sigma)(\eta, \gga)| 
\leq 
\begin{cases} 
\displaystyle \frac{(1 + 4\pi^2 \sigma^4 \gga^2)^{3/4}}{2\pi^2 \sigma^2 \eta^2}, & \text{if }   2\pi^2 \sigma^2 \eta^2 \geq  1 + 4\pi^2 \sigma^4 \gga^2, \\[6pt]  
\displaystyle \frac{1}{(1 + 4\pi^2 \sigma^4 \gga^2)^{1/4}}, & \text{if } 2\pi^2 \sigma^2 \eta^2 <  1 + 4\pi^2 \sigma^4 \gga^2,.
\end{cases} \label{controlfunction_g}
\end{equation*}

Indeed, if \(x>1\), then \(e^{-x}< \frac{1}{x}\). Thus, if 
\(2\pi^2 \sigma^2 \eta^2 \geq 1 + 4\pi^2 \sigma^4 \gga^2\), then
\[
|\mathcal{C}(g_\sigma)(\eta, \gga)| \leq \frac{1}{(1 + 4\pi^2 \sigma^4 \gga^2)^{1/4}} \frac{1 + 4\pi^2 \sigma^4 \gga^2}{2\pi^2 \sigma^2 \eta^2} = \frac{(1 + 4\pi^2 \sigma^4\gga^2)^{3/4}}{2\pi^2 \sigma^2 \eta^2};
\]
Otherwise, for \(2\pi^2 \sigma^2 \eta^2 < 1 + 4\pi^2 \sigma^4 \gga^2\), we have
\[
|\mathcal{C}(g_\sigma)(\eta, \gga)| \leq \frac{1}{(1 + 4\pi^2 \gga^2)^{1/4}}.
\]
From the \eqref{Cxig} and \eqref{Cxi2g}, then we have 
\begin{equation*}
|\mathcal{C}(\xi g_\sigma)(\eta, \gga)| 
\leq 
\begin{cases} 
\displaystyle \frac{\sqrt{2}\sigma(1 + 4\pi^2 \sigma^4 \gga^2)^{1/4}}{\left(2\pi^2 \sigma^2 \eta^2\right)^{\frac{1}{2}}}, & \text{if }   2\pi^2 \sigma^2 \eta^2 \geq  1 + 4\pi^2 \sigma^4 \gga^2, \\[6pt]  
\displaystyle \frac{\sqrt{2}\sigma (2\pi^2 \sigma^2 \eta^2)^{\frac{1}{2}}}{(1 + 4\pi^2 \sigma^4 \gga^2)^{3/4}}, & \text{if } 2\pi^2 \sigma^2 \eta^2 <  1 + 4\pi^2 \sigma^4 \gga^2,.
\end{cases} \label{controlfunction_tg}
\end{equation*}
\begin{equation*}
|\mathcal{C}(\xi^2 g_\sigma)(\eta, \gga)| 
\leq 
\begin{cases} 
\displaystyle \frac{(1 + 4\pi^2 \sigma^4 \gga^2)^{1/4}}{2\pi^2 \sigma^2 \eta^2}+\frac{2}{\left(1+4\pi^2 \sigma^4\gga^2\right)^{\frac{1}{4}}}, & \text{if }   2\pi^2 \sigma^2 \eta^2 \geq  1 + 4\pi^2 \sigma^4 \gga^2, \\[6pt]  
\displaystyle \frac{1}{(1 + 4\pi^2 \sigma^4 \gga^2)^{3/4}}+\frac{4 \pi^2 \sigma^2 \eta^2}{\left(1+4\pi^2 \sigma^4 \gga^2\right)^{\frac{5}{4}}}, & \text{if } 2\pi^2 \sigma^2 \eta^2 <  1 + 4\pi^2 \sigma^4 \gga^2,.
\end{cases} \label{controlfunction_t^2g}
\end{equation*}
Thus, for any \(l=1,2,\cdots, K\),  the control functions \eqref {Upsilonmk} can be given as follows
\begin{align*}
    \Upsilon_{0,l}(\eta) &= \max\left( \frac{1}{2^{1/4}(\pi \sigma \Delta_1)^{1/2}}, \frac{1}{(1+4\pi^2 \sigma^4  \Delta_2^2)^{1/4}} \right);  \\
    \Upsilon_{1,l}(\eta) &= \max\left( \frac{2^{1/4}\sigma^{1/2}}{(\pi \Delta_1)^{1/2}}, \frac{2^{1/2}\sigma}{(1+4\pi^2 \sigma^4  \Delta_2^2)^{1/4}} \right);  \\
    \Upsilon_{2,l}(\eta) &= \max\Bigg( 
        \frac{1}{2^{3/4}(\pi \sigma \Delta_1)^{3/2}} + \frac{2}{(1+4\pi^2 \sigma^4  \Delta_2^2)^{1/4}}, \frac{1}{(1+4\pi^2 \sigma^4  \Delta_2^2)^{3/4}} + \frac{2}{(1+4\pi^2 \sigma^4  \Delta_2^2)^{1/4}} 
    \Bigg). 
\end{align*}

\end{appendices}





\bibliographystyle{elsarticle-num-names} 

\bibliography{IEEEabrv,ref_STFCT20251003}

\begin{thebibliography}{49}
\expandafter\ifx\csname natexlab\endcsname\relax\def\natexlab#1{#1}\fi
\providecommand{\url}[1]{\texttt{#1}}
\providecommand{\href}[2]{#2}
\providecommand{\path}[1]{#1}
\providecommand{\DOIprefix}{doi:}
\providecommand{\ArXivprefix}{arXiv:}
\providecommand{\URLprefix}{URL: }
\providecommand{\Pubmedprefix}{pmid:}
\providecommand{\doi}[1]{\href{http://dx.doi.org/#1}{\path{#1}}}
\providecommand{\Pubmed}[1]{\href{pmid:#1}{\path{#1}}}
\providecommand{\bibinfo}[2]{#2}
\ifx\xfnm\relax \def\xfnm[#1]{\unskip,\space#1}\fi
\bibitem[{Stankovi{\'c} et~al.(2013)Stankovi{\'c}, Dakovi{\'c}, and Thayaparan}]{stankovic2013time}
\bibinfo{author}{L.~Stankovi{\'c}}, \bibinfo{author}{M.~Dakovi{\'c}}, \bibinfo{author}{T.~Thayaparan}, \bibinfo{title}{Time-frequency Signal Analysis with Applications}, \bibinfo{publisher}{Artech House}, \bibinfo{year}{2013}.
\bibitem[{Daubechies(1992)}]{daubechies1992ten}
\bibinfo{author}{I.~Daubechies}, \bibinfo{title}{Ten Lectures on Wavelets}, \bibinfo{publisher}{SIAM}, \bibinfo{year}{1992}.
\bibitem[{Mallat(1999)}]{mallat1999wavelet}
\bibinfo{author}{S.~Mallat}, \bibinfo{title}{A Wavelet Tour of Signal Processing}, \bibinfo{publisher}{Elsevier}, \bibinfo{year}{1999}.
\bibitem[{Cohen(1995)}]{cohen1995time}
\bibinfo{author}{L.~Cohen}, \bibinfo{title}{Time-frequency Analysis}, volume \bibinfo{volume}{778}, \bibinfo{publisher}{Prentice Hall PTR New Jersey}, \bibinfo{year}{1995}.
\bibitem[{Hlawatsch and Boudreaux-Bartels(1992)}]{hlawatsch1992linear}
\bibinfo{author}{F.~Hlawatsch}, \bibinfo{author}{G.~F. Boudreaux-Bartels},
\newblock \bibinfo{title}{Linear and quadratic time-frequency signal representations},
\newblock \bibinfo{journal}{IEEE Signal Processing Magazine} \bibinfo{volume}{9} (\bibinfo{year}{1992}) \bibinfo{pages}{21--67}.
\bibitem[{Auger and Flandrin(1995)}]{auger1995improving}
\bibinfo{author}{F.~Auger}, \bibinfo{author}{P.~Flandrin},
\newblock \bibinfo{title}{Improving the readability of time-frequency and time-scale representations by the reassignment method},
\newblock \bibinfo{journal}{IEEE Transactions on Signal Processing} \bibinfo{volume}{43} (\bibinfo{year}{1995}) \bibinfo{pages}{1068--1089}.
\bibitem[{Daubechies and Maes(1996)}]{daubechies1996nonlinear}
\bibinfo{author}{I.~Daubechies}, \bibinfo{author}{S.~Maes},
\newblock \bibinfo{title}{A nonlinear squeezing of the continuous wavelet transform},
\newblock \bibinfo{journal}{Wavelets in Medicine and Biology}  (\bibinfo{year}{1996}) \bibinfo{pages}{527--546}.
\bibitem[{Daubechies et~al.(2011)Daubechies, Lu, and Wu}]{daubechies2011synchrosqueezed}
\bibinfo{author}{I.~Daubechies}, \bibinfo{author}{J.~Lu}, \bibinfo{author}{H.-T. Wu},
\newblock \bibinfo{title}{Synchrosqueezed wavelet transforms: {A}n empirical mode decomposition-like tool},
\newblock \bibinfo{journal}{Applied and Computational Harmonic Analysis} \bibinfo{volume}{30} (\bibinfo{year}{2011}) \bibinfo{pages}{243--261}.
\bibitem[{Oberlin et~al.(2015)Oberlin, Meignen, and Perrier}]{oberlin2015second}
\bibinfo{author}{T.~Oberlin}, \bibinfo{author}{S.~Meignen}, \bibinfo{author}{V.~Perrier},
\newblock \bibinfo{title}{Second-order synchrosqueezing transform or invertible reassignment? {T}owards ideal time-frequency representations},
\newblock \bibinfo{journal}{IEEE Transactions on Signal Processing} \bibinfo{volume}{63} (\bibinfo{year}{2015}) \bibinfo{pages}{1335--1344}.
\bibitem[{Oberlin and Meignen(2017)}]{oberlin2017second}
\bibinfo{author}{T.~Oberlin}, \bibinfo{author}{S.~Meignen},
\newblock \bibinfo{title}{The second-order wavelet synchrosqueezing transform},
\newblock in: \bibinfo{booktitle}{2017 IEEE International Conference on Acoustics, Speech and Signal Processing (ICASSP)}, \bibinfo{organization}{IEEE}, \bibinfo{year}{2017}, pp. \bibinfo{pages}{3994--3998}.
\bibitem[{Behera et~al.(2018)Behera, Meignen, and Oberlin}]{behera2018theoretical}
\bibinfo{author}{R.~Behera}, \bibinfo{author}{S.~Meignen}, \bibinfo{author}{T.~Oberlin},
\newblock \bibinfo{title}{Theoretical analysis of the second-order synchrosqueezing transform},
\newblock \bibinfo{journal}{Applied and Computational Harmonic Analysis} \bibinfo{volume}{45} (\bibinfo{year}{2018}) \bibinfo{pages}{379--404}.
\bibitem[{Pham and Meignen(2017)}]{pham2017high}
\bibinfo{author}{D.-H. Pham}, \bibinfo{author}{S.~Meignen},
\newblock \bibinfo{title}{High-order synchrosqueezing transform for multicomponent signals analysis—{W}ith an application to gravitational-wave signal},
\newblock \bibinfo{journal}{IEEE Transactions on Signal Processing} \bibinfo{volume}{65} (\bibinfo{year}{2017}) \bibinfo{pages}{3168--3178}.
\bibitem[{Yu et~al.(2017)Yu, Yu, and Xu}]{yu2017synchroextracting}
\bibinfo{author}{G.~Yu}, \bibinfo{author}{M.~Yu}, \bibinfo{author}{C.~Xu},
\newblock \bibinfo{title}{Synchroextracting transform},
\newblock \bibinfo{journal}{IEEE Transactions on Industrial Electronics} \bibinfo{volume}{64} (\bibinfo{year}{2017}) \bibinfo{pages}{8042--8054}.
\bibitem[{Yu et~al.(2018)Yu, Wang, and Zhao}]{yu2018multisynchrosqueezing}
\bibinfo{author}{G.~Yu}, \bibinfo{author}{Z.~Wang}, \bibinfo{author}{P.~Zhao},
\newblock \bibinfo{title}{Multisynchrosqueezing transform},
\newblock \bibinfo{journal}{IEEE Transactions on Industrial Electronics} \bibinfo{volume}{66} (\bibinfo{year}{2018}) \bibinfo{pages}{5441--5455}.
\bibitem[{Sheu et~al.(2017)Sheu, Hsu, Chou, and Wu}]{sheu2017entropy}
\bibinfo{author}{Y.-L. Sheu}, \bibinfo{author}{L.-Y. Hsu}, \bibinfo{author}{P.-T. Chou}, \bibinfo{author}{H.-T. Wu},
\newblock \bibinfo{title}{Entropy-based time-varying window width selection for nonlinear-type time--frequency analysis},
\newblock \bibinfo{journal}{International Journal of Data Science and Analytics} \bibinfo{volume}{3} (\bibinfo{year}{2017}) \bibinfo{pages}{231--245}.
\bibitem[{Berrian and Saito(2017)}]{berrian2017adaptive}
\bibinfo{author}{A.~Berrian}, \bibinfo{author}{N.~Saito},
\newblock \bibinfo{title}{Adaptive synchrosqueezing based on a quilted short-time {F}ourier transform},
\newblock in: \bibinfo{booktitle}{Wavelets and Sparsity XVII}, volume \bibinfo{volume}{10394}, \bibinfo{organization}{SPIE}, \bibinfo{year}{2017}, pp. \bibinfo{pages}{413--432}.
\bibitem[{Li et~al.(2020{\natexlab{a}})Li, Cai, and Jiang}]{li2020adaptive}
\bibinfo{author}{L.~Li}, \bibinfo{author}{H.~Cai}, \bibinfo{author}{Q.~Jiang},
\newblock \bibinfo{title}{Adaptive synchrosqueezing transform with a time-varying parameter for non-stationary signal separation},
\newblock \bibinfo{journal}{Applied and Computational Harmonic Analysis} \bibinfo{volume}{49} (\bibinfo{year}{2020}{\natexlab{a}}) \bibinfo{pages}{1075--1106}.
\bibitem[{Li et~al.(2020{\natexlab{b}})Li, Cai, Han, Jiang, and Ji}]{li2020adaptivestft}
\bibinfo{author}{L.~Li}, \bibinfo{author}{H.~Cai}, \bibinfo{author}{H.~Han}, \bibinfo{author}{Q.~Jiang}, \bibinfo{author}{H.~Ji},
\newblock \bibinfo{title}{Adaptive short-time {F}ourier transform and synchrosqueezing transform for non-stationary signal separation},
\newblock \bibinfo{journal}{Signal Processing} \bibinfo{volume}{166} (\bibinfo{year}{2020}{\natexlab{b}}) \bibinfo{pages}{107231}.
\bibitem[{Chui et~al.(2021)Chui, Jiang, Li, and Lu}]{chui2021time}
\bibinfo{author}{C.~K. Chui}, \bibinfo{author}{Q.~Jiang}, \bibinfo{author}{L.~Li}, \bibinfo{author}{J.~Lu},
\newblock \bibinfo{title}{Time-scale-chirp\_rate operator for recovery of non-stationary signal components with crossover instantaneous frequency curves},
\newblock \bibinfo{journal}{Applied and Computational Harmonic Analysis} \bibinfo{volume}{54} (\bibinfo{year}{2021}) \bibinfo{pages}{323--344}.
\bibitem[{Li et~al.(2022)Li, Han, Jiang, and Chui}]{li2022chirplet}
\bibinfo{author}{L.~Li}, \bibinfo{author}{N.~Han}, \bibinfo{author}{Q.~Jiang}, \bibinfo{author}{C.~K. Chui},
\newblock \bibinfo{title}{A chirplet transform-based mode retrieval method for multicomponent signals with crossover instantaneous frequencies},
\newblock \bibinfo{journal}{Digital Signal Processing} \bibinfo{volume}{120} (\bibinfo{year}{2022}) \bibinfo{pages}{103262}.
\bibitem[{Chui et~al.(2023)Chui, Jiang, Li, and Lu}]{chui2023analysis}
\bibinfo{author}{C.~K. Chui}, \bibinfo{author}{Q.~Jiang}, \bibinfo{author}{L.~Li}, \bibinfo{author}{J.~Lu},
\newblock \bibinfo{title}{Analysis of a direct separation method based on adaptive chirplet transform for signals with crossover instantaneous frequencies},
\newblock \bibinfo{journal}{Applied and Computational Harmonic Analysis} \bibinfo{volume}{62} (\bibinfo{year}{2023}) \bibinfo{pages}{24--40}.
\bibitem[{Zhu et~al.(2020)Zhu, Yang, Zhang, Gao, and Liu}]{zhu2020frequency}
\bibinfo{author}{X.~Zhu}, \bibinfo{author}{H.~Yang}, \bibinfo{author}{Z.~Zhang}, \bibinfo{author}{J.~Gao}, \bibinfo{author}{N.~Liu},
\newblock \bibinfo{title}{Frequency-chirprate reassignment},
\newblock \bibinfo{journal}{Digital Signal Processing} \bibinfo{volume}{104} (\bibinfo{year}{2020}) \bibinfo{pages}{102783}.
\bibitem[{Chen and Wu(2023)}]{chen2023disentangling}
\bibinfo{author}{Z.~Chen}, \bibinfo{author}{H.-T. Wu},
\newblock \bibinfo{title}{Disentangling modes with crossover instantaneous frequencies by synchrosqueezed chirplet transforms, from theory to application},
\newblock \bibinfo{journal}{Applied and Computational Harmonic Analysis} \bibinfo{volume}{62} (\bibinfo{year}{2023}) \bibinfo{pages}{84--122}.
\bibitem[{Chen et~al.(2024{\natexlab{a}})Chen, Xie, Cui, and Su}]{chen2024multiple}
\bibinfo{author}{T.~Chen}, \bibinfo{author}{L.~Xie}, \bibinfo{author}{M.~Cui}, \bibinfo{author}{H.~Su},
\newblock \bibinfo{title}{Multiple enhanced synchrosqueezing in the time--frequency--chirprate space},
\newblock \bibinfo{journal}{Signal Processing} \bibinfo{volume}{222} (\bibinfo{year}{2024}{\natexlab{a}}) \bibinfo{pages}{109541}.
\bibitem[{Chen et~al.(2024{\natexlab{b}})Chen, Zhang, and Yang}]{chen2024composite}
\bibinfo{author}{X.~Chen}, \bibinfo{author}{Z.~Zhang}, \bibinfo{author}{W.~Yang},
\newblock \bibinfo{title}{Composite signal detection using multisynchrosqueezing wavelet transform},
\newblock \bibinfo{journal}{Digital Signal Processing} \bibinfo{volume}{149} (\bibinfo{year}{2024}{\natexlab{b}}) \bibinfo{pages}{104482}.
\bibitem[{Jiang et~al.(2025)Jiang, Li, Chen, and Li}]{jiang2025synchrosqueezed}
\bibinfo{author}{Q.~Jiang}, \bibinfo{author}{S.~Li}, \bibinfo{author}{J.~Chen}, \bibinfo{author}{L.~Li},
\newblock \bibinfo{title}{Synchrosqueezed x-ray wavelet--chirplet transform for accurate chirp rate estimation and retrieval of modes from multicomponent signals with crossover instantaneous frequencies},
\newblock \bibinfo{journal}{Mechanical Systems and Signal Processing} \bibinfo{volume}{238} (\bibinfo{year}{2025}) \bibinfo{pages}{113193}.
\bibitem[{He et~al.(2019)He, Cao, Wang, and Chen}]{he2019time}
\bibinfo{author}{D.~He}, \bibinfo{author}{H.~Cao}, \bibinfo{author}{S.~Wang}, \bibinfo{author}{X.~Chen},
\newblock \bibinfo{title}{Time-reassigned synchrosqueezing transform: {T}he algorithm and its applications in mechanical signal processing},
\newblock \bibinfo{journal}{Mechanical Systems and Signal Processing} \bibinfo{volume}{117} (\bibinfo{year}{2019}) \bibinfo{pages}{255--279}.
\bibitem[{Yu et~al.(2020)Yu, Lin, Wang, and Li}]{yu2020time}
\bibinfo{author}{G.~Yu}, \bibinfo{author}{T.~Lin}, \bibinfo{author}{Z.~Wang}, \bibinfo{author}{Y.~Li},
\newblock \bibinfo{title}{Time-reassigned multisynchrosqueezing transform for bearing fault diagnosis of rotating machinery},
\newblock \bibinfo{journal}{IEEE Transactions on Industrial Electronics} \bibinfo{volume}{68} (\bibinfo{year}{2020}) \bibinfo{pages}{1486--1496}.
\bibitem[{Li et~al.(2022)Li, Zhang, Auger, and Zhu}]{li2022theoretical}
\bibinfo{author}{W.~Li}, \bibinfo{author}{Z.~Zhang}, \bibinfo{author}{F.~Auger}, \bibinfo{author}{X.~Zhu},
\newblock \bibinfo{title}{Theoretical analysis of time-reassigned synchrosqueezing wavelet transform},
\newblock \bibinfo{journal}{Applied Mathematics Letters} \bibinfo{volume}{132} (\bibinfo{year}{2022}) \bibinfo{pages}{108141}.
\bibitem[{Fourer and Auger(2019)}]{fourer2019second}
\bibinfo{author}{D.~Fourer}, \bibinfo{author}{F.~Auger},
\newblock \bibinfo{title}{Second-order time-reassigned synchrosqueezing transform: Application to {D}raupner wave analysis},
\newblock in: \bibinfo{booktitle}{2019 27th European Signal Processing Conference (EUSIPCO)}, \bibinfo{organization}{IEEE}, \bibinfo{year}{2019}, pp. \bibinfo{pages}{1--5}.
\bibitem[{He et~al.(2020)He, Tu, Bao, Hu, and Li}]{he2020gaussian}
\bibinfo{author}{Z.~He}, \bibinfo{author}{X.~Tu}, \bibinfo{author}{W.~Bao}, \bibinfo{author}{Y.~Hu}, \bibinfo{author}{F.~Li},
\newblock \bibinfo{title}{Gaussian-modulated linear group delay model: Application to second-order time-reassigned synchrosqueezing transform},
\newblock \bibinfo{journal}{Signal Processing} \bibinfo{volume}{167} (\bibinfo{year}{2020}) \bibinfo{pages}{107275}.
\bibitem[{Yu(2019)}]{yu2019concentrated}
\bibinfo{author}{G.~Yu},
\newblock \bibinfo{title}{A concentrated time--frequency analysis tool for bearing fault diagnosis},
\newblock \bibinfo{journal}{IEEE Transactions on Instrumentation and Measurement} \bibinfo{volume}{69} (\bibinfo{year}{2019}) \bibinfo{pages}{371--381}.
\bibitem[{Li et~al.(2023)Li, Auger, Zhang, and Zhu}]{li2023newton}
\bibinfo{author}{W.~Li}, \bibinfo{author}{F.~Auger}, \bibinfo{author}{Z.~Zhang}, \bibinfo{author}{X.~Zhu},
\newblock \bibinfo{title}{Newton time-extracting wavelet transform: An effective tool for characterizing frequency-varying signals with weakly-separated components and theoretical analysis},
\newblock \bibinfo{journal}{Signal Processing} \bibinfo{volume}{209} (\bibinfo{year}{2023}) \bibinfo{pages}{109017}.
\bibitem[{Yu and Lin(2021)}]{yu2021second}
\bibinfo{author}{G.~Yu}, \bibinfo{author}{T.~R. Lin},
\newblock \bibinfo{title}{Second-order transient-extracting transform for the analysis of impulsive-like signals},
\newblock \bibinfo{journal}{Mechanical Systems and Signal Processing} \bibinfo{volume}{147} (\bibinfo{year}{2021}) \bibinfo{pages}{107069}.
\bibitem[{He et~al.(2019)He, Tu, Bao, Hu, and Li}]{he2019second}
\bibinfo{author}{Z.~He}, \bibinfo{author}{X.~Tu}, \bibinfo{author}{W.~Bao}, \bibinfo{author}{Y.~Hu}, \bibinfo{author}{F.~Li},
\newblock \bibinfo{title}{Second-order transient-extracting transform with application to time-frequency filtering},
\newblock \bibinfo{journal}{IEEE Transactions on Instrumentation and Measurement} \bibinfo{volume}{69} (\bibinfo{year}{2019}) \bibinfo{pages}{5428--5437}.
\bibitem[{Bao et~al.(2022)Bao, Li, and Chen}]{bao2022generalized}
\bibinfo{author}{W.~Bao}, \bibinfo{author}{F.~Li}, \bibinfo{author}{Z.~Chen},
\newblock \bibinfo{title}{Generalized transient-squeezing transform: Algorithm and applications},
\newblock \bibinfo{journal}{IEEE Transactions on Instrumentation and Measurement} \bibinfo{volume}{71} (\bibinfo{year}{2022}) \bibinfo{pages}{1--10}.
\bibitem[{Bao et~al.(2021)Bao, Hu, and Li}]{bao2021generalized}
\bibinfo{author}{W.~Bao}, \bibinfo{author}{Y.~Hu}, \bibinfo{author}{F.~Li},
\newblock \bibinfo{title}{Generalized transient-extracting transform and its accurate signal reconstruction},
\newblock \bibinfo{journal}{IEEE Transactions on Industrial Electronics} \bibinfo{volume}{69} (\bibinfo{year}{2021}) \bibinfo{pages}{10552--10563}.
\bibitem[{Yang et~al.(2014)Yang, Peng, Zhang, and Meng}]{yang2014frequency}
\bibinfo{author}{Y.~Yang}, \bibinfo{author}{Z.~Peng}, \bibinfo{author}{W.~Zhang}, \bibinfo{author}{G.~Meng},
\newblock \bibinfo{title}{Frequency-varying group delay estimation using frequency domain polynomial chirplet transform},
\newblock \bibinfo{journal}{Mechanical Systems and Signal Processing} \bibinfo{volume}{46} (\bibinfo{year}{2014}) \bibinfo{pages}{146--162}.
\bibitem[{Chui and Mhaskar(2016)}]{chui2016signal}
\bibinfo{author}{C.~K. Chui}, \bibinfo{author}{H.~Mhaskar},
\newblock \bibinfo{title}{Signal decomposition and analysis via extraction of frequencies},
\newblock \bibinfo{journal}{Applied and Computational Harmonic Analysis} \bibinfo{volume}{40} (\bibinfo{year}{2016}) \bibinfo{pages}{97--136}.
\bibitem[{Li et~al.(2022)Li, Chui, and Jiang}]{li2022direct}
\bibinfo{author}{L.~Li}, \bibinfo{author}{C.~K. Chui}, \bibinfo{author}{Q.~Jiang},
\newblock \bibinfo{title}{Direct signal separation via extraction of local frequencies with adaptive time-varying parameters},
\newblock \bibinfo{journal}{IEEE Transactions on Signal Processing} \bibinfo{volume}{70} (\bibinfo{year}{2022}) \bibinfo{pages}{2321--2333}.
\bibitem[{Chui et~al.(2021{\natexlab{a}})Chui, Jiang, Li, and Lu}]{chui2021analysis}
\bibinfo{author}{C.~K. Chui}, \bibinfo{author}{Q.~Jiang}, \bibinfo{author}{L.~Li}, \bibinfo{author}{J.~Lu},
\newblock \bibinfo{title}{Analysis of an adaptive short-time {F}ourier transform-based multicomponent signal separation method derived from linear chirp local approximation},
\newblock \bibinfo{journal}{Journal of Computational and Applied Mathematics} \bibinfo{volume}{396} (\bibinfo{year}{2021}{\natexlab{a}}) \bibinfo{pages}{113607}.
\bibitem[{Chui et~al.(2021{\natexlab{b}})Chui, Jiang, Li, and Lu}]{chui2021signal}
\bibinfo{author}{C.~K. Chui}, \bibinfo{author}{Q.~Jiang}, \bibinfo{author}{L.~Li}, \bibinfo{author}{J.~Lu},
\newblock \bibinfo{title}{Signal separation based on adaptive continuous wavelet-like transform and analysis},
\newblock \bibinfo{journal}{Applied and Computational Harmonic Analysis} \bibinfo{volume}{53} (\bibinfo{year}{2021}{\natexlab{b}}) \bibinfo{pages}{151--179}.
\bibitem[{Mann and Haykin(1995)}]{mann1995chirplet}
\bibinfo{author}{S.~Mann}, \bibinfo{author}{S.~Haykin},
\newblock \bibinfo{title}{The chirplet transform: Physical considerations},
\newblock \bibinfo{journal}{IEEE Transactions on Signal Processing} \bibinfo{volume}{43} (\bibinfo{year}{1995}) \bibinfo{pages}{2745--2761}.
\bibitem[{Huang et~al.(2024)Huang, Zhao, and Cui}]{Huang2024horizontal}
\bibinfo{author}{X.~Huang}, \bibinfo{author}{D.~Zhao}, \bibinfo{author}{L.~Cui},
\newblock \bibinfo{title}{Horizontal rearrangement frequency domain chirplet transform: algorithm and applications},
\newblock \bibinfo{journal}{Measurement Science and Technology} \bibinfo{volume}{35} (\bibinfo{year}{2024}) \bibinfo{pages}{116125}.
\bibitem[{Zhao et~al.(2025)Zhao, Du, and Wang}]{zhao2025horizontal}
\bibinfo{author}{D.~Zhao}, \bibinfo{author}{S.~Du}, \bibinfo{author}{T.~Wang},
\newblock \bibinfo{title}{Horizontal local-squeezing frequency domain chirplet transform},
\newblock \bibinfo{journal}{IEEE Transactions on Instrumentation and Measurement}  (\bibinfo{year}{2025}).
\bibitem[{Meignen and Singh(2022)}]{meignen2022analysis}
\bibinfo{author}{S.~Meignen}, \bibinfo{author}{N.~Singh},
\newblock \bibinfo{title}{Analysis of reassignment operators used in synchrosqueezing transforms: With an application to instantaneous frequency estimation},
\newblock \bibinfo{journal}{IEEE Transactions on Signal Processing} \bibinfo{volume}{70} (\bibinfo{year}{2022}) \bibinfo{pages}{216--227}.
\bibitem[{Zhang et~al.(2022)Zhang, Liu, Tan, Yang, and Zhang}]{zhang2022two}
\bibinfo{author}{R.~Zhang}, \bibinfo{author}{X.~Liu}, \bibinfo{author}{Y.~Tan}, \bibinfo{author}{X.~Yang}, \bibinfo{author}{L.~Zhang},
\newblock \bibinfo{title}{Two dimensional local maximum synchroextracting chirplet transfrom and application of characterizing micro-{D}oppler signals},
\newblock \bibinfo{journal}{Signal Processing} \bibinfo{volume}{198} (\bibinfo{year}{2022}) \bibinfo{pages}{108598}.
\bibitem[{Golub and Van~Loan(2013)}]{golub2013matrix}
\bibinfo{author}{G.~H. Golub}, \bibinfo{author}{C.~F. Van~Loan}, \bibinfo{title}{Matrix Computations}, \bibinfo{publisher}{JHU press}, \bibinfo{year}{2013}.
\bibitem[{Wellard et~al.(2015)Wellard, Erbe, Fouda, and Blewitt}]{wellard2015vocalisations}
\bibinfo{author}{R.~Wellard}, \bibinfo{author}{C.~Erbe}, \bibinfo{author}{L.~Fouda}, \bibinfo{author}{M.~Blewitt},
\newblock \bibinfo{title}{Vocalisations of killer whales ({O}rcinus orca) in the {B}remer {C}anyon, {W}estern {A}ustralia},
\newblock \bibinfo{journal}{PLOS ONE} \bibinfo{volume}{10} (\bibinfo{year}{2015}) \bibinfo{pages}{e0136535}.

\end{thebibliography}

\end{document}